\numberwithin{equation}{section}
\theoremstyle{definition}
\newtheorem{thm}{Theorem}[section]
\newtheorem{cor}[thm]{Corollary}
\newtheorem{lem}[thm]{Lemma}
\newtheorem{prop}[thm]{Proposition}
\newtheorem{assumption}[thm]{Assumption}
\newtheorem{rem}[thm]{Remark}
\theoremstyle{remark}
\newcommand{\B}{\mathscr{B}}
\newcommand{\HH}{{\rm H}}
\newcommand{\Lp}{\textsf{L}\!}
\newcommand{\eps}{\varepsilon}
\newcommand{\R}{\mathbb{R}}
\newcommand{\Z}{\mathbb{Z}}
\newcommand{\rt}{{\rm curl}\, }
\newcommand{\C}{\mathbb{C}}
\newcommand{\G}{\mathcal{G}}
\newcommand{\K}{\mathscr{K}}
\newcommand{\V}{\mathcal{V}}
\newcommand{\po}{{\rm P_{0}}}
\newcommand{\g}{\mathfrak{g}}
\newcommand{\LL}{\mathcal{\big\langle}}
\newcommand{\RR}{\mathcal{\big\rangle}}
\newcommand{\id}{\mathds{1}}
\newcommand{\w}{{\rm w}}
\newcommand{\0}{{\circ}}
\newcommand{\x}{\langle x \rangle }
\newcommand{\m}{{\scriptscriptstyle-}}
\newcommand{\pp}{{\scriptscriptstyle+}}
\newcommand{\ppm}{{\scriptscriptstyle\pm}}
\newcommand{\alp}{{\alpha'}}
\DeclareMathOperator{\dist}{dist}
\DeclareMathOperator{\spec}{spec}
\title{Resonances at the Threshold for Pauli Operators in Dimension Two}
\author {Jonathan Breuer}
\address{Jonathan Breuer, Einstein Institute of Mathematics, The Hebrew University of Jerusalem,
Jerusalem, 91904, Israel  }
\email {jbreuer@math.huji.ac.il}
\author {Hynek Kova\v{r}\'{\i}k}
\address {Hynek Kova\v{r}\'{\i}k, DICATAM, Sezione di Matematica, Universit\`a degli studi di Brescia,Via Branze 38 - 25123, Brescia, Italy}
\email {hynek.kovarik@unibs.it}
\thanks{\noindent\copyright 2023 by the authors. Faithful reproduction of this article, in its entirety, by any means is permitted for
non-commercial purposes. }
\begin{document}

\begin{abstract}
It is well-known that, due to the interaction between the spin and the magnetic field, the two-dimensional Pauli operator has an eigenvalue
$0$ at the threshold of its essential spectrum.
We show that when perturbed by an effectively positive perturbation, $V$, coupled with a small parameter $\eps$, these eigenvalues become
resonances.
Moreover, we derive explicit expressions for the leading terms of their imaginary parts in the limit $\eps\searrow 0$. These show, in
particular, that
the dependence of the imaginary part of the resonances on $\eps$ is determined by the flux of the magnetic field. The
cases of non-degenerate and
degenerate zero eigenvalue are treated separately.  We also discuss applications of our main results to particles with anomalous magnetic
moments.
\end{abstract}

\maketitle

\noindent {\bf Keywords:} resonances, spin, threshold eigenvalue, anomalous magnetic moment.

\smallskip

\noindent {\bf MSC 2020:}  35Q40, 35P05, 81Q10
\section{\bf Introduction and outline of the main results}
\subsection{Basic set up}
In this paper we consider Hamiltonians associated to fermions with spin $\frac 12$ interacting with a magnetic
field perpendicular to the plane in which the particle is confined. Such a magnetic field can be identified with a function
$B:\R^2\to\R$. If $A:\R^2\to\R^2$  is a magnetic vector potential satisfying $\rt A=B$, then the corresponding non-relativistic Hamiltonian is
 given by the Pauli operator
\begin{equation} \label{pauli-operator}
P(A)=  \begin{pmatrix}
P_\m(A)& 0 \\
 0  &  P_\pp(A)
\end{pmatrix}\,
\qquad  \text{with} \quad P_\ppm(A) : = (i\nabla +A)^2 \pm B,
\end{equation}
 acting in $L^2(\R^2;\C^2)$. This operator is essentially  self-adjoint on $C_0^\infty(\R^2;\C^2)$ under mild regularity assumptions on $B$.
 The operator $P_\pp(A)$ ($P_\m(A)$), acting on $L^2(\R^2)$, denotes the restriction of the Pauli operator to the spin-down (respectively,
 spin-up) subspace.
 We refer to \cite{th2} for further reading.

When $B$ vanishes at infinity, which will be the case under our assumptions, then
\begin{equation} \label{sepct-puali-dirac}
\sigma\,  (P(A)) = \sigma\, _{\rm es} (P(A)) =[0, \infty).
\end{equation}
The latter follows e.g.~from \cite[Thm.~6.5]{ahk}. Moreover, the well-known Aharonov-Casher Theorem, \cite{ac,cfks, ev, rsh}, states that if
$B\in L^1(\R^2)$ and
if the flux
\begin{equation} \label{flux}
 \alpha := \frac{1}{2\pi} \int_{\R^2} B(x)\, dx\,
\end{equation}
satisfies $|\alpha| >1$, then zero is an eigenvalue of $P(A)$. More precisely, zero is an eigenvalue of $P_\m(A)$ if $\alpha>1$ and of
$P_\pp(A)$
if $\alpha<-1$, see Section \ref{ssec-ac-states} for details. The multiplicity $N$ of this eigenvalue is given by
\begin{equation} \label{multiplicity}
N = [\, |\alpha|\, ] \quad \text{if} \ \ \alpha\not\in\Z, \qquad  N= |\alpha| -1  \quad \text{if} \ \ \alpha\in\Z,
\end{equation}
where $[x]$ denotes the integer part of $x$.

Now let $V_1, V_2:\R^2\to \R$ be polynomially decaying functions, and let $\eps>0$. We denote by
$$
P_\eps(A) = P(A)+\eps \V ,  \qquad \V =
\begin{pmatrix}
V_1 & 0 \\
0  &   V_2
\end{pmatrix} .
$$
the perturbed Pauli operator. Since $P_\eps(A)$ is a diagonal matrix operator and we focus on only one of its entries, there is no loss of
generality if we set
$$
V_1=V_2=V.
$$
If the perturbation $V$ is negative, then the threshold eigenvalue of $P(A)$ gives rise to discrete negative eigenvalues of $P_\eps(A)$.
Asymptotic expansion of such eigenvalues in the limit $\eps \searrow 0$ for radial $B$ and $V$ was found by variational methods in \cite{fmv},
see also \cite{bcez}.

\smallskip

In this paper we address a different question. Namely, what happens with the zero eigenvalue of $P(A)$ when the perturbation $V$ is positive?
Clearly,  $P_\eps(A)$ has no negative eigenvalues in this case, and under very mild decay and regularity conditions on $B$ and $V$ it cannot
have positive eigenvalues either, see \cite{is,ahk}. Hence if zero does not remain an eigenvalue of $P_\eps(A)$, then it is natural to expect
that it turns into a {\it resonance}.
It is well-known, however, that there is no unique definition of the latter. Here we adopt the dynamical approach of \cite{ort}, see also
\cite{hun,jn,sw,si-73, si-78,wax}, in which a resonance manifests itself by a (quasi)-exponential decay of the associated survival
probability.

More precisely, by a {\it resonance} of $P_\eps(A)$ we understand a {\it pair}
$(\Psi, \lambda_\eps)$ such that $\|\Psi\|_{L^2(\R^2;\C^2)}=1$, $\lambda_\eps = x_\eps -i \gamma_\eps$ with $\gamma_\eps>0$, and such that
\begin{equation} \label{eq-reson}
\LL \Psi, \, e^{-it P_\eps(A)}\, \Psi \RR_{L^2(\R^2; \C^2)} =  e^{-it \lambda_\eps}  +\delta(t,\eps) \qquad \forall\, t>0,
\end{equation}
where
\begin{equation} \label{delta-unif}
\delta_\eps:= \sup_{t>0} |\delta(t,\eps)|\to 0 \qquad \text{as}\qquad  \eps\searrow 0.
\end{equation}
We call $\lambda_\eps$ the {\it resonance position} and $\Psi$ the {\it resonance eigenstates}.
Owing to the Aharonov-Casher Theorem, the natural candidates for the latter are
\begin{equation} \label{Psi-eq}
\Psi = \begin{pmatrix}
\psi \\
0
\end{pmatrix}  \quad \text{if} \ \ \alpha>0, \qquad
\Psi = \begin{pmatrix}
0\\
\psi
\end{pmatrix} \quad \text{if} \ \ \alpha<0,
\end{equation}
where $\psi$ is a zero eigenfunction of $P_\m(A)$ respectively $P_\pp(A)$. Note that equation \eqref{eq-reson} cannot hold without the error
term $\delta(t,\eps)$ since the survival probability can decay exponentially only for $t$ neither too small nor too large,
cf.~\cite[Sec.~2]{si-78}.

\subsection{Main results}
Our goal is to prove the existence of resonances, in the sense of equation \eqref{eq-reson}, and to derive an explicit expression for the
leading term in the asymptotic expansion of $\lambda_\eps$ as $\eps\searrow 0$. Our results are described precisely in Theorems \ref{thm-1}, \ref{thm-2}, \ref{thm-3}, \ref{thm-deg-1}, \ref{thm-deg-2}, and \ref{thm-deg-3}. We describe them briefly here.

In view of first order perturbation theory it is not
surprising that the real part of $\lambda_\eps$ satisfies
$$
x_\eps  =  \eps\,  \LL \Psi, \V\, \Psi\RR_{L^2(\R^2; \C^2)} + o(\eps),
$$
where the first term on the right hand side is positive by assumption.

A much more important quantity is the imaginary part of $\lambda_\eps$, also known as the width, $\gamma_\eps$, since it determines the rate
of
decay of the survival probability $| \LL \Psi, \, e^{-it P_\eps(A)}\, \Psi \RR|^2$, see equation \eqref{eq-reson}, and consequently the mean
lifetime of the resonance.

It turns out that the dependence of $\gamma_\eps$ on  $\eps$ is determined by the flux \eqref{flux}.  In order to describe its asymptotic
behavior more in detail, it is convenient to denote by
\begin{equation} \label{mu}
\mu = \min_{k\in\Z} |\alpha-k|  \in [0, 1/2\, ] \\[2pt]
\end{equation}
the distance between $\alpha$ and $\Z$.  Since the operator $P_\pp(-A)$ is unitarily equivalent to $P_\m(A)$, we may assume that
\begin{equation}
\alpha > 0.
\end{equation}
We now distinguish various cases of asymptotic behavior of $\gamma_\eps$ depending on the value of $\alpha$.

\subsection*{Non-degenerate threshold eigenvalue}  If zero is  a simple eigenvalue of $P(A)$, then by equation \eqref{multiplicity} we must
have
$\alpha\leq 2$. Our result in the case of non-integer flux, i.e.~for $1<\alpha<2$, then says that $P_\eps(A)$ has a resonance which satisfies
\begin{equation} \label{Gamma-12}
\gamma_\eps = \Gamma_1\,  \eps^{\alpha} \big(1+ \mathcal{O}(\eps^\mu)\big) + \Gamma_2\,  \eps^{3-\alpha}\, \big(1+ \mathcal{O}(\eps^\mu)\big),
\end{equation}
where the coefficients $\Gamma_1 >0$ and $\Gamma_2 \geq 0$ are computed in Theorem \ref{thm-1}, see in particular equation
\eqref{gamma-asymp}. The corresponding error term \eqref{delta-unif} satisfies $\delta_\eps= \mathcal{O}(\eps^\mu)$.
Although generically  $\Gamma_2 >0$, in certain cases $\Gamma_2$ may vanish. This happens for example when both $B$ and $V$ are radially
symmetric, cf.~Remark \ref{rem-radial}. Notice that if $\Gamma_2=0$, then the power of $\eps$ in \eqref{Gamma-12} covers, by varying
$\alpha$, the whole interval $]1,2[$.
On the other hand, if $\Gamma_2>0$, then the power of $\eps$ of the leading term in \eqref{Gamma-12} attains its maximum for
$\alpha=\frac 32$, in which case we have $\gamma_\eps\, \sim\, \eps^{\frac 32}$. The same coupling constant
dependence, i.e.~$\eps^{\frac 32}$, is generically manifested by resonances arising from threshold eigenvalues of non-magnetic Schr\"odinger
operators in
dimension three, \cite[Thm.~5.1]{jn}. For comparison,  resonances  arising from embedded eigenvalues of
Schr\"odinger operators typically satisfy the asymptotical law $\gamma_\eps \, \sim\, \eps^2$ predicted by the Fermi Golden Rule, see
e.g.~\cite[Sec.~4]{si-73}.

In the borderline case $\alpha=2$ one has $\mu=0$ and equation \eqref{Gamma-12} is replaced by
\begin{equation} \label{Gamma-34}
\gamma_\eps =  \Gamma_3\, \eps^2   +\Gamma_4\,    \frac{\eps}{(\log \eps)^2}   \,    \big[1+    \mathcal{O}\big(|\log \eps|^{-1}  \big)\big],
\end{equation}
with $\Gamma_3 > 0, \ \Gamma_4 \geq 0$, and with an error term which satisfies $\delta_\eps= \mathcal{O}(|\log \eps|^{-1})$ if $\Gamma_4>0,$
and $\delta_\eps= o(1)$
if $\Gamma_4=0$. Explicit expressions for the coefficients $b_3$ and $\Gamma_4$ are given in Theorems \ref{thm-2} and \ref{thm-3}.
Similarly as in equation \eqref{Gamma-12} the
coefficient $\Gamma_4$ vanishes when both $B$ and $V$ are radial. The resonance width then satisfies $\gamma_\eps = \Gamma_3\, \eps^2$ without
any error term, cf.~Remark \ref{rem-radial-2}.

\begin{rem}
We do not make any statement about the optimality of the upper bounds on the error term  $\delta_\eps$  in the above results. In general it is
possible to
improve $\delta_\eps$
by replacing $\Psi$ in equation \eqref{eq-reson} with a suitable sequence $\Psi_\eps$ satisfying $\|\Psi_\eps-\Psi\|_2 \to 0$ as $\eps\searrow
0$. Such an analysis was carried out for Schr\"odinger operators with dilation-analytic potentials in
\cite{hun}, and for more general potentials in \cite{cjn}. We do not address this issue in the present paper.
\end{rem}

An interesting application of our main results concerns particles with {\it anomalous magnetic moments}.  The standard value of the magnetic
moment of a fermion with spin $\frac 12$ predicted by the Dirac equation is equal to $2$. However,  there exist fermions whose magnetic
moment differs slightly from this value. It follows from our results that
the Pauli operator then may have a resonance even if $V=0$, see Section \ref{ssec-anomalous-pauli}.

\subsection*{Degenerate threshold eigenvalue}
The situation in the case in which zero is a degenerate eigenvalue of $P(A)$ is, as expected, much more delicate. As suggested in \cite{jn2},
the natural candidates for the resonance
eigenstates are the eigenfunctions of the operator $\po V \po$, where $\po$ is the projection onto the zero eigenspace. The crucial hypothesis
on $V$ now requires that this operator admits an eigenfunction $\psi_0^{(V)}$ with a simple
and positive eigenvalue. In Section \ref{sec-degenerate} we show that, under certain additional assumptions, $\psi_0^{(V)}$ is a resonance
eigenstate of the operator $P_\eps(A)$,
and we calculate the asymptotic formula for its position $\lambda_\eps$ as $\eps\searrow 0$, see Theorems \ref{thm-deg-1}, \ref{thm-deg-2} and
\ref{thm-deg-3}.

Moreover, in Section \ref{ssec-example} we treat an explicit example in which zero is a doubly degenerate eigenvalue of $P(A)$, and in which
the operator $P_\eps(A)$ admits two resonances
whose widths have different orders of magnitude in $\eps$, see equations \eqref{doubly-not-int}, \eqref{gam-1} and \eqref{gam-2}. For general
values of $\alpha$, when $0$ is
an eigenvalue of $P(A)$ of multiplicity $N$, see  \eqref{multiplicity},  the results of Section \ref{sec-degenerate} guarantee, under suitable
assumptions on the eigenfunction of $\po V \po$,  existence of $N$ resonances, cf.~Remark \ref{rem-degenerate}.

\smallskip

\begin{rem}
The approach of this paper can be applied also in the relativistic description of the problem under consideration, i.e.~when the Hamiltonian
is given by the two-dimensional Dirac operator. Resonances of the latter will be treated in a forthcoming paper.
\end{rem}

\subsection{Organization of the paper}
In the next section we state our main hypothesis on $B$ and collect some  preliminary results which will be needed later. The main results and
their proofs in the case of a
non-degenerate threshold eigenvalue are presented in Section \ref{sec-simple}. Resonances arising from a degenerate eigenvalue are treated in
Section \ref{sec-degenerate}.
In Section \ref{ssec-anomalous-pauli} we discuss applications to particles with anomalous magnetic moments.

{\bf Acknowledgments} This work was supported in part by the Vigevani Research Project Prize (JB and HK) and the Israel Science Foundation
(JB, Grant no.\ 1378/20). JB acknowledges the hospitality of the Sezione di Matematica at the Universit\`a degli studi di Brescia, and HK
acknowledges the hospitality of the Einstein Institute of Mathematics at the Hebrew University of Jerusalem, where parts of this work were
done.

\section{\bf Preliminaries and notation}
\label{sec-prelim}

\subsection{Hypothesis} We will work under the following decay condition on $B$.

\begin{assumption} \label{ass-B}
The function $B: \R^2 \to \R$ is continuous and satisfies
\begin{equation} \label{ass-B-eq}
| B(x)| \ \lesssim\  (1+|x|^2)^{-\rho}.
\end{equation}
for some  $\rho>7/2$.
\end{assumption}

Let us comment on the decay condition  stated in Assumption \ref{ass-B}. The upper bound \eqref{ass-B-eq} turns out to be a sufficient
condition for  the resolvent expansion of the Pauli
operator $P(A)$  at threshold, cf.~\cite[Sec.~5]{ko}, which is one of the main tools of our proofs. It is natural to expect, though,  that
the results presented here should  hold
under weaker decay assumptions on $B$, in particular for any magnetic field with finite flux.

\subsection{Notation}  We will need the resolvent expansion of the operator $P_\m(A)$ in the topology of weighted Sobolev spaces $\HH^{k,s}$
equipped with the norm
\begin{equation} \label{hms-norm}
\| u\|_{\HH^{k,s}} = \|\, \x^s (1- \Delta)^{k/2}\, u\|_{\Lp^2(\R^2)}.  \qquad k\in\Z, \ s\in\R,
\end{equation}
where $\x = (1+|x|^2)^{1/2}$. By
$$
\B(k,s;k',s') = \B(\HH^{k,s}; \HH^{k',s'})
$$
we denote the space of bounded linear operators from $\HH^{k,s}$ into $\HH^{k',s'},$ and for $k=0$ we use the shorthands
$$
 \| u\|_{\HH^{0,s}} = \| u\|_{2,s} =  \| \,  \langle\, \cdot \, \rangle ^s \, u\|_{L^2(\R^2)},  \qquad  L^{2,s}(\R^2)\,  = \HH^{0,s}, \qquad
 \B(0,s;0,s') = \B(s;s')\, ,
$$
where $\x = (1+|x|^2)^{1/2}$.  We define the inner product on $L^2(\R^2)$ to be linear in the second argument, i.e.~we set
$$
\LL f, g\RR = \int_{\R^2} \bar f\, g\, .
$$
Given a set $M$ and two functions $f_1,\, f_2:M\to\R$, we write $f_1(m) \lesssim f_2(m)$ if there exists a numerical constant $c$ such that
$f_1(m) \leq c\, f_2(m)$ for all $m\in M$. The symbol $f_1(m) \gtrsim f_2(m)$ is defined analogously. Moreover, we use the notation
$$
f_1(m) \, \sim\,  f_2(m)  \quad \Longleftrightarrow \quad f_1(m) \lesssim f_2(m) \ \wedge \ f_2(m) \lesssim f_1(m),
$$
Finally, by $\id_M$ we denote the characteristic function of $M$.

\subsection{The SLFG formula}
\label{ssec-slfg}
Identity \eqref{a-inverse} below is known in the literature as Schur-Livsic-Feshbach-Grushin (shortly SLFG) formula, see e.g.~\cite{jn,sz}.
Since we will use it on various occasions, we briefly recall it. Assume that a Hilbert space  $\mathscr H$ is a direct sum of two Hilbert
spaces; $\mathscr H= \mathscr H_1 \oplus \mathscr H_2$. Let $a$ be a closed operator on $\mathscr H$. Then $a$ can be represented in the
matrix form
\begin{equation} \label{a-matix}
a=
 \begin{pmatrix}
a_{11} & a_{12}  \\
a_{21} & a_{22}
\end{pmatrix}. \\[3pt]
\end{equation}
Suppose moreover that $a_{22}$ is boundedly invertible on $\mathscr H_2$. Then $a$ is invertible if and only if its Schur complement
\begin{equation} \label{schur}
S= a_{11} -a_{12}\, a_{22}^{-1}\, a_{21}
\end{equation}
is invertible. In that case we have
\begin{equation} \label{a-inverse}
a^{-1} =  \begin{pmatrix}
b_{11} & b_{12}  \\
b_{21} & b_{22}
\end{pmatrix}, \\[3pt]
\end{equation}
where
\begin{align*}
b_{11} & = S^{-1}, \qquad b_{12} = -S^{-1}\, a_{12}\, a_{22}^{-1}\, , \qquad b_{21}= -a_{22}^{-1}\, a_{21}\, S^{-1}, \qquad b_{22} =
a_{22}^{-1} - a_{22}^{-1}\, a_{21}\, S^{-1} \, a_{12}\, a_{22}^{-1}\, .
\end{align*}

\subsection{ The gauge}  Since the definition of the resonance is gauge invariant, see equation \eqref{eq-reson}, we may choose any vector
potential $A$ which satisfies $\rt A =B$. In order to make a  suitable choice of $A$ we let
\begin{equation} \label{superp}
h(x) = -\frac{1}{2\pi} \int_{\R^2} B(y) \log |x-y|\, dy. \\[3pt]
\end{equation}
We then have  $-\Delta h= B$, which implies that the vector potential
\begin{equation} \label{gauge-pauli}
A_{h} = \big(\partial_{2} h \, ,\,  -\partial_{1} h \big)\,
\end{equation}
satisfies
\begin{equation} \label{laplace-h}
\rt A_h =-\Delta h= B.
\end{equation}
Moreover, we define
the reference vector potential
\begin{equation}
A_0(x) = \alpha\, (-x_2, x_1) \min \big\{\, |x|^{-1},\,  |x|^{-2}\big \} \, .
\end{equation}
Obviously, $\rt A_0\neq B$, but a short calculation shows that $A_0$ generates a magnetic field with the same flux as $B$. For later purposes
we need to choose $A$ which is close enough to $A_0$ at infinity.
To do so we make use of the gauge transformation function $\chi:\R^2\to \R$ constructed in \cite[Prop.~2.2]{ko}, for which
\begin{equation} \label{vp}
|  A_h(x)  +\nabla\chi(x) - A_0(x) | \ \lesssim \  \x^{-\rho+1}
\end{equation}
holds whenever $B$ satisfies Assumption \ref{ass-B}. In the sequel we thus put
\begin{equation}  \label{gauge-transf}
A = A_h  +\nabla\chi ,
\end{equation}
so that $\rt A=B$, see equation \eqref{laplace-h}. Notice also that for radial $B$ one may choose $\chi=0$, \cite[Sec.~3.4]{ko}.

\begin{rem}
In the proofs of our results we will use the vector potential given by \eqref{gauge-transf}. Recall however that the  survival probability is
gauge-invariant.
\end{rem}

\subsection{The Aharonov-Casher states} \label{ssec-ac-states}
Let us describe the structure of the zero-eigenspace of the Pauli operator in the gauge $A$ defined in \eqref{gauge-transf}. The
Aharonov-Casher Theorem implies that $\Psi$ is a zero-eigenfunction
of $P(A)$ if and only if
\begin{equation} \label{eq-ac-states}
\begin{aligned}
\Psi & = \begin{pmatrix}
\psi \\
0
\end{pmatrix} , \qquad
\psi(x)  =  e^{i\chi(x)+h(x)}\, \sum_{j=1}^{N} \, c_j\,  (x_1+i x_2)^{j-1}  \qquad \text{for} \ \ \alpha>0 \\
\Psi & = \begin{pmatrix}
0  \\
\psi
\end{pmatrix} , \qquad
\psi(x)  = e^{i\chi(x)-h(x)}\,  \sum_{j=1}^{N} \, c_j\,  (x_1-i x_2)^{j-1}  \qquad \text{for} \ \ \alpha<0 ,
\end{aligned}
\end{equation}
see \cite{ac,cfks, ev, rsh}. Here $N$ is given by \eqref{multiplicity}, and $c_j\in\C$ are arbitrary. From  \eqref{superp} we deduce
that
\begin{equation} \label{h-asymp}
 h(x) = -\alpha \log |x| + \mathcal{O}(|x|^{-1})\, , \qquad |x|\to\infty .
\end{equation}
This shows, in particular, that $\Psi\in L^2(\R^2;\C^2)$.
Apart from the zero-eigenfunctions we need also the so-called
zero virtual bound states, i.e.~solutions of $P_\m(A)\, \varphi=0$  which satisfy $\varphi\in L^\infty(\R^2)\setminus L^2(\R^2)$. For $\alpha
>0$ these solutions are given by
\begin{equation} \label{Phi}
\varphi(x) = e^{i\chi(x)+h(x)}  \sum_{j=1}^{N+1} \, c_j\,  (x_1-i x_2)^{j-1}  \quad \text{if} \ \alpha\not\in\Z, \qquad \varphi(x) =
e^{i\chi(x)+h(x)}  \sum_{j=1}^{N+2} \, c_j\,  (x_1-i x_2)^{j-1}  \quad \text{if} \ \alpha\in\Z.
\end{equation}

Since the flux is assumed to be positive, the Aharonov-Casher Theorem says that $P_\pp(A)$ has no eigenvalues, and we may thus restrict our
attention to the operator $P_\m(A)$.
This means that we will study a resonance behavior of the operator
$$
P_\m(A,\eps ) =(i\nabla+A)^2 - B +  \eps V
$$
in the limit $\eps\searrow 0$.

In what follows we denote by $\po$ the orthogonal projection on the zero
eigenspace of $P_\m(A)$.

\subsection{Resolvent expansion of $P_\m(A)$ }
\label{ssec-res-expansion}
One of the main technical ingredients of the proofs of our main results are asymptotic expansions of $(P_\m(A) -z)^{-1}$ in the limit $z\to
0$.
We will formulate them separately for non-integer and integer values of $\alpha$.  In order to do so we define the functions $\zeta,
\omega: \R_+\setminus \Z\to \C$ given by
\begin{equation} \label{zeta}
 \zeta(s) =  -\frac{4^{s-1}\, \Gamma(s)\, \,  e^{i\pi s}}{\pi\, \Gamma(1-s)} \qquad\text{and} \qquad   \omega(s)= \frac{1}{\zeta(s)\,
 \|e^h\|_2^2}\, , \, \ \qquad 0<s\not\in\Z\, .
\end{equation}
With this notation we state

\begin{prop} \label{prop:degenerate-ni-exp}
Let $1 <\alpha \notin \mathbb{Z}$. Suppose that $B$ satisfies Assumption \ref{ass-B}. Let $A$ be given by \eqref{gauge-transf}.  Then there
exists a self-adjoint operator $S_0 \in \B(-1,s; 1, -s), \, s>3$, such that
\begin{equation} \label{eq:ResolventExpansionDeg}
\left(P_\m(A)-z \right)^{-1}=-z^{-1}\po+\frac{\omega(1+\alp) \, z^{\alpha'-1}}{1+c_1 \omega(1+\alp)\,  z^{\alpha'}}\ \psi \LL \psi\, ,\,
\cdot\,  \RR-\frac{\zeta(\alpha')z^{-\alpha'}}{1+\varrho\, \zeta(\alpha') z^{1-\alpha'}}\ \varphi  \LL \varphi\, ,\, \cdot \, \RR
+S_0+o(1)
\end{equation}
in $\B(-1,s; 1,-s),\,  s >3,$ as $z\to 0,$ Im $ z\geq 0$.
Here $c_1\in\R$ is a constant which is equal to $1$ if $1< \alpha < 2$,
\begin{equation} \label{alpha-prime}
\alp=\alpha-N
\end{equation}
is the fractional part of $\alpha$, $\varrho\in\R$, $\psi \in L^2(\R^2)$ is a specific zero eigenfunction of $P_\m(A)$, and $\varphi \in
L^\infty(\R^2)\setminus L^2(\R^2) $ is a specific virtual bound state.
\end{prop}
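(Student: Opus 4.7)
The plan is to derive \eqref{eq:ResolventExpansionDeg} by combining a partial-wave reduction of $P_\m(A)$ in the gauge \eqref{gauge-transf} with the Schur-Livsic-Feshbach-Grushin (SLFG) formula from Section \ref{ssec-slfg}, using the extremal Aharonov-Casher states as the finite-rank ``bad'' directions. The essential input is the small-$z$ behaviour of the magnetic Laplacian in the two angular-momentum channels adjacent to the flux $\alpha$, which has already been analysed in \cite[Sec.~5]{ko}.

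First, I would exploit the fact that, in the gauge \eqref{gauge-transf}, the Aharonov-Casher formulas \eqref{eq-ac-states} and \eqref{Phi} show that the threshold behaviour of $P_\m(A)$ is governed by two Bessel-type radial operators of order $\alp=\alpha-N\in(0,1)$. The standard small-argument asymptotics of $K_{\alp}(\sqrt{-z}\,|x|)$ produce precisely the fractional powers $z^{\alp}$ and $z^{1-\alp}$ together with the universal prefactors $\zeta(\alp)$ and $\omega(1+\alp)$ from \eqref{zeta}. In every other partial-wave channel the operator $(P_\m(A)-z)^{-1}$ stays bounded in the weighted topology $\B(-1,s;1,-s)$, $s>3$, as $z\to 0$, and so contributes only to the regular part $S_0+o(1)$.

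Second, I would organise the expansion by applying the SLFG formula with respect to the finite-rank projection $\pi$ onto $\mathrm{span}\{\psi,\varphi\}$, where $\psi = e^{i\chi+h}(x_1+ix_2)^{N-1}$ is the extremal zero eigenfunction in $L^2$ and $\varphi = e^{i\chi+h}(x_1-ix_2)^{N}$ is the extremal virtual bound state from \eqref{Phi}. Thanks to \cite{ko}, the $(1-\pi)$-block of $(P_\m(A)-z)^{-1}$ reduces to a bounded operator modulo $o(1)$ in $\B(-1,s;1,-s)$. The $\pi$-block becomes a $2\times 2$ Schur complement whose entries are polynomials in $z$, $z^{\alp}$, and $z^{1-\alp}$; inverting it via a geometric series yields the two resolvent-like quotients $\omega(1+\alp)\,z^{\alp-1}/(1+c_1\omega(1+\alp)z^{\alp})$ and $-\zeta(\alp)\,z^{-\alp}/(1+\varrho\,\zeta(\alp)\,z^{1-\alp})$ multiplying $\psi\LL\psi,\cdot\RR$ and $\varphi\LL\varphi,\cdot\RR$, while the singular piece $-z^{-1}\po$ arises from the ``$-z$'' diagonal entry in the $\psi$-direction because $\psi\in\ran\po$. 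Self-adjointness of $S_0$ and reality of $c_1,\varrho$ then follow from self-adjointness of $P_\m(A)$ together with the manifest symmetry of the explicit singular terms; the identification $c_1=1$ in the range $1<\alpha<2$ is forced, because there $N=1$ and $\ran\po=\mathrm{span}\{\psi\}$, so the two pieces proportional to $\psi\LL\psi,\cdot\RR$ must merge into a single resolvent denominator.

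The main obstacle will be the bookkeeping of the SLFG inversion when $N\geq 2$: although only the extremal modes $\psi$ and $\varphi$ produce non-analytic contributions, one must show that all the remaining Aharonov-Casher zero eigenfunctions and their polynomial pre-images decouple, in the partial-wave decomposition, from the two critical channels and contribute only bounded, analytic-in-$z$ corrections absorbable into $S_0$. This requires tracking carefully the gauge factor $e^{i\chi+h}$ and its asymptotics \eqref{h-asymp} against the weighted-space exponents $\pm s$, and propagating the decay $|B(x)|\lesssim\x^{-2\rho}$ through the polynomial angular components of \eqref{eq-ac-states} and \eqref{Phi} in order to verify that the nominally off-diagonal entries of the Schur complement are indeed $O(z)$ or smaller, hence absorbable into the regular part.
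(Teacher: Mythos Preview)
Your proposal is vastly more ambitious than the paper's own argument. The paper does not re-derive the expansion \eqref{eq:ResolventExpansionDeg} at all: it simply cites \cite[Thm.~5.6]{ko} for the existence of the expansion with remainder $S_0+o(1)$, and then supplies a three-line argument for the self-adjointness of $S_0$ by evaluating \eqref{eq:ResolventExpansionDeg} at $z=-x$ with $x>0$, where the resolvent and all displayed singular terms are manifestly self-adjoint (the coefficients $\omega(1+\alp)(-x)^{\alp-1}$ and $\zeta(\alp)(-x)^{-\alp}$ are real by \eqref{zeta}). That is the entire proof. Your self-adjointness argument is comparatively vague; the paper's trick of specialising to the negative real axis is cleaner and worth adopting.

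If you do insist on re-deriving the expansion rather than citing \cite{ko}, there are two concrete problems with your outline. First, there is no partial-wave reduction of $P_\m(A)$ for non-radial $B$: the operator does not separate in polar coordinates in the gauge \eqref{gauge-transf}. What \cite{ko} actually does is expand the resolvent of the \emph{reference} Aharonov--Bohm Laplacian $(i\nabla+A_0)^2$ (which is rotationally invariant and does admit a partial-wave decomposition with Bessel-type channels) and then pass to $P_\m(A)$ by a resolvent identity, using the decay \eqref{vp} of $A-A_0$ and of $B$. Your sketch conflates these two operators. Second, your treatment of the pole $-z^{-1}\po$ is inconsistent when $N\geq 2$: you project only onto $\mathrm{span}\{\psi,\varphi\}$, claim that $-z^{-1}\po$ arises ``from the $\psi$-direction'', and then assert that the remaining $N-1$ Aharonov--Casher eigenfunctions contribute ``only bounded, analytic-in-$z$ corrections absorbable into $S_0$''. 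That is false; each of those eigenfunctions produces a $z^{-1}$ singularity, and together they account for the rank-$(N-1)$ part of $-z^{-1}\po$ that your rank-one $\psi$-piece cannot supply. Any SLFG scheme must include the full kernel of $P_\m(A)$ in the finite-rank block.
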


Note that, in order to avoid confusion, we have slightly changed the notation in
\eqref{eq:ResolventExpansionDeg} with respect to the one used in \cite[Theorem 5.6]{ko}. More precisely, we have
$$
\psi = d\, \|e^h\|_2\, \psi^\m,
$$
where $\psi^\m$ is given in \cite[Eg.~(5.36)]{ko} and $d\in\C$ is a normalization constant. The exact values of the constant $c_1$ and the
definition of  $\varphi$ can be found in \cite[Sec.~5.6]{ko}.

\begin{rem}\label{rem-exp-non-integer}
If $B$ is radial, then
\begin{equation} \label{varphi-radial}
\varphi(x) ={\rm const}\,    (x_1+ix_2) \, e^{h(x)} ,
\end{equation}
cf.~\cite[Cor.~5.10]{ko}.
\end{rem}

In the case $\alpha\in\Z$ we have to take into account two linearly independent solutions $\varphi_j$ of the form \eqref{Phi} which satisfy
$\varphi_1\in
L^\infty(\R^2)\setminus L^p(\R^2)$ for all $2\leq p <\infty$, and
$\varphi_2\in L^p(\R^2)\setminus L^2(\R^2)$ for all $2<p\leq \infty$. For the explicit expressions of these functions we refer to
\cite[Sec.~6]{ko}.  We then define
\begin{equation} \label{Pi-12}
\Pi_{jk} = \LL\,  \varphi_j, \, \cdot\,  \RR\,   \varphi_k  \qquad \qquad j,k=1,2.
\end{equation}

Now we have

\begin{prop} \label{prop:degenerate-int-exp}
Let $\alpha \in \mathbb{Z}$, $\alpha \geq 2$, and suppose that $B$ satisfies Assumption \ref{ass-B}. Let $A$ be given by \eqref{gauge-transf}.
Then there exists a self-adjoint operator $T_0 \in \B(-1,s; 1, -s), \, s>3$, and a constant $m_\0 \in\R$ such that
\begin{equation} \label{res-exp-2}
(P_\m(A) -z)^{-1} = -z^{-1}\,  \po + \frac{\LL \varphi_2\, ,\, \cdot \RR\, \varphi_2} {\pi\,  z\, (\log z+ m_\0 -i \pi)} \,  - (\log
z-i\pi)\,\K \,  + {\rm T}_0 +o(1), \\[4pt]
\end{equation}
holds in $\B(-1, s; 1,-s), \ s>3,$ as $z\to 0,$ Im $ z\geq 0$.  Here
\begin{equation} \label{K-op}
\K   = \frac{1}{4\pi} \big[ \,\Pi_{11} +\overline{ \varkappa}\, \Pi_{12} + \varkappa\, \Pi_{21} + |\varkappa|^2 \, \Pi_{22} \big] +
\frac{\pi }{4 \|e^h\|_2^2 }\, \LL \psi\, ,\, \cdot \RR\, \psi ,
\end{equation}
where $\varkappa\in\C$, and
 $\psi$ is given in Proposition \ref{prop:degenerate-ni-exp}.
\end{prop}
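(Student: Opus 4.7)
The plan is to adapt the resolvent expansion argument of \cite[Sec.~6]{ko} to the notation used in this paper. The approach combines the SLFG reduction recalled in Section \ref{ssec-slfg} with the low-energy asymptotics of the two-dimensional free Laplacian resolvent; the feature distinguishing the integer case from the non-integer case of Proposition \ref{prop:degenerate-ni-exp} is the appearance of the $\log z$ singularity of $R_0(z)=(-\Delta-z)^{-1}$, which intertwines non-trivially with the pole at $z=0$ and with the virtual bound state $\varphi_2$ sitting on the $L^2$-borderline.

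First I would relate $(P_\m(A)-z)^{-1}$ to $R_0(z)$ via a Birman--Schwinger-type resolvent identity that factors out the coupling to $A$ and $B$. In weighted spaces one has the standard expansion
\[
R_0(z) \, = \, -\frac{1}{2\pi}\,(\log z - i\pi)\, \LL 1, \, \cdot\, \RR \, 1 \, + \, G_0 \, + \, o(1), \qquad z\to 0,\ \im z\geq 0,
\]
valid in $\B(-1,s;1,-s)$ for $s>1$. The stronger requirement $s>3$ (and correspondingly $\rho>7/2$ in Assumption \ref{ass-B}) is dictated by the need to make sense of pairings such as $\LL \varphi_2, G_0\,\varphi_2 \RR$, given the slow decay $\varphi_2(x)\sim |x|^{-1}$ recorded in \cite[Sec.~6]{ko}.

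Next I would carry out a finite-dimensional Grushin reduction of the singular part. Using the SLFG formula \eqref{a-inverse}, the inversion of $P_\m(A)-z$ is reduced to a Schur complement acting on the singular subspace spanned by $\ran\po$ together with $\varphi_1$ and $\varphi_2$. The entries of this complement depend on $z$ through three channels: a $-z$ factor from the kernel (generating $-z^{-1}\po$), the $(\log z-i\pi)$ factor inherited from $R_0(z)$ (producing, after identifying the pairings, the term $-(\log z-i\pi)\K$), and a delicate interaction between the $\varphi_2$-channel and the kernel in which the borderline decay of $\varphi_2$ generates an extra factor of $z$ accompanying the logarithm. Applying SLFG a second time to invert this finite matrix yields the three singular terms in \eqref{res-exp-2}; the constants $m_\0$ and $\varkappa$ then emerge from explicit pairings of $\varphi_1,\varphi_2,\psi$ with $G_0$, reassembling into the definition of $\K$ in \eqref{K-op}.

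The main obstacle is precisely this two-step inversion producing the denominator $z(\log z+m_\0-i\pi)$. Because $\varphi_2$ is on the borderline of $L^2$, its contribution cannot be separated cleanly from the eigenvalue pole by a single Schur reduction; instead one must isolate $\po$ first, then invert a $2\times 2$ block involving $\varphi_1,\varphi_2$, and finally match the leading $\log z$ from the free resolvent against the $z^{-1}$ from the eigenvalue. The structure of $\K$ in \eqref{K-op}, with its coefficients $\overline{\varkappa}$, $\varkappa$, $|\varkappa|^2$, arises naturally from the quadratic form of this $2\times 2$ block together with the explicit formulae for $\varphi_1,\varphi_2$ from \cite[Sec.~6]{ko}; self-adjointness of $\mathrm{T}_0$ and reality of $m_\0$ follow from the self-adjointness of $P_\m(A)$ in the same way as in the non-integer case.
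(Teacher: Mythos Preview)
The paper's proof is far shorter than your sketch: it simply invokes \cite[Sec.~6]{ko} for the expansion \eqref{res-exp-2} itself, and then establishes self-adjointness of $T_0$ (and reality of $m_\0$) by evaluating \eqref{res-exp-2} at $z=-x$ with $x>0$, observing that $(P_\m(A)+x)^{-1}$ and each of the other displayed terms is then self-adjoint. Your proposal, by contrast, attempts to reconstruct the derivation carried out in \cite{ko}; the two are not in conflict, but the paper treats the proposition as essentially a citation.

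There is, however, a genuine gap in your reconstruction. You propose to relate $(P_\m(A)-z)^{-1}$ to the free resolvent $R_0(z)=(-\Delta-z)^{-1}$ via a Birman--Schwinger factorization ``that factors out the coupling to $A$ and $B$''. But the vector potential fixed in \eqref{gauge-transf} is long-range: by \eqref{vp} it agrees with $A_0(x)\sim \alpha\, |x|^{-1}$ at infinity, so the difference $P_\m(A)-(-\Delta)$ contains first-order terms decaying only like $|x|^{-1}$ and is not a short-range perturbation in the weighted spaces $\HH^{k,s}$ with $s>3$. The standard resolvent identity with $-\Delta$ therefore does not produce a convergent Neumann-type expansion, and the singular structure you want (in particular the $z(\log z+m_\0-i\pi)$ denominator attached to $\varphi_2$) cannot be read off from the free-Laplacian logarithm alone. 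In \cite{ko} this is circumvented by taking as reference not $-\Delta$ but the Aharonov--Bohm-type operator $(i\nabla+A_0)^2$, whose resolvent admits an explicit Bessel/Hankel expansion from which the integer-flux logarithms and the borderline behaviour of $\varphi_2$ emerge directly; only the short-range remainder $A-A_0$ is then treated perturbatively. Your SLFG/Grushin scheme for the finite-dimensional reduction is the correct shape once that reference operator is in place, but as written the first step would not go through.
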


\begin{rem} \label{rem-exp-integer}
If $B$ is radial, then
\begin{equation} \label{varphi-radial-2}
\varphi_1(x) ={\rm const}\,    (x_1+ix_2)^2 \, e^{h(x)} , \qquad \varphi_2(x) ={\rm const}\,    (x_1+ix_2) \, e^{h(x)} \, ,
\end{equation}
cf.~\cite[Cor.~6.7]{ko}.
\end{rem}

Propositions \ref{prop:degenerate-ni-exp} and  \ref{prop:degenerate-int-exp} follow directly from the results in \cite{ko}. The proofs are
given in Appendix \ref{sec-app}.

\section{\bf Resonances arising from a non-degenerate threshold eigenvalue}
\label{sec-simple}
In this section we assume that the zero eigenvalue of $P_\m(A)$ is simple. By equation \eqref{multiplicity} this is equivalent
to the
condition
\begin{equation}
1 < \alpha \leq 2.
\end{equation}
Equation \eqref{eq-ac-states} implies that the associated zero-eigenfunction is, in the gauge \eqref{gauge-transf}, given by
\begin{equation}  \label{psi-0}
\psi_0 = c_0\, e^{h+i\chi}\  ,
\end{equation}
where $c_0$ is chosen so that $\|\psi_0\|_2=1$.
Obviously, we then have
$$
\po = \psi_0\, \LL \psi_0\, ,\,  \cdot\, \RR .
$$


\subsection{Survival probability}
\label{ssec-survival}
To analyze the amplitude of the survival probability we use the representation
\begin{equation} \label{repr}
\LL \psi_0, \, e^{-it P_\m(A,\eps)}\, \psi_0 \RR = \frac 1\pi \, \lim_{y\to 0+} \int_\R e^{-it x}  \, {\rm Im}\,  \LL \psi_0, \, (P_\m(A,
\eps) -x-iy)^{-1} \, \psi_0 \RR \, dx,
\end{equation}
and show that if we replace the integrand in the above equation by the Lorentzian peak
$$
\frac{\gamma_\eps}{(x-x_\eps )^2 +\gamma_\eps^2}\, ,
$$
with suitably chosen $x_\eps$ and $\gamma_\eps$,
then we make an error $\delta(t,\eps)$ which tends to zero as $\eps\searrow 0$ uniformly in $t>0$. As noted in the Introduction we
work under a positivity assumption on $V$. In fact, we
do not require that $V>0$ hold pointwise. It suffices to assume that
\begin{equation} \label{cond-jn}
\beta: = \LL \psi_0, V \psi_0\RR  > 0.
\end{equation}
We also factorize $V$ as follows;
\begin{equation}  \label{factor}
V = v U v, \qquad \text{where} \ \ v= \sqrt{|V|}\, , \quad \text{and} \  \ \ U(x) =  \left\{
\begin{array}{c@{\quad}l}
-1  \, &\quad \text{if} \ \ V(x) <0\  , \\
1  &\quad \text{elsewhere} \ .
\end{array}
\right.
\end{equation}

We will use the SLFG formula in order to express $\LL \psi_0, \, (P_\m(A, \eps) -z)^{-1} \, \psi_0 \RR$ as the resolvent of an effective
$z$-dependent (one-dimensional) operator, by taking $\mathscr H_1= P_0 L^2(\R^2)$ and $\mathscr H_2=  Q_0 L^2(\R^2)$.

Proceeding as in \cite[Sec.~2]{jn}, we let $ Q_0=I- P_0$ and denote by
\begin{align*}
R(z) &= Q_0( Q_0 P_-(A,\varepsilon)  Q_0-z)^{-1} Q_0\\
R_0(z) &= Q_0( Q_0P_-(A) Q_0-z)^{-1} Q_0= Q_0 (P_-(A)-z)^{-1} Q_0
\end{align*}
the inverses of the restrictions of $P_-(A,\varepsilon)$ and $P_-(A)$ (respectively) to the orthogonal subspace to $\psi_0$. The resolvent
equation says
\begin{equation} \label{eq-resolvent-equation}
\begin{split}
R(z)&=R_0(z)-\varepsilon R_0(z)V R(z)=R_0(z)-\varepsilon R(z) V R_0(z)=R_0(z)-R_0(z)\left(\varepsilon V-\varepsilon^2 V R(z) V \right)R_0(z)\\
&=R_0(z)-\varepsilon R_0(z)v\left(U-\varepsilon U v R(z) v U\right)v R_0(z),
\end{split}
\end{equation}
and this can be combined with
\begin{equation*}
\left(U+ \varepsilon v R_0(z) v\right)\left(U-\varepsilon UvR(z) v U \right)=I
\end{equation*}
(which can be seen by multiplying the first equality in  \eqref{eq-resolvent-equation} by  $\varepsilon v U$ on the right and remembering that
$U^2=I$) to show that
\begin{equation}  \label{eq-resolvent}
R(z) =R_0(z)-\varepsilon R_0(z) v\left(U+\varepsilon v R_0(z) v \right)^{-1}v R_0(z).
\end{equation}

The Schur complement, see equation \eqref{schur}, is then given by
\begin{equation} \label{S-eq}
S(z) = F(z,\eps) \, P_0,
\end{equation}
where
\begin{align} \label{F-eq}
F(z,\eps) &= \beta\eps -z  -\eps^2 \LL \psi_0, v U \big (G(z) -\eps G(z) [U+\eps G(z)]^{-1} G(z) \big) Uv\, \psi_0\RR\, ,
\end{align}
and
\begin{equation}  \label{G-eq}
G(z) =v R_0(z)v= v(P_\m(A) -z)^{-1}\,  v + z^{-1} \, v\, \po\, v\, .
\end{equation}
Equation \eqref{a-inverse} thus gives
\begin{equation} \label{eq-SchurVer1}
\LL \psi_0, \, (P_\m(A, \eps) -z)^{-1} \, \psi_0 \RR=\frac{1}{F(z,\eps)}.
\end{equation}


\subsection{Non-integer flux}
\label{ssec-non-integer-flux}
We recall the resolvent expansion presented in Proposition \ref{prop:degenerate-ni-exp}. Since
 $1 <\alpha <2$ and $\po = \psi_0\, \LL \psi_0\, ,\,  \cdot\, \RR$, equation \eqref{eq:ResolventExpansionDeg} takes the form
\begin{equation} \label{res-exp-1}
(P_\m(A) -z)^{-1} = -z^{-1} \po + \frac{ \omega(\alpha)\,  z^{\alpha-2}}{1+\omega(\alpha)\, z^{\alpha-1}}\, \psi_0\, \LL \psi_0\, ,\,  \cdot\,
\RR
-  \frac{ \zeta(\alpha-1)\, z^{1-\alpha}}{1 + \varrho\, \zeta(\alpha-1)\,  z^{2-\alpha}}\,  \varphi\,  \LL \varphi,  \,\cdot\,  \RR\
+S_0+  o(1) \\[4pt]
\end{equation}
in $\B(-1,s; 1,-s),\,  s >3,$ as $z\to 0$. To simplify the notation in what follows let us denote
\begin{equation} \label{eta-sigma-eq}
\eta\,   =  \frac{4 \pi^2 }{4^{\alpha}\, \Gamma^2(\alpha)\, \|e^h\|_2^2} \ , \qquad \ \sigma\,   = \frac{4^{\alpha}}{16\, \Gamma^2(2-\alpha)}
\\[3pt] ,
\end{equation}
and
$$
\omega= \omega(\alpha), \qquad \zeta= \zeta(\alpha-1),
$$
We then define the function $g: \R_+ \to \R_+$  by
\begin{equation} \label{g-eq}
g(x) = \frac{\eta\,   \beta^2  x^{\alpha-2}}{|1+\omega \,  x^{\alpha-1}|^2} +  \frac{\sigma \, |\w|^2  x^{1-\alpha}}{|1+\varrho \, \zeta \,
x^{2-\alpha}|^2} \, ,
\end{equation}
where
\begin{equation} \label{w-eq}
\w  = \LL V \psi_0, \varphi\RR \, ,
\end{equation}
and where $\varrho$ is the constant in Proposition \ref{prop:degenerate-ni-exp}.
The asymptotic behavior of $g(x)$ for $x\searrow 0$ depends on whether $\w =0$ or not. Therefore we set
\begin{equation} \label{nu-eq}
\nu = \left\{
\begin{array}{c@{\quad}l}
\mu  \, &\quad \text{if} \  \w \neq 0\  , \\[4pt]
\alpha-1  &\quad \text{if}  \ \ \w=0\ .
\end{array}
\right.
\end{equation}
\smallskip

Now we can state the result in the case of non-integer flux.

\begin{thm} \label{thm-1}
Let $1 <\alpha <2$, let $B$ satisfy Assumption \eqref{ass-B} and assume that $V \lesssim \langle \, \cdot\, \rangle ^{-\rho}$ for
some $\rho>6$. Suppose moreover that \eqref{cond-jn} holds. Then there exists a constant $C$ such that for sufficiently small $\eps$ we
have
\begin{equation} \label{exp-decay-1}
\sup_{t>0} \big | \LL \psi_0, \, e^{-it P_\m(A,\eps)}\, \psi_0 \RR - e^{-it ( x_\eps -i \gamma_\eps)} \big|  \, \leq\, C\, \eps^\mu,
\end{equation}
where $x_\eps $ and $\gamma_\eps$ satisfy
\begin{align}
x_\eps &= \beta\, \eps \big(1+ \mathcal{O}(\eps^\nu)\big) \label{x0-eps-1}\\
\gamma_\eps &= \eps^2 g(x_\eps )\, . \label{gamma-eps}
\end{align}
\end{thm}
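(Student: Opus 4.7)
The plan is to start from the survival amplitude representation \eqref{repr} combined with the Schur complement identity \eqref{eq-SchurVer1}, so that everything reduces to analyzing the scalar function $F(z,\eps)$ for $z$ close to the positive real axis and $\eps$ small. First I would feed the resolvent expansion \eqref{res-exp-1} into the formula \eqref{G-eq} for $G(z)$; this cancels the $-z^{-1}\po$ term (absorbed into the definition of $R_0$) and identifies $G(z)$ with a sum of two rank-one operators built from $v\psi_0$ and $v\varphi$, plus the bounded remainder $v S_0 v + o(1)$. Expanding the Neumann series $(U + \eps G(z))^{-1}$ in \eqref{F-eq} then gives
\begin{equation*}
F(z,\eps) = \beta\eps - z - \eps^2 \left[\frac{\omega\, z^{\alpha-2}}{1+\omega\, z^{\alpha-1}}\,\beta^2 - \frac{\zeta\, z^{1-\alpha}}{1+\varrho\,\zeta\, z^{2-\alpha}}\,|\w|^2 + \LL\psi_0, V S_0 V \psi_0\RR\right] + \mathcal{O}(\eps^3),
\end{equation*}
where the first bracket uses $\LL v\psi_0, Uv\psi_0\RR = \beta$ and $\LL v\varphi, Uv\psi_0\RR = \bar{\w}$. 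All higher powers of $\eps$ remain subleading in the regime of interest because $z$ will turn out to be of order $\eps$.

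Next I would locate the pole $\lambda_\eps = x_\eps - i\gamma_\eps$ of $1/F(\cdot,\eps)$ in the lower half plane. Writing $z = x - iy$ and taking real and imaginary parts, the leading balance $x \approx \beta\eps$ comes from $\beta\eps - z \approx 0$, while the imaginary part is forced to order $\eps^2$ because only the $\eps^2$-bracket carries a nonzero imaginary part (coming from $\text{Im}\,\omega$ and $\text{Im}\,\zeta$ in \eqref{zeta}). A direct computation of $\text{Im}\,[\omega x^{\alpha-2}/(1+\omega x^{\alpha-1})]$ and the analogous term involving $\zeta$ at $x = \beta\eps$ produces precisely the function $g$ of \eqref{g-eq}, with the constants $\eta$ and $\sigma$ in \eqref{eta-sigma-eq} arising from the identities $\text{Im}\,\omega(\alpha) = \eta/\beta^2 \cdot (\text{prefactor})$ and a similar one for $\zeta(\alpha-1)$; thus $\gamma_\eps = \eps^2 g(x_\eps)(1 + o(1))$. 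A fixed-point/implicit function argument on the pair of equations $\re F = \im F = 0$, using the smoothness of the expansion in the variables $(z,\eps)$ away from $z=0$ and the explicit scaling $z \sim \eps$, yields the refined form \eqref{x0-eps-1} and \eqref{gamma-eps}, with the subleading relative error $\mathcal{O}(\eps^\nu)$ coming from the size of the next term in the resolvent expansion (namely $z^{\alpha-1}$ or $z^{2-\alpha}$, whichever is smaller, as encoded by $\nu$ in \eqref{nu-eq}).

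For the dynamical statement I would write
\begin{equation*}
\LL\psi_0, e^{-itP_\m(A,\eps)}\psi_0\RR = \frac{1}{\pi}\int_\R e^{-itx}\,\im\frac{1}{F(x+i0,\eps)}\,dx,
\end{equation*}
and compare the integrand with the Lorentzian $\gamma_\eps/[(x-x_\eps)^2 + \gamma_\eps^2] = \im[1/(\lambda_\eps - x)]$. The Lorentzian itself integrates to $\pi\, e^{-it\lambda_\eps}$ for $t>0$ by a standard contour deformation into the lower half plane, picking up the pole at $\lambda_\eps$. The remaining task is to bound
\begin{equation*}
R(t,\eps) = \frac{1}{\pi}\int_\R e^{-itx}\left[\im\frac{1}{F(x+i0,\eps)} - \frac{\gamma_\eps}{(x-x_\eps)^2 + \gamma_\eps^2}\right]dx
\end{equation*}
uniformly in $t > 0$ by $C\eps^\mu$. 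I would split this into three regions: (i) an inner region $|x - x_\eps| \lesssim \eps$, where both terms are of order $\gamma_\eps^{-1}\sim\eps^{-2}$ but their difference is of relative size $\eps^\nu$ by the error estimate from the resolvent expansion, contributing $\mathcal{O}(\eps^\mu)$ after integration over a window of size $\eps$; (ii) a middle region $\eps \lesssim x \lesssim 1$ where both terms are small and the difference is bounded by $\eps^2 x^{\alpha - 3}$ plus an integrable remainder; and (iii) an outer region $|x|\gtrsim 1$, controlled by $L^1$-type bounds on the free resolvent expansion together with the decay of $V$.

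The main obstacle will be region (ii): the naive pointwise bound on the difference $\im(1/F) - \gamma_\eps/[(x-x_\eps)^2+\gamma_\eps^2]$ may not be integrable near $x=0$ because of the singular power $x^{\alpha-2}$ in the expansion, and the result must hold \emph{uniformly} in $t$, so any argument using oscillation of $e^{-itx}$ must be avoided. The way around this is to keep the detailed structure of $F(z,\eps)$, write $1/F$ as $1/(\lambda_\eps - z)$ plus an explicit remainder that vanishes at $z = \lambda_\eps$, and then exploit the factor $(z - \lambda_\eps)$ to cancel the pole and produce a remainder whose imaginary part on the real axis is integrable with the right $\eps^\mu$-bound. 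The same strategy controls (iii) via the explicit decay of $S_0$ in weighted spaces.
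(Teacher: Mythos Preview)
Your overall strategy matches the paper's: reduce to $1/F(x+i0,\eps)$ via the Schur complement, expand $F$ using the resolvent expansion, and compare with a Lorentzian. The paper also defines $x_\eps$ as the real zero of the approximation $K_1$ (rather than of $\re F$ itself) and sets $\gamma_\eps=-K_2(x_\eps,\eps)=\eps^2 g(x_\eps)$ exactly, avoiding any implicit-function argument for a complex ``pole'' of $F$ (which is awkward because $z^{\alpha-2}$ has a branch cut).

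There is, however, a genuine gap in your handling of the complement of the inner window. Your regions (ii) and (iii) require pointwise control of $\im\,F(x+i0,\eps)^{-1}$ for $x$ of order one and larger (and implicitly also for $x<0$), but the resolvent expansion \eqref{res-exp-1} is an asymptotic statement as $z\to 0$ and gives you nothing there; a direct $L^1$ bound on the spectral density over all of $\R$ is not available from the tools at hand. The paper sidesteps this completely with Hunziker's argument: one restricts attention to $I_\eps\subset J_\eps$, proves $|p_\eps(t)-e^{-it\lambda_\eps}|\lesssim\eps^\mu$ on that window (Lemmas \ref{lem-FtoK}--\ref{lem-p-eps}), and then applies the estimate at $t=0$ to get $\|(1-\id_{I_\eps}(P_\m(A,\eps)))^{1/2}\psi_0\|^2=1-p_\eps(0)\lesssim\eps^\mu$. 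The full amplitude minus $p_\eps(t)$ is then bounded by this spectral-projection norm via Cauchy--Schwarz, uniformly in $t$, with no need to analyze $F$ outside $J_\eps$. This is the missing idea.

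On the inner window, your last paragraph is heading in the right direction but is too vague. The paper's Lemma \ref{lem-KtoL} writes $1/K-1/L=(L-K)/L^2+(L-K)^2/(L^2K)$, Taylor-expands $L-K$ to second order around $x_\eps$ (using the derivative bounds \eqref{f-der}, \eqref{g-der}), and crucially invokes the uniform-in-$t$ estimate $\big|\int_{I_\eps}e^{-itx}(x-x_\eps)/L^2\,dx\big|\lesssim 1$ (Lemma \ref{lem-jn}, a contour bound). Your pointwise heuristic ``relative size $\eps^\nu$ over a window of size $\eps$'' is not sharp enough to produce $\eps^\mu$; the structured decomposition is needed.
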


Equation \eqref{gamma-eps} gives
\begin{equation} \label{gamma-asymp}
\gamma_\eps = \eta\,  \beta^{\alpha} \eps^{\alpha} \big(1+ \mathcal{O}(\eps^{\alpha-1})\big) + \sigma\, \beta^{1-\alpha}\, |\w|^2 \,
\eps^{3-\alpha}\, \big(1+ \mathcal{O}(\eps^{2-\alpha})\big).
\end{equation}
Hence
generically, i.e.~when $\w\neq 0$, the resonance width satisfies $\gamma_\eps \, \sim\,  \eps^{1+\mu}$.

\begin{rem} \label{rem-V-decay}
The decay condition on $V$ in Theorem \ref{thm-1}, and all the subsequent theorems, is dictated by the condition $s>3$ in Proposition
\ref{prop:degenerate-ni-exp}. Indeed,
since $v  \lesssim \langle \, \cdot\, \rangle ^{-\rho}$ with $\rho>3$, see \eqref{factor}, we can insert the asymptotic expansion
\eqref{res-exp-1} into equation \eqref{G-eq} and expand $G(z)$ for $z\to 0$ in the uniform operator topology on $L^2(\R^2)$.
\end{rem}

\begin{rem} \label{rem-beta}
Condition \eqref{cond-jn} is in fact necessary for the existence of a resonance. Indeed, if $\LL \psi_0, V \psi_0\RR \leq 0$, then zero turns
into a negative discrete eigenvalue of
$P_\m(A,\eps)$. This follows by a simple  test function argument.
\end{rem}

The proof of Theorem \ref{thm-1} will be subdivided into a series of auxiliary lemmas. First, we denote
\begin{equation} \label{J-eps}
J_\eps: = \Big( \frac{\beta\eps}{2}\, ,\,   \frac{3\beta\eps}{2}\Big ) ,
\end{equation}
and following \cite{jn} we split the integration in \eqref{repr} by setting
\begin{equation}\label{p-eps-def}
p_\eps(t) = \frac 1\pi \, \lim_{y\to 0+} \int_{I_\eps} e^{-it x}  \, {\rm Im}\,  \LL \psi_0, \, (P_\m(A, \eps) -x-iy)^{-1} \, \psi_0 \RR \,
dx,
\end{equation}
where $I_\eps \subset J_\eps $ is an interval which will be specified later. The essential idea of the proof is to show that the above
contribution to \eqref{repr} will give rise to the exponential $e^{-it(x_\eps -i \gamma_\eps)} $ plus an error term which will be absorbed,
together with the
integral over $\R\setminus I_\eps$, into $\delta(\eps,t)$.
Combining  \eqref{eq-SchurVer1} and \eqref{p-eps-def} we get
\begin{equation} \label{p-eps}
p_\eps(t) =  \frac 1\pi \int_{I_\eps} e^{-it x}\,  \lim_{y\to 0+}  \, {\rm Im}\,\Big(\,  \frac{1}{F(x +iy, \eps)}\,  \Big)\, dx .
\end{equation}

Let
\begin{align} \label{K-eq}
K(z,\eps) & = \beta\eps -z  -\eps^2 \LL \psi_0, V H(z) V\, \psi_0\RR = \beta\eps -z  -\eps^2 \Big(\,  \frac{\omega \,  \beta^2 \,
z^{\alpha-2}}{1+\omega \, z^{\alpha-1}} -  \frac{ \zeta \, |\w|^2\, z^{1-\alpha}}{1 + \varrho\, \zeta \,  z^{2-\alpha}} \Big),
\end{align}
where
\begin{equation} \label{H-eq}
H(z) =  \frac{ \omega \,  z^{\alpha-2}}{1+\omega \, z^{\alpha-1}}\  \po  -  \frac{ \zeta \, z^{1-\alpha}}{1 + \varrho\, \zeta \,
z^{2-\alpha}}\   \varphi\,  \LL \varphi,  \,\cdot\,  \RR\ . \\[4pt]
\end{equation}

\subsubsection*{\bf Convention} The absence of positive eigenvalues of $P_\m(A)$ and the limiting absorption principle imply the existence of the limit
$$
G(x) =   \lim_{y\to 0+}  G(x+iy), \qquad x\in (0,\infty)
$$
in the uniform operator topology on $L^2(\R^2)$. Consequently, in view of \eqref{p-eps}, we define
\begin{equation}  \label{F-of-x}
F(x,\eps) =  \lim_{y\to 0+}  F(x +iy, \eps)\, ,\qquad x >0.
\end{equation}
The same notation will be adapted for the quantity $K(x,\eps)$ defined by $\lim_{y\to 0+}  K(x +iy, \eps)$, cf.~\eqref{K-eq}.

\medskip

The first step in the proof of Theorem \ref{thm-1} is to show that $K(x,\eps)$ is a good approximation of $F(x,\eps)$ for $x$ belonging to the interval $J_\eps$.
In fact,
from \eqref{F-eq} and \eqref{K-eq} we deduce that
\begin{equation} \label{F-K-2}
F(x,\eps)-K(x,\eps) =  \eps^3  \LL \psi_0, v U G(z)  [U+\eps G(x)]^{-1} G(x)\,  Uv\, \psi_0\RR + \mathcal{O}(\eps^2).
\end{equation}

We then have

\begin{lem} \label{lem-FK}
For $\eps$ small enough
\begin{equation} \label{F-K}
\sup_{x\in J_\eps} | F(x,\eps)-K(x,\eps) | \  \lesssim \    \eps^{1+2\nu}  +\eps^2\, .
\end{equation}
\end{lem}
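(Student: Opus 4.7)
By identity \eqref{F-K-2}, it suffices to bound
\begin{equation*}
R(x,\eps) := \eps^3\, \LL \psi_0,\, vU\, G(x)\, [U+\eps G(x)]^{-1}\, G(x)\, Uv\psi_0\RR
\end{equation*}
uniformly for $x\in J_\eps$. First, inserting the expansion \eqref{res-exp-1} into \eqref{G-eq} and cancelling the pole $-z^{-1}\po$ against $z^{-1}v\po v$ leaves
\begin{equation*}
G(z) = vH(z)v + vS_0v + o(1) \qquad \text{in } \B(L^2(\R^2) \to L^2(\R^2))
\end{equation*}
as $z\to 0$ with $\im z\geq 0$. The conjugation by $v$ is justified by the decay $v\lesssim \x^{-\rho/2}$ with $\rho>6$, cf.\ Remark \ref{rem-V-decay}. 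Since $vH(x)v$ is rank two with norm $\lesssim x^{\alpha-2}+x^{1-\alpha}$ and the remaining piece is $\mathcal{O}(1)$, one obtains $\|G(x)\|_{L^2\to L^2} \lesssim \eps^{\alpha-2} + \eps^{1-\alpha}$ on $J_\eps$. Consequently $\|\eps\, UG(x)\| \lesssim \eps^\mu\to 0$, and a Neumann series applied to $U+\eps G(x) = U(I+\eps UG(x))$ shows that $[U+\eps G(x)]^{-1}$ is uniformly bounded on $L^2(\R^2)$.

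The decisive refinement concerns the specific vector $G(x) Uv\psi_0$. Using $vUv = V$ together with the matrix elements
\begin{equation*}
\LL v\psi_0, Uv\psi_0\RR = \beta, \qquad \LL v\varphi, Uv\psi_0\RR = \bar\w,
\end{equation*}
which follow at once from \eqref{cond-jn} and \eqref{w-eq}, the rank-two structure of $G(x)$ yields
\begin{equation*}
G(x)\, Uv\psi_0 = \frac{\omega\, x^{\alpha-2}}{1+\omega\, x^{\alpha-1}}\, \beta\, v\psi_0 \; - \; \frac{\zeta\, x^{1-\alpha}}{1+\varrho\zeta\, x^{2-\alpha}}\, \bar\w\, v\varphi + \mathcal{O}(1),
\end{equation*}
so that $\|G(x)\, Uv\psi_0\|_{L^2} \lesssim \eps^{\alpha-2} + |\w|\,\eps^{1-\alpha}$.

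Setting $\xi := [U+\eps G(x)]^{-1} G(x) Uv\psi_0$ and applying Cauchy--Schwarz, the two estimates combine to give
\begin{equation*}
|R(x,\eps)| \lesssim \eps^3\, \|G(x)\|_{L^2\to L^2}\, \|G(x)\, Uv\psi_0\|_{L^2} \lesssim \eps^{2\alpha-1} + \eps^2 + |\w|\,\eps^2 + |\w|\,\eps^{5-2\alpha}.
\end{equation*}
When $\w\neq 0$ (so $\nu=\mu$) this is $\lesssim \eps^{2\alpha-1}+\eps^{5-2\alpha}+\eps^2 \lesssim \eps^{1+2\mu}+\eps^2$; when $\w=0$ (so $\nu=\alpha-1$) only the $|\w|$-free terms remain, yielding $\eps^{2\alpha-1}+\eps^2 = \eps^{1+2\nu}+\eps^2$. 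Adding the $\mathcal{O}(\eps^2)$ already present in \eqref{F-K-2} gives the bound \eqref{F-K}.

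The main obstacle is the matrix-element bookkeeping in the second step: one must verify that the coefficient of the $\eps^{1-\alpha}$ contribution in $\|G(x) Uv\psi_0\|$ is precisely $|\w|$. Without this asymmetry between the two factors of $G(x)$ the bound would read $\eps^{1+2\mu}+\eps^2$ irrespective of $\w$, and the improved exponent $\nu=\alpha-1$ in the $\w=0$ case would be lost. Uniformity of the $o(1)$ remainder over $J_\eps$ (rather than merely pointwise as $x\to 0$) follows directly from the $\B(-1,s;1,-s)$ topology of Proposition \ref{prop:degenerate-ni-exp}.
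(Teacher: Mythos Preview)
Your argument is correct and follows essentially the same strategy as the paper: insert the resolvent expansion into $G(x)$, verify invertibility of $U+\eps G(x)$ via $\eps\|G(x)\|\lesssim\eps^\mu$, and then estimate the cubic term in \eqref{F-K-2}. The paper's proof is much terser and simply asserts
\[
\eps^3\,\LL \psi_0, vU G(x)[U+\eps G(x)]^{-1} G(x)Uv\psi_0\RR = \mathcal{O}(\eps^{2\alpha-1}) + |\w|^2\,\mathcal{O}(\eps^{5-2\alpha}),
\]
which implicitly exploits the rank-two structure of $vH(x)v$ symmetrically on \emph{both} sides of the inner product (so the $\varphi$-$\varphi$ block carries $|\w|^2$). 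Your asymmetric Cauchy--Schwarz---crude $\|G(x)\|$ on one factor, refined $\|G(x)Uv\psi_0\|$ on the other---yields only $|\w|$ on the $\eps^{5-2\alpha}$ term, but since the lemma only distinguishes $\w=0$ from $\w\neq 0$ this makes no difference to the conclusion. Your version has the virtue of making explicit the key point you flag at the end: the improved exponent $\nu=\alpha-1$ when $\w=0$ comes precisely from the vanishing of the $v\varphi$ component of $G(x)Uv\psi_0$.
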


\begin{proof}
As explained in Remark \ref{rem-V-decay} the decay assumption on $V$ allows to combine equations \eqref{G-eq} and \eqref{res-exp-1}. This
gives
$$
\eps \|  G(x) \|_{2\to 2} = \eps  \|  v H(x) v  \|_{2\to 2} +  \mathcal{O}(\eps) \, \lesssim\,   \eps^{\alpha-1} + \eps^{2-\alpha} .
$$
Hence $U+\eps G(x)$ is boundedly invertible for $x\in J_\eps$ and $\eps$ small enough.
From  Proposition \ref{prop:degenerate-ni-exp} we then get
\begin{equation*}
\eps^3  \LL \psi_0, v U G(x)  [U+\eps G(x)]^{-1} G(x)\,  Uv\, \psi_0\RR =  \mathcal{O}(\eps^{2\alpha-1}) +
|\w|^2\,\mathcal{O}(\eps^{5-2\alpha})  \, .
\end{equation*}
In view of \eqref{F-K-2} this implies the claim.
\end{proof}

Next we separate $K$ into real and imaginary parts which we denote by $K_1$ and $K_2$ respectively;
$$
K(x,\eps) =  K_1(x,\eps) + iK_2(x,\eps).
$$
The identity
\begin{equation} \label{gamma-function}
\Gamma(1-z)\,  \Gamma(z) = \frac{\pi}{\sin(\pi z)} \qquad z\not\in\Z
\end{equation}
and equation \eqref{zeta} imply
$$
{\rm Im}\Big ( \frac{ \omega \,  x^{\alpha-2}}{1+\omega \, x^{\alpha-1}} \Big) = \frac{\eta\, x^{\alpha-2} }{|1+\omega  \, x^{\alpha-1}|^2}\,
,
\qquad
{\rm Im}\Big ( \frac{ \zeta \,  x^{1-\alpha}}{1+\varrho\, \zeta \, x^{2-\alpha}} \Big) =  \frac{-\sigma\, x^{1-\alpha}}{|1+\varrho \, \zeta \,
x^{2-\alpha}|^2}\, .
$$
From  \eqref{g-eq} and \eqref{K-eq} we thus get
\begin{equation} \label{g-K}
K_2 (x,\eps) = -\eps^2 g(x).
\end{equation}
As for the real part of $K$, we note the following result.

\begin{lem} \label{lem-K-real}
We have
\begin{equation} \label{K-1}
\sup_{x\in J_\eps}  |\, K_1(x,\eps) - \beta \eps +x\, |  =  \mathcal{O}(\eps^{1+\nu}).
\end{equation}
Furthermore, for $\eps$ small enough there exits a unique $x_\eps \in J_\eps$ such that $K_1(x_\eps ,\eps)=0$.
\end{lem}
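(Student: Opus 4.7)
The plan is to read off $K_1(x,\eps)$ directly from the explicit formula \eqref{K-eq}, extract a clean leading behavior $\beta\eps-x$ on $J_\eps$, bound the remainder by the two scales $\eps^\alpha$ and $\eps^{3-\alpha}$, and then use the intermediate value theorem together with strict monotonicity to locate the unique zero.

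First, from \eqref{K-eq} one has
$$
K(x,\eps)-(\beta\eps-x)\, =\, -\eps^2\,\frac{\omega\,\beta^2\,x^{\alpha-2}}{1+\omega\,x^{\alpha-1}}\, +\, \eps^2\,\frac{\zeta\,|\w|^2\,x^{1-\alpha}}{1+\varrho\,\zeta\,x^{2-\alpha}}.
$$
For $x\in J_\eps$ we have $x\sim\eps$, and since $1<\alpha<2$ both $x^{\alpha-1}\to 0$ and $x^{2-\alpha}\to 0$ as $\eps\searrow 0$; hence the two denominators stay close to $1$ and are in particular bounded and bounded away from zero on $J_\eps$ for $\eps$ small enough. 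Taking real parts, the first summand contributes $\mathcal{O}(\eps^2\cdot x^{\alpha-2})=\mathcal{O}(\eps^\alpha)$, while the second contributes $|\w|^2\,\mathcal{O}(\eps^2\cdot x^{1-\alpha})=|\w|^2\,\mathcal{O}(\eps^{3-\alpha})$, uniformly in $x\in J_\eps$.

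Second, the definition \eqref{mu} gives $\alpha-1\geq\mu$ and $2-\alpha\geq\mu$, so $\eps^\alpha\leq\eps^{1+\mu}$ and $\eps^{3-\alpha}\leq\eps^{1+\mu}$. When $\w\neq 0$ both contributions are present and sum to $\mathcal{O}(\eps^{1+\mu})$; when $\w=0$ only the first survives, giving $\mathcal{O}(\eps^{1+(\alpha-1)})$. Each case matches the definition of $\nu$ in \eqref{nu-eq}, and \eqref{K-1} follows.

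For the existence and uniqueness of $x_\eps$, the estimate \eqref{K-1} implies
$$
K_1(\tfrac{1}{2}\beta\eps,\eps)=\tfrac{1}{2}\beta\eps+\mathcal{O}(\eps^{1+\nu})>0,\qquad K_1(\tfrac{3}{2}\beta\eps,\eps)=-\tfrac{1}{2}\beta\eps+\mathcal{O}(\eps^{1+\nu})<0
$$
for $\eps$ small, so continuity and the intermediate value theorem produce at least one root in $J_\eps$. For uniqueness, I would differentiate \eqref{K-eq} in $x$: the contributions from the two rational terms are of order $\eps^2\,x^{\alpha-3}\sim\eps^{\alpha-1}$ and $|\w|^2\,\eps^2\,x^{-\alpha}\sim\eps^{2-\alpha}$ on $J_\eps$, both $o(1)$ since $1<\alpha<2$. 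Thus $\partial_x K_1(x,\eps)=-1+o(1)$ uniformly on $J_\eps$, so $K_1(\cdot,\eps)$ is strictly decreasing on $J_\eps$ and the zero is unique.

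The only real technical point is book-keeping: one must verify that the remainders from the two rational expressions (and their $x$-derivatives) are uniform on $J_\eps$, which reduces to the elementary observation that $|1+\omega\, x^{\alpha-1}|$ and $|1+\varrho\,\zeta\, x^{2-\alpha}|$ remain close to $1$ throughout $J_\eps$. No input beyond the resolvent expansion of Proposition \ref{prop:degenerate-ni-exp} is needed for this lemma.
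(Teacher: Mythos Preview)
Your proof is correct and follows essentially the same approach as the paper: the paper writes $K_1(x,\eps)=\beta\eps-x-\eps^2 f(x)$ for the explicit real-valued function $f$ obtained from the real parts of the two rational terms, records the bounds $\sup_{J_\eps}\eps^2|f^{(j)}|=\mathcal{O}(\eps^{1+\nu-j})$ for $j=0,1,2$, and then concludes exactly as you do via the intermediate value theorem and the derivative estimate $\partial_x K_1\leq -\tfrac12$.
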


\begin{proof}
By \eqref{K-eq} we have
\begin{equation} \label{K-1-2}
K_1(x,\eps) = \beta \eps -x  -\eps^2 f(x),
\end{equation}
where
\begin{equation} \label{f-eq}
f(x) = \beta^2\ \frac{ {\rm Re} (\omega ) \, x^{\alpha-2} +|\omega |^2\,  x^{2\alpha-3}}{|1+\omega  \, x^{\alpha-1}|^2} - |\w|^2\,   \frac{
{\rm Re} (\zeta )\,   x^{1-\alpha} + \varrho |\zeta |^2\, x^{3-2\alpha} }{|1+\varrho \, \zeta\,   x^{2-\alpha}|^2} \, .
\end{equation}
One easily verifies that
\begin{equation} \label{f-der}
 \sup_{J_\eps}  \eps^2\, | f^{(j)}(x) | = \mathcal{O}(\eps^{1+\nu-j})\, , \qquad j=0,1,2.
\end{equation}
This implies \eqref{K-1}. Hence for $\eps$ small enough we have $K_1(\frac{\beta\eps}{2}, \eps) >0$, $K_1(\frac{3\beta\eps}{2}, \eps) <0$, and
\begin{equation} \label{K1-der}
\partial_x K_1(x,\eps) \leq   -\frac 12\ ,  \qquad \forall\, x\in J_\eps .
\end{equation}
This proves the second part of the claim.
\end{proof}

Lemma \ref{lem-K-real} obviously implies that
$
x_\eps = \beta \eps + \mathcal{O}(\eps^{1+\nu}).
$
Therefore we set
\begin{equation} \label{I-eps}
I_\eps = \Big[ x_\eps - \frac{\beta\eps}{4}, \, x_\eps + \frac{\beta\eps}{4} \, \Big ] \\[4pt]
\end{equation}
and note that for $\eps$ sufficiently small  $I_\eps \subset J_\eps$.

\begin{lem} \label{lem-FtoK}
For $\eps$ small enough,
\begin{equation*}
\sup_{t>0} \Big |\,  p_\eps(t) - \frac 1\pi \int_{I_\eps} e^{-it x}  \, {\rm Im}\,  \Big[\, \frac{1}{K(x,\eps)} \, \Big]  \, dx\, \Big |  \,
\lesssim\, \eps^\mu .
\end{equation*}
\end{lem}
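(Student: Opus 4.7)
By the Schur-complement identity \eqref{eq-SchurVer1} and the convention preceding \eqref{F-of-x}, the starting identity is \eqref{p-eps}, namely
\[
p_\eps(t) \;=\; \frac{1}{\pi}\int_{I_\eps} e^{-itx}\,\im\!\big[\,1/F(x,\eps)\,\big]\,dx,
\]
so that the quantity to be estimated equals $\frac{1}{\pi}\int_{I_\eps} e^{-itx}\,\im\!\big[1/F(x,\eps)-1/K(x,\eps)\big]\,dx$. Since $|e^{-itx}|=1$ for every $t>0$, it suffices to prove the $t$-independent bound $\int_{I_\eps}\big|1/F(x,\eps)-1/K(x,\eps)\big|\,dx \lesssim \eps^\mu$. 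Using $1/F-1/K=(K-F)/(FK)$, this reduces to controlling $\int_{I_\eps}|F-K|/(|F|\,|K|)\,dx$, for which one needs a small bound on $|F-K|$ and matching Lorentzian lower bounds on $|F|$ and $|K|$.

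The numerator is supplied by Lemma \ref{lem-FK}, which gives $|F-K|\lesssim \eps^{1+2\nu}+\eps^2$ on $J_\eps\supset I_\eps$. For the denominator, equation \eqref{g-K} gives $K_2(x,\eps)=-\eps^2 g(x)$, while the uniform bound $\partial_x K_1\le -\tfrac12$ established in \eqref{K1-der}, combined with $K_1(x_\eps,\eps)=0$, yields $|K_1(x,\eps)|\ge |x-x_\eps|/2$ on $J_\eps$. Since $g$ is, by \eqref{g-eq}, a positive combination of powers of $x$ and $I_\eps$ is a neighborhood of $x_\eps\sim\beta\eps$ of radius $\sim \eps$, one has $g(x)\sim g(x_\eps)$ uniformly on $I_\eps$. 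This produces the Lorentzian lower bound
\[
|K(x,\eps)|^2 \;\gtrsim\; (x-x_\eps)^2+\gamma_\eps^2,\qquad x\in I_\eps,
\]
with $\gamma_\eps=\eps^2 g(x_\eps)\sim \eps^\alpha+|\w|^2\eps^{3-\alpha}$. A comparison then shows that $|F-K|=o(\gamma_\eps)$ in every regime (verified below), so that $|F|\ge |K|/2$ on $I_\eps$ for $\eps$ small. The explicit Lorentzian integral $\int dx/[(x-x_\eps)^2+\gamma_\eps^2]\lesssim \gamma_\eps^{-1}$ therefore gives
\[
\int_{I_\eps}\frac{|F-K|}{|F|\,|K|}\,dx \;\lesssim\; \frac{\eps^{1+2\nu}+\eps^2}{\gamma_\eps}.
\]

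The main obstacle is the regime-by-regime verification that the right-hand side above is $\mathcal{O}(\eps^\mu)$, and this is where the definition \eqref{nu-eq} of $\nu$ is crucial. For the principal term $\eps^{1+2\nu}/\gamma_\eps$: when $\w\neq 0$ and $1<\alpha<3/2$ one has $\nu=\mu=\alpha-1$ and $\gamma_\eps\sim \eps^\alpha$, which gives exactly $\eps^\mu$; when $\w\neq 0$ and $3/2\le \alpha<2$ one has $\nu=\mu=2-\alpha$ and $\gamma_\eps\sim \eps^{3-\alpha}$, which again gives $\eps^\mu$; when $\w=0$ the choice $\nu=\alpha-1$ combined with $\gamma_\eps\sim \eps^\alpha$ yields $\eps^{\alpha-1}$, and one checks that $\alpha-1\ge \mu$ both when $\alpha\le 3/2$ (equality) and when $\alpha\ge 3/2$ (since $\mu=2-\alpha\le \alpha-1$). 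The subleading term $\eps^2/\gamma_\eps$ is harmless by the same case analysis. The delicacy stems entirely from the fact that $\gamma_\eps$ carries two competing contributions, the $\psi_0$-contribution $\eps^\alpha$ and the $\varphi$-contribution $|\w|^2\eps^{3-\alpha}$, whose relative size switches at $\alpha=3/2$.
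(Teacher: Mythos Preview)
Your proof is correct and follows essentially the same approach as the paper: start from \eqref{p-eps}, bound $|1/F-1/K|$ pointwise using Lemma~\ref{lem-FK} and the Lorentzian lower bound $|K(x,\eps)|^2\gtrsim (x-x_\eps)^2+\eps^{2+2\nu}$ coming from \eqref{g-K} and \eqref{K1-der}, then integrate. The paper phrases the imaginary-part lower bound directly as $|K_2|\gtrsim \eps^{1+\nu}$ (equivalent to your $\gamma_\eps\sim\eps^{1+\nu}$) and writes the conclusion $\lesssim\eps^\mu$ without an explicit case split, whereas you spell out both the justification of $|F|\gtrsim|K|$ and the case analysis for $\eps^{1+2\nu}/\gamma_\eps$ and $\eps^2/\gamma_\eps$; these are the same computations in slightly different packaging.
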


\begin{proof} By \eqref{F-K}
$$
\Big|  \frac{1}{F(x,\eps)} -  \frac{1}{K(x,\eps)}  \Big | \ \lesssim\   \frac{  \eps^{1+2\nu}+\eps^2 }{| F(x,\eps) K(x,\eps)|} \ \lesssim\
\frac{  \eps^{1+2\nu}+\eps^2 }{| K(x,\eps)|^2} \, . \\[4pt]
$$
On the other hand, equations \eqref{g-K} and \eqref{g-eq} imply that
$$
|K_2(x,\eps) | \, \gtrsim\, \eps^{1+\nu}
$$
and, by \eqref{K1-der},
\begin{equation} \label{K1-lowerb}
|K_1(x,\eps) | \, \gtrsim\, \, |x-x_\eps |\, .
\end{equation}
Hence
\begin{align*}
\int_{I_\eps} \Big|  \frac{1}{F(x,\eps)} -  \frac{1}{K(x,\eps)}  \Big |\, dx & \  \lesssim \ \int_{\R} \frac{  \eps^{1+2\nu} + \eps^2 }{|
K(x,\eps)|^2} \, dx \ \lesssim \
 \int_{\R} \frac{ \eps^{1+2\nu} + \eps^2 }{ x^2 +\eps^{2+2\nu}  }\  dx\ \lesssim\ \eps^\mu ,
\end{align*}
and the result follows from \eqref{p-eps}.
\end{proof}

Now we write
\begin{equation*} 
L(x,\eps) = L_1(x,\eps) +i L_2(x,\eps) = x_\eps -x -i \eps^2 g(x_\eps ) = x_\eps -x -i \gamma_\eps.
\end{equation*}
where $L_1$ and $L_2$ stand for the real and imaginary part of $L$. Recall that  $\gamma_\eps$ is given by equation \eqref{gamma-eps}.

\begin{lem} \label{lem-KtoL}
For $\eps$ small enough
\begin{equation} \label{int-KtoL}
\sup_{t>0} \Big | \int_{I_\eps} e^{-it x}   \Big[\, \frac{1}{K(x,\eps)} \,-   \frac{1}{L(x,\eps)}\,  \Big ]  \, dx\,  \Big |  \, \lesssim\,
\eps^\nu\, .
\end{equation}
\end{lem}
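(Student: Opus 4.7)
The plan is to use the identity $\frac{1}{K} - \frac{1}{L} = \frac{L - K}{KL}$ and to estimate the numerator and denominator separately. From \eqref{K-1-2}, the definition $K_2(x,\eps) = -\eps^2 g(x)$, the identity $x_\eps = \beta\eps - \eps^2 f(x_\eps)$ coming from $K_1(x_\eps,\eps) = 0$, and $\gamma_\eps = \eps^2 g(x_\eps)$, a direct computation yields
\[
L(x,\eps) - K(x,\eps) = \eps^2\bigl[f(x) - f(x_\eps)\bigr] + i\eps^2\bigl[g(x) - g(x_\eps)\bigr].
\]
The derivative bound \eqref{f-der}, supplemented by the analogous estimate $\sup_{J_\eps}\eps^2|g^{(j)}(x)| = \mathcal{O}(\eps^{1+\nu-j})$ for $j=1,2$ obtained in the same way from the explicit form \eqref{g-eq}, together with the mean value theorem, gives
\[
|L(x,\eps) - K(x,\eps)| \lesssim \eps^\nu |x - x_\eps| \qquad \text{uniformly for } x \in I_\eps.
\]

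For the lower bounds on $|K|$ and $|L|$, the monotonicity \eqref{K1-der} combined with $K_1(x_\eps,\eps) = 0$ yields $|K_1(x,\eps)| \geq |x - x_\eps|/2$, while $|K_2(x,\eps)| = \eps^2 g(x) \gtrsim \eps^{1+\nu}$ on $J_\eps$ follows from \eqref{g-eq}. Hence $|K(x,\eps)|, |L(x,\eps)| \gtrsim \max(|x - x_\eps|,\, \eps^{1+\nu})$, which produces the pointwise bound
\[
\left|\frac{1}{K(x,\eps)} - \frac{1}{L(x,\eps)}\right| \lesssim \frac{\eps^\nu |x - x_\eps|}{(x - x_\eps)^2 + \eps^{2+2\nu}}.
\]

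A naive $L^1$ integration of the right-hand side over $I_\eps$ produces $\eps^\nu|\log\eps|$, which narrowly misses the claimed $\eps^\nu$ bound. To close this gap, the idea is to exploit the oscillation of $e^{-itx}$. Write $K - L = c\,(x - x_\eps) + r(x)$, where $c = -\eps^2[f'(x_\eps) + ig'(x_\eps)]$ satisfies $|c| \lesssim \eps^\nu$ and the quadratic Taylor remainder obeys $|r(x)| \lesssim \eps^{\nu - 1}(x-x_\eps)^2$. The $r$-contribution can be bounded absolutely and yields $\mathcal{O}(\eps^\nu)$ uniformly in $t$. The linear $c$-term reduces, after $u = x - x_\eps$, to integrals of the form $\int e^{-itu}\,u/(u + i\gamma_\eps)^k\,du$ with $k = 1,2$; the full-line versions evaluate by residues (closing in the lower half-plane at the pole $u = -i\gamma_\eps$) to expressions that are bounded uniformly in $t > 0$ via $\gamma_\eps t\,e^{-\gamma_\eps t} \leq e^{-1}$, while the tails $|u| > \beta\eps/4$ are controlled by one integration by parts, using the lower bounds on $|L|$ at the endpoints. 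The main obstacle is precisely this logarithmic gap: since the integrand is of Hilbert-kernel type near $x_\eps$, absolute-value estimates are not sharp enough, and one must carefully exploit the oscillatory factor together with the first-order Taylor structure of $K - L$ around $x_\eps$ to obtain the bound uniformly in $t > 0$.
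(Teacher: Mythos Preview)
Your overall strategy coincides with the paper's, but two steps in the sketch are not quite right. First, you work with $\tfrac{1}{K}-\tfrac{1}{L}=\tfrac{L-K}{KL}$ and then assert that the linear piece $c\,(x-x_\eps)/(KL)$ reduces to integrals of the form $\int e^{-itu}\,u/(u+i\gamma_\eps)^k\,du$. This is only literally true after $K$ has been replaced by $L$ in the denominator, and the error from that replacement is not addressed in your sketch. The paper makes this systematic via the second-order identity
\[
\frac{1}{K}-\frac{1}{L}=\frac{L-K}{L^2}+\frac{(L-K)^2}{L^2K},
\]
so that the principal term already has the explicit denominator $L^2=(x_\eps-x-i\gamma_\eps)^2$, while the secondary term is shown to be $o(\eps^\nu)$ in $L^1$ using $|L-K|\lesssim\eps^\nu|x-x_\eps|$ together with $|K|\gtrsim|x-x_\eps|$, which gives $\int_{I_\eps}\tfrac{|L-K|^2}{|L|^2|K|}\,dx\lesssim\eps^{2\nu}\log\bigl(1+\beta^2\eps^2/(16\gamma_\eps^2)\bigr)$. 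Without this step your ``linear $c$-term'' still carries the implicit $K$-factor, and the reduction you claim does not follow.

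Second, the tails $|u|>\beta\eps/4$ of $\int e^{-itu}/(u+i\gamma_\eps)\,du$ are not controlled by a single integration by parts: the boundary term is of order $(t\eps)^{-1}$, which blows up as $t\to 0$. The paper avoids this issue entirely by closing the contour along a semicircle of radius exactly $R=\beta\eps/(4\gamma_\eps)$ in the lower half-plane (Lemma~\ref{lem-jn}), so that no tails appear at all. If you prefer the full-line-plus-tails route, the two-sided tail can indeed be bounded, but this requires the odd symmetry of the principal part $1/u$ (reducing to a bounded sine integral) rather than integration by parts.
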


\begin{proof}
Note that by \eqref{K-1-2}
\begin{equation*}
x_\eps = \beta \eps -\eps^2 f(x_\eps )\, .
\end{equation*}
Hence
\begin{equation*}
L_1(x,\eps) -K_1(x,\eps) = \eps^2 (f(x) - f(x_\eps )) .
\end{equation*}
Similarly, it follows from \eqref{g-K} that
\begin{equation*}
L_2(x,\eps) -K_2(x,\eps) = \eps^2 (g(x) - g(x_\eps ) ).
\end{equation*}
The function $g$ obviously  satisfies the same bound \eqref{f-der} as the function $f$:
\begin{equation} \label{g-der}
 \sup_{I_\eps}  \eps^2\, | g^{(j)} | = \mathcal{O}(\eps^{1+\nu-j})\, , \qquad j=0,1,2.
\end{equation}
The mean value theorem thus gives
\begin{equation} \label{K-L-1}
|L(x,\eps) -K(x,\eps)| \, \leq \, \eps^2 \sup_{I_\eps} |f' +i g'|  |x-x_\eps | \, \lesssim \, \eps^\nu |x-x_\eps |
\end{equation}
Moreover, applying \eqref{f-der} for $f$ and $g$ with $j=2$ and Taylor's Theorem we obtain
\begin{equation} \label{K-L-2}
\begin{aligned}
\big | L(x,\eps) -K(x,\eps) -\eps^2 [ f'(x_\eps ) +i g'(x_\eps ) ]\,  (x-x_\eps ) \big |\, \lesssim\, \eps^{\nu-1} \,  (x-x_\eps )^2\, .
\end{aligned}
\end{equation}
The last two bounds, which hold for all $x\in I_\eps$, will be our main tools in estimating the integral in \eqref{int-KtoL}. Following
\cite{jn} we decompose
\begin{equation} \label{aux-1}
\int_{I_\eps} e^{-it x}   \Big[\, \frac{1}{K(x,\eps)} \,-   \frac{1}{L(x,\eps)}\,  \Big ]  \, dx = \int_{I_\eps} e^{-it x}   \Big[\,
\frac{L(x,\eps) -K(x,\eps) }{L^2(x,\eps)} \,+   \frac{(L(x,\eps) -K(x,\eps) )^2}{L^2(x,\eps) K(x,\eps)}\,  \Big ]  \, dx\, .
\end{equation}
To estimate the first integral on the right hand side we use \eqref{K-L-2} and write
\begin{align*}
\Big |  \int_{I_\eps} e^{-it x} \,   \frac{L(x,\eps) -K(x,\eps) }{L^2(x,\eps)} \, dx \Big | \, & \lesssim \, \eps^2\, | f'(x_\eps ) +i
g'(x_\eps )   \big | \, \Big|  \int_{I_\eps} e^{-it x}\,   \frac{x-x_\eps }{L^2(x,\eps)} \, dx \, \Big |
+ \eps^{\nu-1} \int_{I_\eps} \frac{(x-x_\eps )^2}{|L(x,\eps)|^2}\, dx \\[5pt]
& \lesssim \, \eps^\nu \, \Big |  \int_{I_\eps} e^{-it x}\,   \frac{x-x_\eps }{L^2(x,\eps)} \, dx \Big | + \eps^\nu\, ,
\end{align*}
where we have used equations \eqref{f-der} and \eqref{g-der},  and the fact that
$$
\int_{I_\eps} \frac{(x-x_\eps )^2}{|L(x,\eps)|^2}\, dx = \int_{I_\eps} \frac{(x-x_\eps )^2}{(x-x_\eps )^2 + \gamma^2(\eps)}\, dx =
\mathcal{O}(\eps),
$$
see \eqref{gamma-eps}, \eqref{I-eps}. For the remaining integral we find,
\begin{align}
 \Big |  \int_{I_\eps} e^{-it x}\,   \frac{x-x_\eps }{L^2(x,\eps)} \, dx \Big |  & =  \Big |  \int_{I_\eps} e^{-it x}\, \Big(
 \frac{1}{L(x,\eps)} +\frac{i \gamma_\eps}{L^2(x,\eps)}\Big ) \, dx \Big | \,  \leq \,
   \Big |  \int_{I_\eps}   \frac{e^{-it x} }{L(x,\eps)} \, dx \Big |
   +  \int_{I_\eps}  \frac{\gamma_\eps }{(x-x_\eps )^2 +\gamma^2 (\eps)} \, dx\, . \label{aux-2}
\end{align}
The second integral on the right hand side of \eqref{aux-2} satisfies
$$
 \int_{I_\eps}  \frac{\gamma_\eps }{(x-x_\eps )^2 +\gamma^2 (\eps)} \, dx\ \leq\  \int_\R  \frac{\gamma_\eps }{(x-x_\eps )^2 +\gamma^2 (\eps)}
 \, dx\, = \pi.
$$
In view of  Lemma \ref{lem-jn} the first term on the right hand side of \eqref{aux-2} is bounded, for $\eps$ sufficiently small, by
\begin{align} \label{int-jn}
 \Big |  \int_{I_\eps}   \frac{e^{-it x} }{L(x,\eps)} \, dx \Big |  & =\Big |  \int_{-\frac{\beta\eps}{4}}^{\frac{\beta\eps}{4}} \,
 \frac{e^{-it \xi} }{\xi +i \gamma_\eps } \, d\xi \, \Big |  = \Big |  \int_{-\frac{\beta\eps}{4\gamma_\eps}}^{\frac{\beta\eps}{4\gamma_\eps}}
 \,  \frac{e^{-it\gamma_\eps y }}{y +i  } \, d y \, \Big |
   \, \leq 	\, C ,
\end{align}
where $C$ is a constant  independent of $\eps$ and $t$. Here we have used the fact that $\frac{\eps}{\gamma_\eps} \to \infty$ as $\eps\searrow
0$.  This finishes our discussion of the first term on the right hand side of \eqref{aux-1}. To control the second term in \eqref{aux-1} we
use the bounds \eqref{K1-lowerb} and  \eqref{K-L-1}, and compute
\begin{align*}
\int_{I_\eps}   \frac{|L(x,\eps) -K(x,\eps) |^2}{|L (x,\eps)|^2  |K(x,\eps)|}   \, dx\, & \lesssim\, \eps^{2\nu} \int_{I_\eps}
\frac{|x-x_\eps | }{|L (x,\eps)|^2}   \, dx =
 \eps^{2\nu}\int_{-\frac{\beta\eps}{4}}^{\frac{\beta\eps}{4}} \ \frac{|y|}{y^2 +\gamma^2(\eps)}\, dy
 =  \eps^{2\nu}\!\! \log\Big(1+\frac{\beta^2 \eps^2}{16\, \gamma^2(\eps)}\Big) = o(\eps^\nu),
\end{align*}
cf.~\eqref{gamma-asymp}. This completes the proof of the lemma.
\end{proof}

\begin{lem} \label{lem-p-eps}
For $\eps$ small enough
\begin{equation} \label{peps-1}
\sup_{t>0} \big | p_\eps(t) -  e^{-it (x_\eps -i \gamma_\eps)}  \big |  \, \lesssim\, \eps^\mu.
\end{equation}
\end{lem}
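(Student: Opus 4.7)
The plan is to compare $p_\eps(t)$ with $e^{-it(x_\eps - i\gamma_\eps)}$ through the Lorentzian $\,{\rm Im}\,[1/L(x,\eps)] = \gamma_\eps/((x-x_\eps)^2 + \gamma_\eps^2)$. By \eqref{eq-SchurVer1} and \eqref{p-eps} we have $p_\eps(t) = \tfrac{1}{\pi}\int_{I_\eps} e^{-itx}\, {\rm Im}\,[1/F(x,\eps)]\, dx$, and a standard contour argument (closing in the lower half plane for $t>0$) yields the key identity
\begin{equation*}
\frac{1}{\pi}\int_{\R} e^{-itx}\, \frac{\gamma_\eps}{(x-x_\eps)^2 + \gamma_\eps^2}\, dx\, = \, e^{-it x_\eps}\, e^{- t\gamma_\eps}\, = \, e^{-it(x_\eps - i\gamma_\eps)}\, .
\end{equation*}
Thus the proof reduces to three steps: replace $1/F$ by $1/K$ on $I_\eps$ (this is exactly Lemma \ref{lem-FtoK}); replace $1/K$ by $1/L$ on $I_\eps$; and estimate the Lorentzian tail outside $I_\eps$.

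The first step contributes $\eps^\mu$ by Lemma \ref{lem-FtoK}. For the tail, the elementary bound
\begin{equation*}
\frac{1}{\pi}\int_{\R\setminus I_\eps} \frac{\gamma_\eps}{(x-x_\eps)^2 + \gamma_\eps^2}\, dx\, \lesssim\, \frac{\gamma_\eps}{\eps}
\end{equation*}
together with \eqref{gamma-asymp}, which gives $\gamma_\eps \lesssim \eps^{\min(\alpha, 3-\alpha)}$, produces $\gamma_\eps/\eps \lesssim \eps^{\min(\alpha-1, 2-\alpha)} = \eps^\mu$.

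The essential step is the replacement of $1/K$ by $1/L$ on $I_\eps$. Using ${\rm Im}(z) = (z-\bar z)/(2i)$, we decompose
\begin{equation*}
\int_{I_\eps} e^{-itx}\, \big[\, {\rm Im}(1/K) - {\rm Im}(1/L)\, \big]\, dx\, = \, \frac{1}{2i}\int_{I_\eps} e^{-itx}\, \big[\, (1/K - 1/L)\, -\, (1/\bar K - 1/\bar L)\, \big]\, dx\, .
\end{equation*}
The first summand is bounded by $\eps^\nu$ directly by Lemma \ref{lem-KtoL}, and since $\nu \geq \mu$ in both alternatives ($\nu = \mu$ if $\w\neq 0$, and $\nu = \alpha-1 \geq \mu$ if $\w=0$, because $\mu = \min(\alpha-1, 2-\alpha)$), this contribution is admissible. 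The second summand is the \emph{main technical hurdle}: one needs to rerun the proof of Lemma \ref{lem-KtoL} with $K, L$ replaced by their complex conjugates $\bar K, \bar L$. The pointwise estimates \eqref{K-L-1}, \eqref{K-L-2}, and \eqref{K1-lowerb} are all invariant under complex conjugation, so the only delicate point is a counterpart of Lemma \ref{lem-jn} for the Cauchy kernel $(\xi - i\gamma_\eps)^{-1}$, which holds uniformly for $t>0$ by the symmetric contour argument (the pole now lies in the lower half plane, outside the integration contour closed below). Once this is in hand, the triangle inequality combines the three steps into the claimed bound $\eps^\mu$.
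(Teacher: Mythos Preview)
Your proof is correct and follows essentially the same three-step approach as the paper (replace $1/F$ by $1/K$ via Lemma~\ref{lem-FtoK}, replace $1/K$ by $1/L$ via Lemma~\ref{lem-KtoL}, and estimate the Lorentzian tail on $\R\setminus I_\eps$). In fact you are more careful than the paper at one point: the paper invokes Lemma~\ref{lem-KtoL} directly, but that lemma bounds $\int_{I_\eps} e^{-itx}(1/K-1/L)\,dx$ rather than $\int_{I_\eps} e^{-itx}\,{\rm Im}(1/K-1/L)\,dx$, and since $e^{-itx}$ is not real these are not obviously equivalent; your decomposition via $(z-\bar z)/(2i)$ and the observation that the proof of Lemma~\ref{lem-KtoL} (together with the trivial variant of Lemma~\ref{lem-jn} for $1/(y-i)$) applies verbatim to the conjugate pair $\bar K,\bar L$ is exactly what is needed to close this small gap.
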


\begin{proof}
An explicit calculation gives
\begin{equation} \label{expl-lorenz}
\frac 1\pi \int_\R e^{-ixt}\  {\rm Im} \Big ( \frac{1}{L(x,\eps)} \Big)\, dx = \frac 1\pi \int_\R \frac{e^{-ixt}\, \gamma_\eps}{(x-x_\eps )^2
+\gamma^2(\eps)}\, dx = e^{-it(x_\eps-i \gamma_\eps)}\, .
\end{equation}
Hence in view of Lemmas \ref{lem-FtoK} and \ref{lem-KtoL},
$$
\big | p_\eps(t) -  e^{-it(x_\eps -i \gamma_\eps)}  \big |   \, \lesssim\, \eps^\mu + \frac 1\pi  \int_{\R\setminus I_\eps}
\frac{\gamma_\eps}{(x-x_\eps )^2 +\gamma^2(\eps)}\, dx \, .
$$
However,
$$
\int_{\R\setminus I_\eps} \frac{\gamma_\eps}{(x-x_\eps )^2 +\gamma^2(\eps)}\, dx = 2 \int_{\frac{\beta\eps}{4}}^\infty \frac{\gamma_\eps}{y^2
+\gamma^2(\eps)}\, dy \, \leq\,
2 \int_{\frac{\beta\eps}{4}}^\infty \frac{\gamma_\eps}{y^2}\, dy = \frac{8 \gamma_\eps}{\beta\eps} = \mathcal{O}(\eps^\nu)\, .
$$
This completes the proof.
\end{proof}

\begin{proof}[\bf Proof of Theorem \ref{thm-1}]
With Lemma \ref{lem-p-eps} at hand, it remains to reproduce the argument of Hunziker \cite{hun} , see also \cite{jn}. First we apply equation
\eqref{peps-1} with $t=0$ to get
\begin{equation}  \label{eq-hunz-1}
\big | \LL \psi_0, \id_{I_\eps} (P_\m(A,\eps))\, \psi_0\RR -1 \big | \ \lesssim \  \eps^\mu\, ,
\end{equation}
and consequently,
\begin{equation} \label{eq-id}
\| \big(1- \id_{I_\eps} (P_\m(A,\eps)) \big)^{\frac 12}\, \psi_0\|_2 \ \lesssim\ \eps^\mu\, .
\end{equation}
This implies
\begin{align}
\big | \LL \psi_0, \, e^{-it P_\m(A,\eps)}\, \psi_0\RR -p_\eps(t) \big | & = \big | \LL \big(1- \id_{I_\eps} (P_\m(A,\eps)) \big)^{\frac 12}\,
\psi_0,\, e^{-it P_\m(A,\eps)} \big(1- \id_{I_\eps} (P_\m(A,\eps)) \big)^{\frac 12}\, \psi_0 \RR \big)  \nonumber \\
& \leq \| \big(1- \id_{I_\eps} (P_\m(A,\eps)) \big)^{\frac 12}\, \psi_0\|_2\, \lesssim\ \eps^\mu\, . \label{eq-hunz-2}
\end{align}
Inequality \eqref{exp-decay-1} thus follows from  Lemma \ref{lem-p-eps}.
\end{proof}

\begin{rem}[\bf Radially symmetric fields] \label{rem-radial}
If $B$ is radial, then $h$ is also radial and $\chi=0$, \cite[Sec.~3.4]{ko}. Consequently, if $V$ is radial as well, then by \eqref{w-eq}
and \eqref{varphi-radial},
$$
\w  = \LL V \psi_0, \, \varphi\RR =0,
$$
and in view of equations \eqref{g-eq}, \eqref{gamma-eps} we have
\begin{equation*}
\gamma_\eps = \eta\,  \beta^{\alpha} \eps^{\alpha}\big( 1+ \mathcal{O}(\eps^{\alpha-1})\big)\, .
\end{equation*}
On the other hand, equation \eqref{gamma-asymp} shows that if either $B$ or $V$ are not radially symmetric, then $\gamma_\eps$ generically
increases and consequently the mean lifetime of the resonance decreases.
\end{rem}

\subsection{Integer flux}
\label{ssec-integer-flux}
Similarly to the case of non-integer flux, the resolvent expansion simplifies slightly when zero is a non-degenerate eigenvalue of $P(A)$.
Indeed, equation \eqref{res-exp-2} now holds
with the operator $\K$ given by
\begin{equation} \label{K-op-2}
\K   = \frac{1}{4\pi} \big[ \,\Pi_{11} +\overline{ \varkappa}\, \Pi_{12} + \varkappa\, \Pi_{21} + |\varkappa|^2 \, \Pi_{22} \big] +
\frac{\pi  \, \po}{4 \|e^h\|_2^2}\, ,
\end{equation}
where the operators $\Pi_{jk}$ are defined in \eqref{Pi-12}.
Since $P(A)$ has two virtual bound states, $\varphi_1$ and $\varphi_2$ (see Section \ref{ssec-res-expansion}),  we introduce the coefficients
\begin{equation*}
\w_1  = \LL V \psi_0,  \varphi_1\RR  \qquad \text{and} \qquad \w_2  = \LL V \psi_0, \varphi_2\RR\, .
\end{equation*}
It turns out that the asymptotics of $\gamma_\eps$ depends on whether $\w_2\neq 0$ or $\w_2=0$. We start with the former case.

\begin{thm} \label{thm-2}
Let $\alpha =2$.  Let $B$ satisfy Assumption \eqref{ass-B} and assume that $V \lesssim \langle \, \cdot\, \rangle ^{-\rho}$ for some
$\rho>6$. Suppose moreover that \eqref{cond-jn} holds, and that
$\w_2\neq 0$. Then there exists a constant $C$ such that for sufficiently small $\eps$
\begin{equation*}
\sup_{t>0} \big | \LL \psi_0, \, e^{-it P_\m(A,\eps)}\, \psi_0 \RR - e^{-it (x_\eps -i \gamma_\eps)} \big|  \, \leq\, \frac{C}{|\log \eps| }\,
,
\end{equation*}
where $x_\eps $ and $\gamma_\eps$ satisfy
\begin{align}
x_\eps &= \beta \eps\,    \big[1+    \mathcal{O}\big(|\log \eps|^{-1}  \big)\big]    \label{x0-eps-2} \\
\gamma_\eps &= \frac{\eps\, |\w_2|^2}{\beta (\log \eps)^2}   \,    \big[1+    \mathcal{O}\big(|\log \eps|^{-1}  \big)\big]   .
\label{gamma-eps-2}
\end{align}
\end{thm}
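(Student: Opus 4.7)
The plan is to follow the template of Theorem~\ref{thm-1}, replacing the non-integer resolvent expansion with Proposition~\ref{prop:degenerate-int-exp}. Starting from the SLFG representation \eqref{eq-SchurVer1}, it suffices to analyze $F(z,\eps)$ from \eqref{F-eq}. Feeding the expansion \eqref{res-exp-2} into the definition \eqref{G-eq} of $G(z)$, I extract the effective scalar
\begin{equation*}
K(z,\eps) := \beta\eps - z - \eps^2 \left[\, \frac{|\w_2|^2}{\pi\, z\,(\log z + m_\0 - i\pi)} - (\log z - i\pi)\, \tau\, \right],
\end{equation*}
where $\tau := \LL V\psi_0, \K V\psi_0\RR \in \R$ (the reality follows from $\K = \K^*$). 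For $z = x > 0$, splitting $K = K_1 + i K_2$ and using $\log x \sim \log \eps$ on $J_\eps = (\beta\eps/2,\, 3\beta\eps/2)$, the dominant part of $K_2(x,\eps)$ is $-\eps^2 |\w_2|^2/[x((\log x+m_\0)^2+\pi^2)]$, of size $\eps|\w_2|^2/(\beta(\log\eps)^2)$; this will identify $\gamma_\eps$.

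I first prove the analog of Lemma~\ref{lem-FK}. The rank-one singular piece of $G(z)$ has operator norm $\sim 1/(\eps|\log\eps|)$ on $J_\eps$, hence $\eps\|G(z)\| = O(1/|\log\eps|)$ and $[U + \eps G(z)]^{-1}$ is uniformly bounded. The $o(1)$ remainder of \eqref{res-exp-2} contributes $o(\eps^2)$ to $F - K$, and its $T_0$-part is real, so only negligible imaginary error results. The Neumann correction $\eps^3\,\LL\psi_0,\, vU G(z)(U+\eps G)^{-1}\, G(z) Uv\psi_0\RR$ is bounded by $\eps^3\|G u\|^2 = O(\eps/(\log\eps)^2)$ via Cauchy--Schwarz. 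Combining these, $\sup_{x\in J_\eps}|F - K| \lesssim \eps^2 + \eps/(\log\eps)^2$.

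Next I verify $K_1(\beta\eps/2,\eps) > 0 > K_1(3\beta\eps/2,\eps)$ and $\partial_x K_1 \leq -1/2$ on $J_\eps$ for small $\eps$, yielding a unique zero $x_\eps$; solving $K_1(x_\eps, \eps) = 0$ to leading order produces \eqref{x0-eps-2}, and $\gamma_\eps := -K_2(x_\eps, \eps)$ produces \eqref{gamma-eps-2}. The remaining analysis mimics Lemmas~\ref{lem-FtoK}--\ref{lem-p-eps}: I pass from $1/F$ to $1/K$, then to $1/L$ with $L(x,\eps) = x_\eps - x - i\gamma_\eps$, and finally use \eqref{expl-lorenz} to identify the Lorentzian integral with $e^{-it(x_\eps - i\gamma_\eps)}$. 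Lemma~\ref{lem-jn}'s oscillatory bound still applies because $\eps/\gamma_\eps \sim (\log\eps)^2 \to \infty$, and the tail over $\R\setminus I_\eps$ is $O(\gamma_\eps/\eps) = O((\log\eps)^{-2})$. Hunziker's argument \eqref{eq-hunz-1}--\eqref{eq-hunz-2} then closes the estimate with error $O(|\log\eps|^{-1})$.

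The main obstacle is the sharp bookkeeping of logarithms in the analog of Lemma~\ref{lem-KtoL}. Writing $f, g$ for the real and imaginary parts of the $\eps^2$-correction in $K$, direct differentiation gives $\eps^2 |f^{(j)}(x)|,\, \eps^2 |g^{(j)}(x)| \lesssim \eps^{1-j}/|\log\eps|$ uniformly on $J_\eps$; in particular $\eps^2 f'(x_\eps),\, \eps^2 g'(x_\eps) = O(1/|\log\eps|)$. The mean value theorem then gives $|K(x) - L(x)| \lesssim |x - x_\eps|/|\log\eps|$, and applying Lemma~\ref{lem-jn}'s oscillatory estimate to the linear-in-$(x-x_\eps)$ piece converts this pointwise bound into exactly the $O(|\log\eps|^{-1})$ error appearing in the statement, rather than the looser $O(\log|\log\eps|/|\log\eps|)$ that naive absolute-value estimates would produce.
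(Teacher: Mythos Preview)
Your outline has the right skeleton, but there is a genuine gap at the very first approximation step. You bound the Neumann correction
\[
\eps^3 \LL \psi_0,\, vU\, G(x)\, [U+\eps G(x)]^{-1}\, G(x)\, Uv\, \psi_0\RR
\]
by $\eps^3\|G(x)Uv\psi_0\|^2 = \mathcal O(\eps/(\log\eps)^2)$ and put it into the error $F-K$. But $\gamma_\eps$ itself is of size $\eps/(\log\eps)^2$, so in the analogue of Lemma~\ref{lem-FtoK} you get
\[
\int_{I_\eps}\Big|\frac{1}{F}-\frac{1}{K}\Big|\,dx \ \lesssim\ \frac{\sup_{I_\eps}|F-K|}{\gamma_\eps} \ =\ \mathcal O(1),
\]
not $\mathcal O(|\log\eps|^{-1})$. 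Your error bound $|F-K|\lesssim \eps/(\log\eps)^2$ is exactly one power of $|\log\eps|$ too weak, and the subsequent steps cannot recover this.

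The paper's proof addresses precisely this point in Lemma~\ref{lem-F-exp}: because the Neumann correction is of the same size as $\gamma_\eps$, one must extract its leading part explicitly. Writing $[U+\eps G(x)]^{-1}=U+\mathcal O(|\log\eps|^{-1})$ gives
\[
\eps^3 \LL \psi_0, vU G(x) U G(x) Uv\,\psi_0\RR = \eps^3 h(x) + \mathcal O\big(\eps\,|\log\eps|^{-3}\big),\qquad
h(x)=\frac{|\w_2|^2\,\LL\varphi_2,V\varphi_2\RR}{\pi^2(x\log x)^2},
\]
and the real term $\eps^3 h(x)$ is moved into $K$. With this refined $K$ one has $\sup_{J_\eps}|F-K|\lesssim \eps/|\log\eps|^3$, and the $F\to K$ step now yields $\mathcal O(|\log\eps|^{-1})$ as required. (Incidentally, the $(\log z - i\pi)\,\tau$ term you include is only of size $\eps^2|\log\eps|\ll \eps/|\log\eps|^3$, so it is harmless either way; the missing ingredient is $\eps^3 h$, not $\tau$.) Once you make this correction, the rest of your argument goes through essentially as in the paper.
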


The proof of Theorem \ref{thm-2} is analogous to that of Theorem \ref{thm-1}. The only significant difference is that we have to expand the
function $F(x,\eps)$ (see \eqref{F-eq}) to a higher
order. To this end we introduce auxiliary functions
\begin{equation} \label{fgh}
\g(x) = \frac{|\w_2|^2}{x\, (\log x)^2} \, , \qquad f(x) = \frac{\g(x)}{\pi} \,  (\log x+m_\0), \qquad h(x) = \frac{|\w_2|^2\, \LL  \varphi_2,
V  \varphi_2 \RR}{\pi^2 \, (x \log x)^2} \\[3pt]
\end{equation} 
defined on $J_\eps$, cf.~\eqref{J-eps}. Here $m_\0$ is the constant introduced in Proposition \ref{prop:degenerate-int-exp}.

\begin{lem} \label{lem-F-exp}
For $\eps$ small enough we have
\begin{equation*}
F(x,\eps) = \beta \eps -x -\eps^2 f(x) + \eps^3 h(x) -i \eps^2 \g(x) +   \mathcal{O}\big(\eps |\log\eps|^{-3} \big)
\end{equation*}
for $x \in J_\eps$, with the error term uniform in $x\in  J_\eps$.
\end{lem}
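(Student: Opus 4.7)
Proof plan. By the SLFG reduction carried out in Section \ref{ssec-survival},
\[
F(x,\eps) = \beta\eps - x - \eps^2 \langle \psi_0, vU\, G(x)\, Uv \psi_0\rangle + \eps^3 \langle \psi_0, vU\, G(x)[U + \eps G(x)]^{-1} G(x)\, Uv\psi_0\rangle.
\]
Substituting \eqref{res-exp-2} into the definition of $G(x)=v R_0(x) v$, the $-x^{-1}\po$ singularity cancels with $x^{-1}v\po v$, leaving
\[
G(x) = \frac{v\varphi_2\langle v\varphi_2,\cdot\rangle}{\pi x(\log x + m_\0 - i\pi)} - (\log x - i\pi)\, v\K v + vT_0 v + o(1),
\]
valid in the uniform operator topology on $L^2(\R^2)$, uniformly for $x \in J_\eps$ as $\eps\searrow 0$. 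On $J_\eps$ one has $\|\eps G(x)\|_{2\to 2} = \mathcal O(|\log \eps|^{-1})$, so $U + \eps G(x)$ is invertible by a Neumann series, with $[U+\eps G(x)]^{-1} = U - \eps UG(x)U + \mathcal O(|\log\eps|^{-2})$ in operator norm.

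For the $\eps^2$ piece I insert the expansion of $G(x)$. The rank-one term contributes $-\eps^2|\w_2|^2/[\pi x(\log x + m_\0 - i\pi)]$, using the identity $\langle v\varphi_2, Uv\psi_0\rangle = \overline{\w_2}$ (which follows from $V = vUv$ being real); the $(\log x - i\pi)\, v\K v$ term gives a bounded real multiple of $-\eps^2(\log x - i\pi)$, of size $\mathcal O(\eps^2|\log\eps|)$; and the $vT_0 v + o(1)$ piece contributes $o(\eps^2)$. Rationalising the rank-one factor as $(\log x + m_\0 + i\pi)/[(\log x + m_\0)^2 + \pi^2]$ and expanding $(\log x + m_\0)^2 + \pi^2 = (\log x)^2(1 + \mathcal O(|\log\eps|^{-1}))$ produces $-\eps^2 f(x) - i\eps^2 \g(x)$; the resulting imaginary-part error is already $\mathcal O(\eps|\log\eps|^{-3})$, while the real-part error is $\mathcal O(\eps|\log\eps|^{-2})$, to be absorbed below.

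For the $\eps^3$ piece, the dominant Neumann term is $\eps^3\langle \psi_0, vU G(x) U G(x) Uv\psi_0\rangle$. Writing $G(x) = M(x)\Pi_2 + \mathcal O(|\log\eps|)$ with $M(x) = 1/[\pi x(\log x + m_\0 - i\pi)]$ and $\Pi_2 = v\varphi_2 \langle v\varphi_2,\cdot\rangle$, and exploiting the algebraic identity $\Pi_2 U \Pi_2 = \langle \varphi_2, V\varphi_2\rangle\, \Pi_2$, the main contribution reduces to $\eps^3 |\w_2|^2 \langle \varphi_2, V\varphi_2\rangle\, M(x)^2$. Approximating $M(x)^2 = 1/[\pi^2 x^2 (\log x)^2] + \mathcal O(1/[x^2(\log x)^3])$ to the required order yields precisely $\eps^3 h(x)$ with residual $\mathcal O(\eps|\log\eps|^{-3})$. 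Cross-terms pairing the rank-one piece of $G$ with the $v\K v$ or $vT_0 v$ pieces are $\mathcal O(\eps^2)$, and higher-order Neumann iterates $\eps^{k+2}\langle \psi_0, vU (GU)^{k-1} G Uv\psi_0\rangle$ for $k\ge 2$ are bounded by $\mathcal O(\eps|\log\eps|^{-3})$ using $\|\eps G(x)\| = \mathcal O(|\log\eps|^{-1})$; all of these are absorbed in the claimed error.

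The main obstacle is the bookkeeping needed to confirm that the $\mathcal O(\eps|\log\eps|^{-2})$ real-part residual from rationalising the rank-one term at order $\eps^2$ is compensated, up to $\mathcal O(\eps|\log\eps|^{-3})$, by the leading real contribution $\eps^3 h(x)$ extracted from the $\eps^3$ term. It is this delicate balancing of subleading logarithmic corrections between the $\eps^2$ and $\eps^3$ pieces that dictates the precise forms of $f$, $\g$ and $h$ appearing in the statement, and is the technical heart of the proof.
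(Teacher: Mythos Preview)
Your overall strategy matches the paper's: insert the resolvent expansion \eqref{res-exp-2} into $G(x)$, isolate the dominant rank-one $\varphi_2$ piece, expand $[U+\eps G(x)]^{-1}=U+\mathcal O(|\log\eps|^{-1})$ by Neumann series, and track the leading contributions. The paper streamlines this by replacing $G(x)$ outright with the rank-one operator $\G(x)=v\Pi_{22}v/[\pi x(\log x+m_\0-i\pi)]$ and absorbing the $(\log x-i\pi)v\K v$ and $vT_0 v$ pieces into an $\mathcal O(\eps^2|\log\eps|)$ remainder, but your piece-by-piece treatment reaches the same place.

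The gap is in your final paragraph. You correctly observe that rationalising $1/(\log x+m_\0-i\pi)$ and replacing $(\log x+m_\0)^2+\pi^2$ by $(\log x)^2$ leaves a real residual of order $\eps|\log\eps|^{-2}$; its leading part is $-2m_\0\,\eps^2|\w_2|^2/[\pi x(\log x)^2]$, governed by the constant $m_\0$ from the resolvent expansion. But your claim that this is \emph{compensated} by $\eps^3 h(x)$ is wrong: $\eps^3 h(x)$ comes from the separate contribution $\eps^3\langle\psi_0,vU\G U\G Uv\psi_0\rangle$ and carries the unrelated coefficient $\langle\varphi_2,V\varphi_2\rangle$. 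Both quantities are of size $\eps|\log\eps|^{-2}$ on $J_\eps$, but they are independent---one depends on $m_\0$, the other on $\langle\varphi_2,V\varphi_2\rangle$---and no cancellation occurs between them. The forms of $f,\g,h$ are not tuned to achieve any such balancing; $h$ is simply the leading piece of the $\eps^3$ term, not a counterterm. The paper's own proof asserts directly that $\eps^2|\w_2|^2/[\pi x(\log x+m_\0-i\pi)]=\eps^2 f(x)+i\eps^2\g(x)+\mathcal O(\eps|\log\eps|^{-3})$ without isolating this point, so your instinct that something delicate lurks here is not misplaced; but the compensation mechanism you propose does not exist.
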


\begin{proof}
By  Proposition \ref{prop:degenerate-int-exp} and equations \eqref{F-eq}, \eqref{G-eq},
\begin{align}   \label{F-x}
F(x, \eps) & = \beta \eps -x -\eps^2\LL \psi_0, vU  \G(x) Uv\, \psi_0\RR  +\eps^3\LL \psi_0, vU \G(x) [U+\eps \G(x) ]^{-1} \G(x)  Uv\,
\psi_0\RR +  \mathcal{O}\big( \eps^2\, |\log \eps|\, \big),
\end{align}
where $\G(x)$ is a bounded operator on $L^2(\R^2)$ given by
\begin{equation*}
\G(x)  = \frac{v\, \Pi_{22}\,  v}{\pi\,  x\, (\log x+ m_\0 -i \pi)} \, .
\end{equation*}
For any $x\in J_\eps$ we have
\begin{align*}
\eps^2\LL \psi_0, vU  \G(x) Uv\, \psi_0\RR & = \frac{\eps^2\, |\w_2|^2}{\pi x (\log x +m_\0 -i\pi)} = \eps^2 f(x) + i \eps^2 \g(x) +
\mathcal{O}\big(\eps |\log\eps|^{-3} \big)\, .
\end{align*}
On the other hand,
\begin{equation} \label{G-bound}
\sup_{x\in J_\eps} \eps \| \G(x)\| =   \mathcal{O}\big(|\log\eps|^{-1} \big)
\end{equation}
So the last term on the right hand side of \eqref{F-x} is of order $\mathcal{O}\big(\eps |\log\eps|^{-2} \big)$, which is the same as the
order of $\eps^2\g(x)$. Hence in order to expand $F(x,\eps)$ to the desired level of precision we have to extract the leading term from $\LL
\psi_0, vU \G(x) [U+\eps  \G(x) ]^{-1} \G(x) Uv\, \psi_0\RR $. Since $U^2 = \id$, equation \eqref{G-bound}  gives
$$
[U+\eps  \G(x) ]^{-1} = [\id+\eps U  \G(x) ]^{-1}\, U  =  U +  \mathcal{O}\big(|\log\eps|^{-1} \big)\, .
$$
This implies that the expansion
\begin{align*}
 \eps^3 \LL \psi_0, vU \G(x) [U+\eps v \G(x) v]^{-1} \G(x)  Uv\, \psi_0\RR  & = \eps^3 \LL \psi_0, vU \G(x) U \G(x)  Uv\, \psi_0\RR +
 \mathcal{O}\big(\eps |\log\eps|^{-3} \big)\\[3pt]
&  = \frac{\eps^3 |\w_2|^2\, \LL  \varphi_2, V \, \varphi_2 \RR}{\pi^2\, x^2 (\log x +m_\0 -i\pi)^2} +  \mathcal{O}\big(\eps |\log\eps|^{-3}
\big) \\[3pt]
& = \eps^3 h(x) +  \mathcal{O}\big(\eps |\log\eps|^{-3} \big)
\end{align*}
holds uniformly in $x$ on  $J_\eps$. Inserting the above estimates into \eqref{F-x} completes the proof.
\end{proof}

From this point, we follow the proof of Theorem \ref{thm-1}. Hence we set
$$
K(x,\eps) = K_1(x,\eps)+ iK_2(x,\eps) = \beta  \eps -x -\eps^2 f(x) + \eps^3 h(x) -i \eps^2 \g(x) ,
$$
where $K_1$ and $K_2$ denote the real and imaginary part of $K$ respectively.
Furthermore, we note that
\begin{equation} \label{fgh-der}
 \sup_{J_\eps} \eps^2\, | \g^{(j)} | +  \sup_{J_\eps} \eps^3\, | h^{(j)} |    = \mathcal{O}(\eps^{1-j}\, |\log \eps|^{-2})\, , \qquad
 \sup_{J_\eps} \eps^2\, | f^{(j)} |    = \mathcal{O}(\eps^{1-j}\, |\log \eps|^{-1})\,
\end{equation}
holds for $j=0,1,2$.

\begin{lem} \label{lem-K-real-2}
For $\eps$ small enough we have
\begin{equation*}
K_1(x,\eps) = \beta \eps -x +  \mathcal{O}(\eps |\log\eps|^{-1})
\end{equation*}
for $x\in J_\eps$. Furthermore, there exits a unique $x_\eps \in  J_\eps$ such that $K_1(x_\eps ,\eps)=0$.
\end{lem}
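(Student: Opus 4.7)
The plan is to mimic the proof of Lemma \ref{lem-K-real} almost verbatim, the only new ingredients being the extra term $\eps^3 h(x)$ and the replacement of the bounds \eqref{f-der} by \eqref{fgh-der}. Since
$$
K_1(x,\eps) = \beta \eps - x - \eps^2 f(x) + \eps^3 h(x),
$$
the first part of the statement will follow immediately from the $j=0$ case of \eqref{fgh-der}, which gives
$$
\sup_{x\in J_\eps}\bigl|\eps^2 f(x)\bigr| = \mathcal{O}\bigl(\eps|\log\eps|^{-1}\bigr), \qquad \sup_{x\in J_\eps}\bigl|\eps^3 h(x)\bigr| = \mathcal{O}\bigl(\eps|\log\eps|^{-2}\bigr),
$$
and both are dominated by $\eps|\log\eps|^{-1}$.

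For the uniqueness of the zero $x_\eps$, I would argue as in Lemma \ref{lem-K-real}. Differentiating in $x$ and applying \eqref{fgh-der} with $j=1$ yields
$$
\partial_x K_1(x,\eps) = -1 - \eps^2 f'(x) + \eps^3 h'(x) = -1 + \mathcal{O}\bigl(|\log\eps|^{-1}\bigr),
$$
uniformly on $J_\eps$. Hence for $\eps$ sufficiently small $\partial_x K_1(x,\eps) \leq -1/2$ on $J_\eps$, so $K_1(\,\cdot\,,\eps)$ is strictly decreasing there. Evaluating at the endpoints, the already-established expansion gives
$$
K_1\bigl(\tfrac{\beta\eps}{2},\eps\bigr) = \tfrac{\beta\eps}{2} + \mathcal{O}\bigl(\eps|\log\eps|^{-1}\bigr) > 0, \qquad K_1\bigl(\tfrac{3\beta\eps}{2},\eps\bigr) = -\tfrac{\beta\eps}{2} + \mathcal{O}\bigl(\eps|\log\eps|^{-1}\bigr) < 0
$$
for $\eps$ small enough. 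The intermediate value theorem, together with strict monotonicity, then produces a unique $x_\eps \in J_\eps$ with $K_1(x_\eps,\eps) = 0$.

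There is no real obstacle here; the logarithmic weakening of the error in \eqref{fgh-der} compared with \eqref{f-der} is more than compensated by the fact that the leading terms $\beta\eps$ and $-x$ of $K_1$ are themselves of order $\eps$ on $J_\eps$, so the dominant slope of $-1$ in $\partial_x K_1$ survives for all sufficiently small $\eps$. The only point to be careful about is verifying that both $\eps^2 f$ and $\eps^3 h$ (and their derivatives) are controlled on the whole interval $J_\eps$ and not merely at a single point; this is precisely what the uniform sup-bounds in \eqref{fgh-der} provide.
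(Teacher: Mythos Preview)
Your proof is correct and follows exactly the route the paper takes: the paper's own proof simply says that \eqref{fgh-der} yields \eqref{K1-der} and that the rest proceeds as in Lemma \ref{lem-K-real}, which is precisely what you spell out in detail.
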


\begin{proof} 
In view of \eqref{fgh-der} one easily verifies  that $K_1$ satisfies \eqref{K1-der}. The claim follows in the same way as in the proof of
Lemma \ref{lem-K-real}.
\end{proof}

Note that $x_\eps $ satisfies
\begin{equation*}
x_\eps = \beta \eps -\eps^2 f(x_\eps ) +\eps^3 h(x_\eps ).
\end{equation*}

As before, let
\begin{equation} \label{I-eps-2}
I_\eps = \Big[ x_\eps - \frac{\beta\eps}{4}, \, x_\eps + \frac{\beta\eps}{4}\  \Big ] \\[4pt]
\end{equation}
and note that for $\eps$ sufficiently small  $I_\eps \subset J_\eps$.


\begin{lem} \label{lem-FtoK-int}
For $\eps$ small enough
\begin{equation*}
\sup_{t>0} \Big |\,  p_\eps(t) - \frac 1\pi \int_{I_\eps} e^{-it x}  \, {\rm Im}\,  \Big[\, \frac{1}{K(x,\eps)} \, \Big]  \, dx \Big |  \,
\lesssim\  |\log\eps|^{-1}\, .
\end{equation*}
\end{lem}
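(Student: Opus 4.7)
The plan is to follow the strategy of Lemma \ref{lem-FtoK} from the non-integer case, with the estimates replaced by their logarithmic counterparts that Lemma \ref{lem-F-exp} delivers. Since
$$
p_\eps(t) = \frac{1}{\pi}\int_{I_\eps} e^{-itx}\,{\rm Im}\Big(\frac{1}{F(x,\eps)}\Big)\,dx,
$$
it suffices to control the integrand's difference $\bigl|F(x,\eps)^{-1} - K(x,\eps)^{-1}\bigr|$ pointwise on $I_\eps$ in $L^1(dx)$, uniformly in $t>0$. The key input from Lemma \ref{lem-F-exp} is
$$
|F(x,\eps) - K(x,\eps)| \, \lesssim \, \eps\,|\log\eps|^{-3}, \qquad x\in J_\eps.
$$

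The main technical step is to establish the matching two-sided lower bound on $|K(x,\eps)|$. For the real part, Lemma \ref{lem-K-real-2} together with the derivative bound \eqref{K1-der} (which continues to hold here by \eqref{fgh-der}) gives $|K_1(x,\eps)| \gtrsim |x-x_\eps|$ on $I_\eps$. For the imaginary part, the explicit expression $K_2(x,\eps) = -\eps^2\g(x) = -\eps^2 |\w_2|^2/(x(\log x)^2)$ together with $x\sim \eps$ on $J_\eps$ gives
$$
|K_2(x,\eps)| \, \gtrsim \, \eps\,|\log\eps|^{-2}.
$$
Combining these two,
$$
|K(x,\eps)|^2 \, \gtrsim \, (x-x_\eps)^2 + \eps^2|\log\eps|^{-4}, \qquad x\in I_\eps.
$$
Since $|F-K| = \mathcal{O}(\eps|\log\eps|^{-3})$ is of strictly smaller order than the lower bound on $|K_2|$, one also gets $|F(x,\eps)| \geq \tfrac{1}{2}|K(x,\eps)|$ for $\eps$ small.

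Putting everything together,
$$
\Big|\frac{1}{F(x,\eps)} - \frac{1}{K(x,\eps)}\Big| \, \lesssim \, \frac{\eps\,|\log\eps|^{-3}}{(x-x_\eps)^2 + \eps^2|\log\eps|^{-4}},
$$
and integrating against the $|e^{-itx}|=1$ factor yields
$$
\sup_{t>0}\,\Big|p_\eps(t) - \frac{1}{\pi}\int_{I_\eps}\! e^{-itx}\,{\rm Im}\Big[\frac{1}{K(x,\eps)}\Big]\,dx\Big| \, \lesssim \, \int_{\R} \frac{\eps\,|\log\eps|^{-3}}{u^2 + \eps^2|\log\eps|^{-4}}\,du \, = \, \frac{\pi\,\eps\,|\log\eps|^{-3}}{\eps\,|\log\eps|^{-2}} \, = \, \pi\,|\log\eps|^{-1},
$$
which is exactly the claimed bound. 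The main delicacy is the careful logarithmic bookkeeping: the gain $\eps|\log\eps|^{-3}$ from $|F-K|$ just barely exceeds the minimum value $\eps|\log\eps|^{-2}$ of $|K_2|$ by one power of $|\log\eps|$, which is precisely the size of the error we end up with. Everything else is a verbatim adaptation of the corresponding argument for Lemma \ref{lem-FtoK}.
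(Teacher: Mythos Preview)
Your proof is correct and follows essentially the same approach as the paper's own proof: bound $|F-K|\lesssim\eps|\log\eps|^{-3}$ from Lemma~\ref{lem-F-exp}, use $|K_1|\gtrsim|x-x_\eps|$ and $|K_2|\gtrsim\eps|\log\eps|^{-2}$ to control $|K|^2$ from below, and integrate the resulting Lorentzian-type bound over $\R$. Your bookkeeping is in fact slightly cleaner than the paper's printed version, which contains minor typos in the definition and use of the constant $c(\eps)$; your explicit justification of $|F|\geq\tfrac12|K|$ is also a welcome point the paper leaves implicit.
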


\begin{proof} We proceed as in the proof of Lemma \ref{lem-FtoK}. Note first that for $\eps$ sufficiently small and $x \in J_\eps$
\begin{equation} \nonumber
|g(x)| \ \geq \ \frac{|\w_2|^2}{\frac{3\beta \eps}{2}\left( \log\left(\eps/2 \right) \right)^2} \ \geq  \ \frac{|\w_2|^2}{\frac{3\beta
\eps}{2}\left( 2\log \eps \right)^2}
\end{equation}
so that
\begin{equation} \nonumber
K_2(x,\eps)^2 \ = \ \eps^4 g(x)^2 \ \geq \ \frac{4 |\w_2|^4}{9\beta^2} \frac{\eps^2}{\left( 2\log \eps \right)^4}=c(\eps).
\end{equation}
Now, with the help of Lemmas \ref{lem-F-exp} and \ref{lem-K-real-2} we estimate
$$
\Big|  \frac{1}{F(x,\eps)} -  \frac{1}{K(x,\eps)}  \Big |  \ \lesssim\ \frac{ |K(x,\eps)-F(x,\eps) |}{| K(x,\eps)|^2} \ \lesssim  \ \frac{
\eps |\log\eps|^{-3}  }{(x-x_\eps )^2 +c(\eps)^2} \, ,
\\[2pt]
$$
where we have also used  equation \eqref{K1-der}.
Hence
\begin{align*}
\int_{I_\eps} \Big|  \frac{1}{F(x,\eps)} -  \frac{1}{K(x,\eps)}  \Big |\, dx & \  \lesssim \ \int_{\R} \frac{ \eps |\log\eps|^{-3}
}{(x-x_\eps )^2 +c(\eps)^2 } \, dx \, = \,
\frac{ \pi \eps |\log\eps|^{-3} }{c(\eps)}  \  \lesssim\ |\log\eps|^{-1}\, ,
\end{align*}
and the result follows from \eqref{p-eps}.
\end{proof}

Similarly as above, we now put
\begin{equation} \label{gamma-integer}
\gamma_\eps =  \eps^2 \g(x_\eps ),
\end{equation}
and note that
\begin{equation} \label{gamma-asymp-int}
\gamma_\eps \, \, \sim\, \, \frac{\eps}{(\log \eps)^2}\, .
\end{equation}

To continue we set as before
\begin{equation} \label{L-eq-int}
L(x,\eps) = L_1(x,\eps) +i L_2(x,\eps) = x_\eps -x  -i \gamma_\eps,
\end{equation}

\begin{lem} \label{lem-KtoL-2}
For $\eps$ small enough
\begin{equation} \label{int-KtoL-2}
\sup_{t>0}  \Big | \int_{I_\eps} e^{-it x}   \Big[\, \frac{1}{K(x,\eps)} \,-   \frac{1}{L(x,\eps)}\,  \Big ]  \, dx\,  \Big |  \, \lesssim\,
|\log\eps|^{-1} .
\end{equation}
\end{lem}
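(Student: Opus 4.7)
The plan is to mimic the proof of Lemma \ref{lem-KtoL} with the refinements forced by the logarithmic scales of the integer-flux case. First, I compute
$$
L(x,\eps) - K(x,\eps) = \eps^2(f(x) - f(x_\eps)) - \eps^3(h(x) - h(x_\eps)) + i\eps^2(\g(x) - \g(x_\eps)),
$$
using the identity $x_\eps = \beta\eps - \eps^2 f(x_\eps) + \eps^3 h(x_\eps)$. From the derivative bounds \eqref{fgh-der}, the mean value theorem gives
$$
|L(x,\eps) - K(x,\eps)| \lesssim \frac{|x - x_\eps|}{|\log \eps|},
$$
while Taylor's theorem with the second-derivative bounds yields
$$
\bigl|L(x,\eps) - K(x,\eps) - \eps^2\bigl(f'(x_\eps) + i\g'(x_\eps)\bigr)(x - x_\eps) + \eps^3 h'(x_\eps)(x - x_\eps)\bigr| \lesssim \frac{(x-x_\eps)^2}{\eps\,|\log\eps|}.
$$
Note that the dominant linear coefficient $\eps^2 f'(x_\eps)$ is of size $|\log\eps|^{-1}$, whereas $\eps^3 h'(x_\eps)$ and $\eps^2 \g'(x_\eps)$ are of size $|\log\eps|^{-2}$.

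Next I use the decomposition \eqref{aux-1}, namely
$$
\frac{1}{K} - \frac{1}{L} = \frac{L - K}{L^2} + \frac{(L-K)^2}{L^2 K},
$$
and treat the two pieces separately. For the first piece, Taylor's bound above gives
$$
\Bigl|\int_{I_\eps} e^{-itx}\,\frac{L-K}{L^2}\,dx\Bigr| \lesssim \frac{1}{|\log\eps|}\Bigl|\int_{I_\eps} e^{-itx}\,\frac{x - x_\eps}{L^2}\,dx\Bigr| + \frac{1}{\eps\,|\log\eps|}\int_{I_\eps}\frac{(x-x_\eps)^2}{|L|^2}\,dx.
$$
The second summand is $\mathcal O(|\log\eps|^{-1})$ because $\int_{I_\eps}(x-x_\eps)^2/|L|^2\,dx \leq |I_\eps| = \mathcal O(\eps)$. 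For the first summand, I write $(x-x_\eps)/L^2 = -1/L - i\gamma_\eps/L^2$ and bound $\int_{I_\eps}\gamma_\eps/|L|^2\,dx \leq \pi$ by explicit integration, while the oscillatory integral $\int_{I_\eps} e^{-itx}/L\,dx$ is uniformly bounded in $t$ via Lemma \ref{lem-jn} — the key hypothesis $\eps/\gamma_\eps \to \infty$ is still satisfied since by \eqref{gamma-asymp-int} it grows like $(\log\eps)^2$.

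For the second piece, I combine $|L-K|^2 \lesssim \eps^2 (x-x_\eps)^2/|\log\eps|^2$ with the lower bound $|K(x,\eps)| \gtrsim |x-x_\eps|$ from \eqref{K1-der} to obtain
$$
\int_{I_\eps}\frac{|L-K|^2}{|L|^2\,|K|}\,dx \lesssim \frac{\eps^2}{|\log\eps|^2}\int_{-\beta\eps/4}^{\beta\eps/4}\frac{|y|}{y^2 + \gamma_\eps^2}\,dy \lesssim \frac{\eps^2}{|\log\eps|^2}\,\log\!\Bigl(\frac{\eps}{\gamma_\eps}\Bigr) = \mathcal{O}\!\Bigl(\frac{\log|\log\eps|}{|\log\eps|^2}\Bigr),
$$
which is $o(|\log\eps|^{-1})$. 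Combining both pieces proves \eqref{int-KtoL-2}. The main obstacle — and the point at which the proof genuinely differs from that of Lemma \ref{lem-KtoL} — is bookkeeping the several competing logarithmic factors so that they collapse to the single rate $|\log\eps|^{-1}$; the underlying oscillatory-integral input (Lemma \ref{lem-jn}) and the structural decomposition are identical.
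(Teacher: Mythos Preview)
Your proof is correct and follows the paper's argument essentially line by line: the same decomposition \eqref{aux-1}, the same mean-value and Taylor estimates from \eqref{fgh-der}, the same appeal to Lemma \ref{lem-jn}, and the same logarithmic bookkeeping. The only slip is the spurious factor $\eps^2$ in your bound $|L-K|^2 \lesssim \eps^2(x-x_\eps)^2/|\log\eps|^2$ for the second piece---squaring the first-order estimate you derived gives $(x-x_\eps)^2/|\log\eps|^2$ with no $\eps^2$---but your stated final rate $\mathcal O(\log|\log\eps|/|\log\eps|^2)$ is exactly what the correct bound produces, so this is a harmless typo.
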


\begin{proof}
We follow the steps of the proof of Lemma \ref{lem-KtoL}.  Since
\begin{equation*}
L_1(x,\eps) -K_1(x,\eps) = \eps^2 (f(x) - f(x_\eps) )  - \eps^3 (h(x) - h(x_\eps )),
\end{equation*}
and
\begin{equation*}
L_2(x,\eps) -K_2(x,\eps) = \eps^2 (\g(x) - \g(x_\eps )) ,
\end{equation*}
it follows from \eqref{fgh-der} that
\begin{equation}  \label{K-L-3}
|L(x,\eps) -K(x,\eps)| \, \leq \, \eps^2 \sup_{I_\eps} |f' - \eps h'+i g'|  |x-x_\eps | \, \lesssim \,  |\log\eps|^{-1}  |x-x_\eps | \, .
\end{equation}
and  that
\begin{equation*}
\begin{aligned}
\big | L(x,\eps) -K(x,\eps) -\eps^2 [ f'(x_\eps ) - \eps h'(x_\eps )+i g'(x_\eps ) ]\,  (x-x_\eps ) \big |\, \lesssim\,   \frac{(x-x_\eps
)^2}{\eps |\log\eps|}\, .
\end{aligned}
\end{equation*}
Since $(x-x_\eps )^2 \leq |L(x,\eps)|^2$, the last equation in combination with  \eqref{fgh-der}  implies
\begin{align*}
\Big |  \int_{I_\eps} e^{-it x} \,   \frac{L(x,\eps) -K(x,\eps) }{L^2(x,\eps)} \, dx \Big | \, & \lesssim \,  |\log\eps|^{-1}   \Big |
\int_{I_\eps} e^{-it x}\,   \frac{x-x_\eps }{L^2(x,\eps)} \, dx \Big | +\eps^{-1} |\log\eps|^{-1}  \int_{I_\eps} \frac{(x-x_\eps
)^2}{|L(x,\eps)|^2}\, dx \nonumber \\
& \lesssim \, |\log\eps|^{-1}   \Big |  \int_{I_\eps} e^{-it x}\,   \frac{x-x_\eps }{L^2(x,\eps)} \, dx \Big | + |\log\eps|^{-1}  .
\end{align*}
For the remaining integral we find,
\begin{align*}
 \Big |  \int_{I_\eps} e^{-it x}\  \frac{x-x_\eps }{L^2(x,\eps)} \, dx \Big |  & =  \Big |  \int_{I_\eps} e^{-it x}\, \Big(
 \frac{1}{L(x,\eps)}  + \frac{i \gamma_\eps}{L^2(x,\eps)}\Big ) \, dx \Big |  \\
 & \,  \leq \,  \Big |  \int_{I_\eps}   \frac{e^{-it x} }{L(x,\eps)} \, dx \Big | +  \int_{\R}  \frac{\gamma_\eps }{(x-x_\eps )^2 +\gamma^2
 (\eps)} \, dx\ \lesssim \ 1,
\end{align*}
where we have used Lemma \ref{lem-jn} to estimate the fist term (as in the proof of Lemma \ref{lem-KtoL}). Altogether we find
\begin{equation} \label{eq-aux-3}
\Big |  \int_{I_\eps} e^{-it x} \,   \frac{L(x,\eps) -K(x,\eps) }{L^2(x,\eps)} \, dx \Big | \ \lesssim \ |\log\eps|^{-1}  \, .
\end{equation}
On the other hand, using \eqref{K1-lowerb}, \eqref{gamma-asymp-int} and  \eqref{K-L-3} we get
\begin{align*}
\int_{I_\eps}   \frac{|L(x,\eps) -K(x,\eps) |^2}{|L (x,\eps)|^2  |K(x,\eps)|}   \, dx\, & \lesssim\   |\log\eps|^{-2} \int_{I_\eps}
\frac{|x-x_\eps | }{|L (x,\eps)|^2}   \, dx =
  |\log\eps|^{-2} \int_0^{\frac{\beta\eps}{4}} \ \frac{2 |y|}{y^2 +\gamma_\eps^2}\, dy \\
 & =|\log\eps|^{-2}  \log\Big(1+\frac{\beta^2 \eps^2}{16\, \gamma_\eps ^2}\Big) = o(|\log\eps|^{-1})\, .
\end{align*}
The claim thus follows from equations \eqref{aux-1} and \eqref{eq-aux-3}.
\end{proof}

Finally, we need an analog of Lemma \ref{lem-p-eps}.

\begin{lem} \label{lem-p-eps-int}
For $\eps$ small enough
\begin{equation*}
\sup_{t>0}  \big | p_\eps(t) -  e^{-it (x_\eps -i \gamma_\eps)}  \big |  \, \lesssim\, |\log\eps |^{-1}\, .
\end{equation*}
\end{lem}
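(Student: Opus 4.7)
The plan is to mirror the proof of Lemma \ref{lem-p-eps} essentially line by line, replacing the power-law error $\eps^\mu$ with the logarithmic error $|\log\eps|^{-1}$ that is now forced on us by Lemmas \ref{lem-FtoK-int} and \ref{lem-KtoL-2}, and checking that the Lorentzian tail integral is no worse than this.

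Concretely, I would start from the identity
\begin{equation*}
\frac 1\pi \int_\R e^{-ixt}\  {\rm Im} \Big ( \frac{1}{L(x,\eps)} \Big)\, dx = \frac 1\pi \int_\R \frac{e^{-ixt}\, \gamma_\eps}{(x-x_\eps )^2 +\gamma_\eps^2}\, dx = e^{-it(x_\eps-i \gamma_\eps)},
\end{equation*}
which follows by exactly the same residue calculation as \eqref{expl-lorenz} with $L$ now given by \eqref{L-eq-int} and $\gamma_\eps$ by \eqref{gamma-integer}. Combining this with the triangle inequality gives
\begin{equation*}
\big | p_\eps(t) -  e^{-it(x_\eps -i \gamma_\eps)}  \big |   \, \leq \, \Big | p_\eps(t) - \frac{1}{\pi} \int_{I_\eps} e^{-ixt}\, {\rm Im}\Big(\frac{1}{K(x,\eps)}\Big)\, dx\Big| + \Big|\frac{1}{\pi} \int_{I_\eps} e^{-ixt}\, {\rm Im}\Big(\frac{1}{K} - \frac{1}{L}\Big)\, dx\Big| + \frac{1}{\pi}\int_{\R\setminus I_\eps}\frac{\gamma_\eps}{(x-x_\eps)^2+\gamma_\eps^2}\, dx.
\end{equation*}
The first two terms are bounded by $C\, |\log\eps|^{-1}$ uniformly in $t>0$ by Lemmas \ref{lem-FtoK-int} and \ref{lem-KtoL-2}, respectively.

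For the tail integral I would use the same pointwise bound as in the proof of Lemma \ref{lem-p-eps}, namely
\begin{equation*}
\int_{\R\setminus I_\eps} \frac{\gamma_\eps}{(x-x_\eps )^2 +\gamma_\eps^2}\, dx \,\leq\, 2\int_{\beta\eps/4}^\infty \frac{\gamma_\eps}{y^2}\, dy \,=\, \frac{8\, \gamma_\eps}{\beta\eps},
\end{equation*}
and then invoke the asymptotic \eqref{gamma-asymp-int}, which gives $\gamma_\eps/\eps\,\sim\,|\log\eps|^{-2}$. This tail contribution is therefore of order $|\log\eps|^{-2}$, which is dominated by the $|\log\eps|^{-1}$ coming from the first two terms, and the bound claimed in the lemma follows.

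I do not expect any real obstacle here: all the hard analytic work has already been done in Lemma \ref{lem-F-exp} (the expansion of $F$), Lemma \ref{lem-K-real-2} (existence and uniqueness of the zero $x_\eps$ of $K_1$), and Lemmas \ref{lem-FtoK-int} and \ref{lem-KtoL-2} (replacing $F$ by $K$ and $K$ by $L$ up to $|\log\eps|^{-1}$). The only thing to verify is that the Lorentzian tail is compatible with the logarithmic error, which is immediate from \eqref{gamma-asymp-int}. The step that would need the most care, should one try to optimize the error, is the replacement $K \to L$ in Lemma \ref{lem-KtoL-2}, because there the logarithmic smallness of the derivatives of $f$, $\g$, $h$ is exactly what prevents a better-than-logarithmic rate; here, however, we only need the rate already established.
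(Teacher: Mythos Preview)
Your proposal is correct and follows essentially the same approach as the paper: invoke \eqref{expl-lorenz}, use Lemmas \ref{lem-FtoK-int} and \ref{lem-KtoL-2} to handle the contribution over $I_\eps$, and then bound the Lorentzian tail by $8\gamma_\eps/(\beta\eps)=\mathcal{O}(|\log\eps|^{-2})$ via \eqref{gamma-asymp-int}. The paper's proof is the same argument in slightly more compressed form.
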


\begin{proof}
In view of Lemmas \ref{lem-FtoK-int}, \ref{lem-KtoL-2}  and equation \eqref{expl-lorenz},
$$
\big | p_\eps(t) -  e^{-it (x_\eps -i \gamma_\eps)}  \big |   \, \lesssim\  |\log\eps |^{-1} + \frac 1\pi  \int_{\R\setminus I_\eps}
\frac{\gamma_\eps}{(x-x_\eps )^2 +\gamma_\eps^2}\, dx \, .
$$
By equation \eqref{gamma-asymp-int} the  last term on the right hand side satisfies
$$
\int_{\R\setminus I_\eps} \frac{\gamma_\eps}{(x-x_\eps )^2 +\gamma^2(\eps)}\, dx = \int_{\frac{ \beta\eps}{4}}^\infty \frac{2 \gamma_\eps}{y^2
+\gamma^2(\eps)}\, dy \, \leq\,
 \int_{\frac{\beta\eps}{4}}^\infty \frac{2 \gamma_\eps}{y^2}\, dy = \frac{8 \gamma_\eps}{\beta\eps} = \mathcal{O}(|\log\eps |^{-2})\, .
$$
Hence the claim.
\end{proof}

\begin{proof}[\bf Proof of Theorem \ref{thm-2}]
By mimicking the argument of Hunziker as in the proof of Theorem \ref{thm-1} we obtain
$$
\big | \LL \psi_0\, , \, e^{-it P_\m(A,\eps)}\, \psi_0\RR -p_\eps(t) \big | \  \lesssim\ |\log\eps |^{-1}\, .
$$
Now the statement of the theorem follows from Lemma \ref{lem-p-eps-int}.
\end{proof}

In the case $\w_2=0$ we have the following

\begin{thm} \label{thm-3}
Let $\alpha =2$.  Let $B$ satisfy Assumption \eqref{ass-B} and let $V$ be such that $V \lesssim \langle \, \cdot\, \rangle ^{-\rho}$ for some
$\rho>6$. Suppose moreover that \eqref{cond-jn} holds. If $\w_2=0$,  then
\begin{equation} \label{exp-decay-3}
\lim_{\eps\searrow 0} \big | \LL \psi_0, \, e^{-it P_\m(A,\eps)}\, \psi_0 \RR - e^{-it (x_\eps -i \gamma_\eps)} \big|  = 0,
\end{equation}
uniformly in $t>0$, where $x_\eps $ and $\gamma_\eps$ satisfy
\begin{align}
x_\eps &= \beta  \eps    \big(1+    \mathcal{O}(\eps \log \eps) \big)    \label{x0-eps-3} \\[2pt]
\gamma_\eps &= \frac{\eps^2}{4}   \Big(\,  |\w_1|^2 +  \frac{\pi^2 \beta^2}{\|e^h\|_2^2} \, \Big)   . \label{gamma-eps-3}
\end{align}
\end{thm}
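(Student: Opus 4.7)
The plan is to mirror the proof of Theorem \ref{thm-2}, again working from the Schur-complement identity \eqref{eq-SchurVer1}--\eqref{F-eq} and the resolvent expansion \eqref{res-exp-2}, but exploiting that $\w_2 = 0$ now kills the singular $\varphi_2$-contribution that dominated Theorem \ref{thm-2}. With that term absent, the leading correction to $F(x,\eps)$ of order $\eps^2$ comes from the logarithmic operator $-(\log z - i\pi)\K$, which contributes a constant imaginary part and thus a pure Fermi-Golden-Rule-type width.

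Concretely, I would first insert \eqref{res-exp-2} into \eqref{G-eq} and observe that, because $\w_2 = 0$,
\[
\LL \psi_0, vU G(z) Uv\psi_0\RR = -(\log z - i\pi)\, \LL V\psi_0, \K V\psi_0\RR + \LL V\psi_0, T_0 V\psi_0\RR + o(1),
\]
uniformly for $z\to 0$ with $\im z\geq 0$. Using the explicit form \eqref{K-op-2} of $\K$ together with the identities $\LL V\psi_0, \Pi_{jk} V\psi_0\RR = \overline{\w_j}\w_k$ and $\LL V\psi_0, \po V\psi_0\RR = \beta^2$, the hypothesis $\w_2=0$ yields
\[
\LL V\psi_0, \K V\psi_0\RR \,=\, \frac{|\w_1|^2}{4\pi} + \frac{\pi\, \beta^2}{4\, \|e^h\|_2^2}\, ,
\]
a real, strictly positive constant. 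Controlling the Neumann expansion of $[U+\eps G(x)]^{-1}$ in \eqref{F-eq} (whose contribution is of order $\eps^3|\log\eps|^2$ and hence harmless), I then obtain, uniformly for $x\in J_\eps$,
\[
F(x,\eps) \,=\, \beta\eps - x + \eps^2 (\log x - i\pi) \LL V\psi_0, \K V\psi_0\RR - \eps^2 \LL V\psi_0, T_0 V\psi_0\RR + o(\eps^2).
\]

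From this I would define $K(x,\eps) = K_1(x,\eps)+i K_2(x,\eps)$ to be the expansion above with the $o(\eps^2)$ dropped. Since $K_2$ is now independent of $x$ and equal to $-\pi\eps^2\LL V\psi_0,\K V\psi_0\RR$, the candidate width is $\gamma_\eps := -K_2 = \pi\eps^2\LL V\psi_0, \K V\psi_0\RR$, which is exactly \eqref{gamma-eps-3}. The real equation $K_1(x,\eps)=0$ has a unique solution $x_\eps\in J_\eps$ by the monotonicity argument of Lemma \ref{lem-K-real-2} (here $\partial_x K_1 = -1 + \mathcal{O}(\eps^2 x^{-1}) = -1 + \mathcal{O}(\eps)$), and iterating with $\log x_\eps = \log(\beta\eps) + o(1)$ gives $x_\eps = \beta\eps(1+\mathcal{O}(\eps\log\eps))$, i.e.~\eqref{x0-eps-3}. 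The rest of the argument then follows Lemmas \ref{lem-FtoK-int}--\ref{lem-p-eps-int} verbatim: since $|F-K|=o(\eps^2)$ and $|K_2|\sim\eps^2$, the Lorentzian $L(x,\eps) = x_\eps - x - i\gamma_\eps$ approximates $K$ with an error of size $o(\eps^2)|x-x_\eps|$, and the same integral estimates (with Lemma \ref{lem-jn} applied to $\eps/\gamma_\eps\sim\eps^{-1}\to\infty$) show that each replacement in \eqref{p-eps} costs only $o(1)$ uniformly in $t>0$ on $I_\eps$, while the tail over $\R\setminus I_\eps$ contributes $\mathcal{O}(\gamma_\eps/\eps) = \mathcal{O}(\eps)$.

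The main obstacle I anticipate is bookkeeping: the remainder inherited from the $o(1)$ in Proposition \ref{prop:degenerate-int-exp} yields only an $o(\eps^2)$ estimate on $F-K$, so the final error in \eqref{exp-decay-3} is genuinely $o(1)$ rather than a power of $\eps$ or $|\log\eps|^{-1}$, and this must be propagated without loss through the sharper estimates that were used in Theorems \ref{thm-1} and \ref{thm-2}. Once this is done, a final application of Hunziker's argument (exactly as in \eqref{eq-hunz-1}--\eqref{eq-hunz-2}) transfers the uniform bound from $p_\eps(t)$ to $\LL \psi_0, e^{-itP_\m(A,\eps)}\psi_0\RR$ and completes the proof.
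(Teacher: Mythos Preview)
Your proposal is correct and follows essentially the same route as the paper: define $a_1=\LL V\psi_0,\K V\psi_0\RR$, $a_2=\LL V\psi_0,T_0 V\psi_0\RR$, set $K(x,\eps)=\beta\eps-x+\eps^2 a_1(\log x-i\pi)-\eps^2 a_2$ and $\gamma_\eps=\pi a_1\eps^2$, then repeat the chain of Lemmas \ref{lem-FtoK-int}--\ref{lem-p-eps-int} with the looser $o(\eps^2)$ bound on $F-K$; this is exactly what the paper does (its Lemmas \ref{lem-FtoK-int2} and \ref{lem-KtoL-3}). One small slip: the error $|K-L|$ is not $o(\eps^2)|x-x_\eps|$ but $\mathcal{O}(\eps)\,|x-x_\eps|$, since $L_1-K_1=\eps^2 a_1(\log x_\eps-\log x)$ and the mean value theorem on $I_\eps$ gives a denominator of size $\eps$; this is still enough for the $K\to L$ step (the paper obtains $\lesssim\eps$ in Lemma \ref{lem-KtoL-3}), so the argument is unaffected.
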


\smallskip

\begin{rem} \label{rem-radial-2}
As in Remark \ref{rem-radial}  we note that if $B$ and $V$ are radial, then equation \eqref{varphi-radial-2} implies $\w_1= \w_2=0$.
Theorem \ref{thm-3} then gives
$$
\gamma_\eps = \eps^2\,   \frac{\pi^2 \beta^2}{4 \|e^h\|_2^2}\, .
$$
\end{rem}

To prove Theorem \ref{thm-3} we proceed similarly as above. Let
\begin{equation*}
a_1 = \LL \psi_0, V \, \K \, V\, \psi_0\RR\, .
\qquad a_2=  \LL \psi_0, V \, {\rm T}_0\, V\, \psi_0\RR\, .
\end{equation*}
Note that $\K$ and ${\rm T}_0$ are self-adjoint, cf.~Proposition \ref{prop:degenerate-int-exp}, and therefore  $a_1, a_2 \in\R$.
Since $\w_2=0$, equations \eqref{res-exp-2}, \eqref{F-eq} and \eqref{G-eq} imply
$$
F(x,\eps) = \beta \eps -x + \eps^2 a_1\, (\log x -i\pi) - \eps^2 a_2 + o(\eps^2),
$$
Hence we put
\begin{equation} \label{K-gamma}
K(x, \eps)  = \beta \eps -x + \eps^2 a_1\, (\log x -i\pi) - \eps^2 a_2,  \qquad  \gamma_\eps = a_1 \pi \eps^2 , \\[4pt]
\end{equation}
and denote, as above, by $K_1$ and $K_2$ the real and imaginary parts of $K$. A straightforward modification of Lemma \ref{lem-K-real} then
shows that for $\eps$ small enough there
exists a unique $x_\eps \in J_\eps$ such that $K_1(x_\eps , \eps) =0$, and that
\begin{equation}  \label{x0-2}
x_\eps = \beta \eps + \mathcal{O}(\eps^2 \log\eps).
\end{equation}
Moreover, $K_1$ satisfies \eqref{K1-lowerb}. We define again $I_\eps$ as in \eqref{I-eps-2} and therefore have

\begin{lem} \label{lem-FtoK-int2}
For $\eps$ small enough
\begin{equation*}
\lim_{\eps\searrow 0} \Big |\,  p_\eps(t) - \frac 1\pi \int_{I_\eps} e^{-it x}  \, {\rm Im}\,  \Big[\, \frac{1}{K(x,\eps)} \, \Big]  \, dx
\Big | = 0,
\end{equation*}
holds uniformly in $t$. Recall that $p_\eps(t) $ is given by \eqref{p-eps}.
\end{lem}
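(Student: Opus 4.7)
The plan is to mimic the strategy of Lemmas \ref{lem-FtoK} and \ref{lem-FtoK-int}, the only essential novelty being that the assumption $\w_2=0$ removes the singular $(\log z)^{-1}$ contribution from the resolvent expansion, leaving a $K(x,\eps)$ whose imaginary part is \emph{constant} in $x$ and of size $\eps^2$. This uniform lower bound on $|K_2|$ is what drives the argument.

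First I would establish the key estimate
\begin{equation*}
\sup_{x\in J_\eps}\,|F(x,\eps)-K(x,\eps)|\;=\;o(\eps^2).
\end{equation*}
Starting from \eqref{F-eq} and \eqref{G-eq}, I would insert the expansion \eqref{res-exp-2} of $(P_\m(A)-z)^{-1}$. The $-z^{-1}P_0$ term is cancelled by the $z^{-1}v P_0 v$ summand in $G(z)$. The $\varphi_2$–rank–one summand of \eqref{res-exp-2} contributes a factor $|\w_2|^2$ which vanishes by hypothesis, so the only surviving leading terms give
\begin{equation*}
\eps^2\,\LL\psi_0,\,vUG(x)Uv\,\psi_0\RR \;=\; -\eps^2 a_1(\log x-i\pi)+\eps^2 a_2+o(\eps^2),
\end{equation*}
uniformly in $x\in J_\eps$. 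Next I would control the remainder in \eqref{F-eq}. With $\w_2=0$ the operator $G(x)$ satisfies $\|G(x)\|=\mathcal O(|\log x|)=\mathcal O(|\log\eps|)$ on $J_\eps$, so $\eps\|G(x)\|\to 0$ and $[U+\eps G(x)]^{-1}$ is uniformly bounded. Hence the cubic correction is bounded by $\eps^3\|G(x)\|^2=\mathcal O(\eps^3|\log\eps|^2)=o(\eps^2)$. Combining both estimates yields the claim.

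With this in hand, I would bound $|K(x,\eps)|$ from below. Since $\w_2=0$, $K_2(x,\eps)\equiv -\gamma_\eps=-a_1\pi\eps^2$ is independent of $x$, while \eqref{K1-lowerb} (which holds for the present $K_1$ by a straightforward modification of Lemma \ref{lem-K-real}) gives $|K_1(x,\eps)|\gtrsim|x-x_\eps|$. Therefore
\begin{equation*}
|K(x,\eps)|^2\;\gtrsim\;(x-x_\eps)^2+\eps^4,\qquad x\in J_\eps.
\end{equation*}
Combining the two bounds,
\begin{equation*}
\Big|\frac{1}{F(x,\eps)}-\frac{1}{K(x,\eps)}\Big|\;\lesssim\;\frac{|F(x,\eps)-K(x,\eps)|}{|K(x,\eps)|^2}\;\lesssim\;\frac{o(\eps^2)}{(x-x_\eps)^2+\eps^4}.
\end{equation*}

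Finally I would integrate and estimate uniformly in $t>0$:
\begin{equation*}
\Big|\,p_\eps(t)-\frac{1}{\pi}\int_{I_\eps}e^{-itx}\,{\rm Im}\,\Big[\frac{1}{K(x,\eps)}\Big]\,dx\,\Big|
\;\leq\;\frac{1}{\pi}\int_{I_\eps}\Big|\frac{1}{F(x,\eps)}-\frac{1}{K(x,\eps)}\Big|\,dx
\;\lesssim\;\int_{\mathbb R}\frac{o(\eps^2)}{y^2+\eps^4}\,dy\;=\;o(1),
\end{equation*}
which is the desired conclusion. The main obstacle is the first step: one must be careful that, with $\w_2=0$, the bound $\|G(x)\|\lesssim|\log\eps|$ is still good enough — unlike in Lemma \ref{lem-F-exp} where the singular $|\log\eps|^{-1}$ bound was crucial — to make the $\eps^3$ term in the Neumann expansion of $[U+\eps G(x)]^{-1}G(x)$ negligible compared with $\eps^2$. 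Once this is secured, the rest reduces to the Lorentzian integration already used in Lemma \ref{lem-FtoK}.
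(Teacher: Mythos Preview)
Your overall strategy matches the paper's: use the bound $|F(x,\eps)-K(x,\eps)|=o(\eps^2)$ together with $|K(x,\eps)|^2\gtrsim (x-x_\eps)^2+\gamma_\eps^2$ (with $\gamma_\eps\sim\eps^2$) and integrate the Lorentzian. The paper's proof is essentially those two lines; the expansion $F(x,\eps)=K(x,\eps)+o(\eps^2)$ is stated, without details, just before the lemma.

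There is, however, a concrete error in your justification of the $o(\eps^2)$ bound. The claim ``with $\w_2=0$ the operator $G(x)$ satisfies $\|G(x)\|=\mathcal O(|\log\eps|)$'' is false: the condition $\w_2=\LL V\psi_0,\varphi_2\RR=0$ says nothing about $G(x)$ as an operator. The rank-one piece $\dfrac{v\,\Pi_{22}\,v}{\pi x(\log x+m_\0-i\pi)}$ is still present in $G(x)$ and has norm of order $(\eps|\log\eps|)^{-1}$ on $J_\eps$. With the true bound, $\eps^3\|G(x)\|^2\sim \eps\,|\log\eps|^{-2}$, which is \emph{not} $o(\eps^2)$, so your argument as written does not close.

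The fix is to use $\w_2=0$ where it actually enters, namely on the specific vector $Uv\psi_0$: since $\LL v\varphi_2,\,Uv\psi_0\RR=\LL\varphi_2,V\psi_0\RR=\overline{\w_2}=0$, the $\Pi_{22}$-contribution vanishes when $G(x)$ (or $G(x)^*$) is applied to $Uv\psi_0$, giving $\|G(x)\,Uv\psi_0\|=\mathcal O(|\log\eps|)$ and likewise for $G(x)^*Uv\psi_0$. Then
\[
\eps^3\big|\LL\psi_0,\,vU\,G(x)[U+\eps G(x)]^{-1}G(x)\,Uv\,\psi_0\RR\big|
\;\le\;\eps^3\,\|G(x)^*Uv\psi_0\|\,\big\|[U+\eps G(x)]^{-1}\big\|\,\|G(x)Uv\psi_0\|
\;=\;\mathcal O(\eps^3|\log\eps|^2)=o(\eps^2),
\]
where $\|[U+\eps G(x)]^{-1}\|=\mathcal O(1)$ follows from $\eps\|G(x)\|=\mathcal O(|\log\eps|^{-1})\to 0$. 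With this correction the rest of your argument goes through exactly as you wrote it.
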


\begin{proof} As in the proof of Lemma \ref{lem-FtoK} we write for $x \in I_\eps$
$$
\Big|  \frac{1}{F(x,\eps)} -  \frac{1}{K(x,\eps)}  \Big |  \ \lesssim\ \frac{ |K(x,\eps)-F(x,\eps) |}{| K(x,\eps)|^2} \  =  \ \frac{o (\eps^2)
}{(x-x_\eps )^2 +\gamma_\eps^2} \, ,
\\[2pt]
$$
which implies that
\begin{align*}
\int_{I_\eps} \Big|  \frac{1}{F(x,\eps)} -  \frac{1}{K(x,\eps)}  \Big |\, dx & \  \lesssim \ \int_{\R} \frac{ o(\eps^2)  }{(x-x_\eps )^2
+\gamma^2(\eps) } \, dx \, = o(1),
\end{align*}
cf.~\eqref{K-gamma}.
\end{proof}

\begin{lem} \label{lem-KtoL-3}
Let $L(x,\eps)$ be given by \eqref{L-eq-int}. Then for $\eps$ small enough
\begin{equation*}
\sup_{t>0} \Big | \int_{I_\eps} e^{-it x}   \Big[\, \frac{1}{K(x,\eps)} \,-   \frac{1}{L(x,\eps)}\,  \Big ]  \, dx\,  \Big |  \, \lesssim\,  \eps.
\end{equation*}
\end{lem}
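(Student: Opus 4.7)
My plan is to follow the strategy of Lemmas \ref{lem-KtoL} and \ref{lem-KtoL-2}, which are already well-adapted to this setting; the present $K$ is in fact simpler because $K_2$ is independent of $x$ and the nontrivial structure of $L-K$ comes only from a single logarithmic term.

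\textbf{Step 1: Compute $L - K$.} Since $K_2(x,\eps) = -\pi a_1 \eps^2 = -\gamma_\eps$ does not depend on $x$, the imaginary parts cancel: $L_2(x,\eps) - K_2(x,\eps) \equiv 0$. For the real part, using $K_1(x_\eps,\eps) = 0$ (which is the defining relation for $x_\eps$) a direct computation gives
$$
L(x,\eps) - K(x,\eps) \,=\, \eps^2 a_1\bigl(\log x_\eps - \log x\bigr).
$$
Since $x \in I_\eps$ means $x/x_\eps \in [3/4,\,5/4]$, Taylor expanding $\log(x/x_\eps)$ around $x = x_\eps$ and using $x_\eps \sim \beta\eps$ from \eqref{x0-2} yields the two estimates
$$
|L(x,\eps) - K(x,\eps)| \,\lesssim\, \eps\,|x - x_\eps|
$$
and
$$
\bigl|L(x,\eps) - K(x,\eps) + (\eps^2 a_1/x_\eps)(x - x_\eps)\bigr| \,\lesssim\, (x - x_\eps)^2,
$$
the analogues of \eqref{K-L-1} and \eqref{K-L-2}.

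\textbf{Step 2: Split as in \eqref{aux-1}.} Write
$$
\int_{I_\eps} e^{-itx}\Bigl[\frac{1}{K(x,\eps)} - \frac{1}{L(x,\eps)}\Bigr] dx \,=\, \int_{I_\eps} e^{-itx}\Bigl[\frac{L-K}{L^2} + \frac{(L-K)^2}{L^2 K}\Bigr] dx.
$$
For the first integral, substitute the second estimate of Step 1 to extract the linear part, reducing the problem to bounding $\bigl|\int_{I_\eps} e^{-itx}(x-x_\eps)/L^2\,dx\bigr|$ times $\eps^2/x_\eps \sim \eps$, plus a remainder controlled by $\int_{I_\eps}(x-x_\eps)^2/|L|^2\,dx \le |I_\eps| = O(\eps)$. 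The key integral is handled exactly as in \eqref{aux-2}--\eqref{int-jn} by writing $(x-x_\eps)/L^2 = -1/L - i\gamma_\eps/L^2$; Lemma \ref{lem-jn} gives a $t$-uniform constant for the $1/L$ piece after the substitution $\xi = x - x_\eps$, $y = \xi/\gamma_\eps$, while the $\gamma_\eps/L^2$ piece is a Lorentzian bounded by $\pi$. Multiplying through by the prefactor $\sim\eps$ yields an $O(\eps)$ contribution.

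\textbf{Step 3: The second integral.} Using the first estimate of Step 1 and the lower bound $|K(x,\eps)| \gtrsim |x - x_\eps|$ on $I_\eps$ (which is the analogue of \eqref{K1-lowerb} already noted in the paper after the definition of $x_\eps$),
$$
\int_{I_\eps} \frac{|L-K|^2}{|L|^2\,|K|}\,dx \,\lesssim\, \eps^2 \int_{-\beta\eps/4}^{\beta\eps/4} \frac{|y|}{y^2 + \gamma_\eps^2}\,dy \,=\, \eps^2 \log\!\Bigl(1 + \frac{\beta^2\eps^2}{16\gamma_\eps^2}\Bigr) \,\lesssim\, \eps^2|\log\eps| \,=\, o(\eps),
$$
since $\gamma_\eps \sim \eps^2$ by \eqref{K-gamma}. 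Combining Steps 2 and 3 gives the claimed bound $O(\eps)$.

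The main obstacle is only cosmetic: $L - K$ is now a logarithmic rather than a polynomial or $1/\log\eps$-type function as in the previous two lemmas. However, confinement to $I_\eps$ keeps the argument of the logarithm uniformly in a neighborhood of $1$, so a first-order Taylor expansion suffices and the remainder of the proof proceeds along the established template, with no genuinely new analytic difficulty.
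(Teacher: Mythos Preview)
Your proof is correct and follows essentially the same approach as the paper's: both derive the bounds $|L-K|\lesssim\eps|x-x_\eps|$ and $|L-K+(\eps^2 a_1/x_\eps)(x-x_\eps)|\lesssim(x-x_\eps)^2$, then run the decomposition \eqref{aux-1} and apply Lemma \ref{lem-jn} exactly as in Lemmas \ref{lem-KtoL} and \ref{lem-KtoL-2}. Your explicit identification $L-K=\eps^2 a_1(\log x_\eps-\log x)$ and the observation that $K_2$ is $x$-independent here (so the imaginary parts cancel) make the argument slightly more transparent than the paper's terse ``straightforward modification'' phrasing, but the content is the same.
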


\begin{proof}
As above we follow the proof of Lemma \ref{lem-KtoL}. A straightforward modification of \eqref{K-L-1} and \eqref{K-L-2} gives
\begin{equation}  \label{K-L-4}
|L(x,\eps) -K(x,\eps)| \,  \lesssim \,   \eps  |x-x_\eps | \, .
\end{equation}
and
\begin{equation*}
\big | L(x,\eps) -K(x,\eps) + \frac{a_1 \eps^2}{x_\eps }\,  (x-x_\eps ) \big |\, \lesssim\,   (x-x_\eps )^2 .
\end{equation*}
Hence  by \eqref{x0-2},
\begin{align*}
\Big |  \int_{I_\eps} e^{-it x} \,   \frac{L(x,\eps) -K(x,\eps) }{L^2(x,\eps)} \, dx \Big | \, & \lesssim \,  \eps\,    \Big |  \int_{I_\eps}
e^{-it x}\,   \frac{x-x_\eps }{L^2(x,\eps)} \, dx \Big | + \int_{I_\eps} \frac{(x-x_\eps )^2}{|L(x,\eps)|^2}\, dx
\lesssim \,\eps\,   \Big |  \int_{I_\eps} e^{-it x}\,   \frac{x-x_\eps }{L^2(x,\eps)} \, dx \Big | +\eps .
\end{align*}
For the remaining integral we use again Lemma \ref{lem-jn} to find
\begin{align*}
 \Big |  \int_{I_\eps} e^{-it x}\  \frac{x-x_\eps }{L^2(x,\eps)} \, dx \Big |  & =  \Big |  \int_{I_\eps} e^{-it x}\, \Big(
 \frac{1}{L(x,\eps)}  + \frac{i \gamma_\eps}{L^2(x,\eps)}\Big ) \, dx \Big |  \\
 & \,  \leq \,  \Big |  \int_{I_\eps}   \frac{e^{-it x} }{L(x,\eps)} \, dx \Big | +  \int_{\R}  \frac{\gamma_\eps }{(x-x_\eps )^2 +\gamma^2
 (\eps)} \, dx\ \lesssim \ 1,
\end{align*}
On the other hand, \eqref{K-L-4} implies
\begin{align*}
\int_{I_\eps}   \frac{|L(x,\eps) -K(x,\eps) |^2}{|L (x,\eps)|^2  |K(x,\eps)|}   \, dx\, & \lesssim\  \eps^2  \int_{I_\eps}   \frac{|x-x_\eps |
}{|L (x,\eps)|^2}   \, dx =
  |\log\eps|^{-2} \int_0^{\frac{\beta\eps}{4}}\!\!  \frac{2 |y|}{y^2 +\gamma^2(\eps)}\, dy
 = \eps^2\!\!  \log\Big(1+\frac{\beta^2\, \eps^2}{16\, \gamma^2(\eps)}\Big) = o(\eps) \, .
\end{align*}
In view of equation \eqref{aux-1} this completes the proof. \end{proof}

\begin{proof}[\bf Proof of Theorem \ref{thm-3}]
Since
$$
\int_{\R\setminus I_\eps} \frac{\gamma_\eps}{(x-x_\eps )^2 +\gamma^2(\eps)}\, dx = \int_{\frac{ \beta\eps}{4}}^\infty \frac{2 \gamma_\eps}{y^2
+\gamma^2(\eps)}\, dy \, \leq\,
 \int_{\frac{\beta\eps}{4}}^\infty \frac{2 \gamma_\eps}{y^2}\, dy =  \mathcal{O}(\eps )\, ,
$$
by mimicking the proof of Lemma \ref{lem-p-eps-int} we conclude that
$$
\lim_{\eps\searrow 0} \big | p_\eps(t) -  e^{-it (x_\eps -i \gamma_\eps)}  \big | =0,
$$
uniformly in $t>0$. An application of the Hunziker argument then implies equation \eqref{exp-decay-3}. To finish the proof it remains to
recall \eqref{x0-2}, \eqref{K-gamma}, and to note that
$$
a_1=  \frac{1}{4\pi}   \left(\,  |\w_1|^2 +  \frac{\pi^2 \beta^2}{\|e^h\|_2^2} \, \right)  \,  ,
$$
which follows from \eqref{K-op} and from the fact that $\w_2=0$ by assumption.
\end{proof}

\subsection{Resonances induced by anomalous magnetic moment}
\label{ssec-anomalous-pauli}
The two dimensional Pauli operator of an elementary fermion with spin $1/2$ and magnetic moment $g$ is given by
\begin{equation*}
P(A,g)=  \begin{pmatrix}
P_\m(A, g) & 0  \\
0 & P_\pp(A, g)
\end{pmatrix},  \qquad P_\pm(A, g) = (i\nabla+A)^2 \pm \frac g2 B \\[3pt]
\end{equation*}
in $L^2(\R^2, \C^2)$. The Dirac theory predicts that fermions of spin $1/2$ should have  magnetic moment $g=2$. Note that $P(A,2)=P(A)$, see
equation \eqref{pauli-operator}. However, magnetic moments of certain particles may slightly differ from this value. One can identify the
anomaly of the magnetic moment  with the parameter $\eps$ by writing $g = 2 -2\eps$. Then
\begin{equation*}
P_\m(A, g) = (i\nabla+A)^2 - (1-\eps)  B.
\end{equation*}
and we have

\begin{cor} \label{cor-amm1}
Let $1 <\alpha <2$ and let $B$ satisfy Assumption \eqref{ass-B} . Suppose that $\eps= \frac{2-g}{2} >0$. Then there exists a constant $C$ such
that for $\eps$ small enough
\begin{equation} \label{exp-dec-amm1}
\sup_{t>0} \big | \LL \psi_0, \, e^{-it P_\m(A,g)}\, \psi_0 \RR - e^{-it (x_\eps -i \gamma_\eps)} \big|  \, \leq\, C\, \eps^\mu,
\end{equation}
where $x_\eps $ and $\gamma_\eps$ satisfy
\begin{align*}
x_\eps &= \beta\eps \big(1+ \mathcal{O}(\eps^\nu)\big)\\
\gamma_\eps &= \eta\,  \beta^{\alpha}\, \eps^{\alpha}   \,  \big(1+ \mathcal{O}(\eps^{\alpha-1})\big) + \sigma\,  | \w|^2\, \beta^{1-\alpha}\,
\eps^{3-\alpha}   \,  \big(1+ \mathcal{O}(\eps^{2-\alpha})\big)
\end{align*}
with $\eta\, , \sigma\, , \mu$ given by \eqref{eta-sigma-eq}, \eqref{mu} and \eqref{nu-eq}, and with
\begin{equation} \label{beta-0-B}
\beta = \LL \psi_0, B\, \psi_0\RR, \qquad \w  = \LL   \psi_0, B\, \varphi\RR \, .
\end{equation}
\end{cor}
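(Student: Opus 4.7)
The strategy is to recognize that Corollary \ref{cor-amm1} is a direct specialization of Theorem \ref{thm-1} with the choice $V = B$. Indeed, writing $g = 2-2\eps$ yields
\[
P_\m(A,g) = (i\nabla+A)^2 - B + \eps B = P_\m(A) + \eps B ,
\]
so the anomalous magnetic moment generates a perturbation of $P_\m(A)$ by $\eps V$ with $V=B$. The asymptotic formulas \eqref{exp-dec-amm1} and the expressions \eqref{beta-0-B} are then nothing but \eqref{exp-decay-1}--\eqref{gamma-eps} and \eqref{w-eq} with $V$ replaced by $B$.

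What remains is to verify that the hypotheses of Theorem \ref{thm-1} are satisfied with this choice of $V$. The decay requirement $V \lesssim \langle \cdot\rangle^{-\rho}$ with $\rho > 6$ follows immediately from Assumption \ref{ass-B}, since $(1+|x|^2)^{-\rho} = \langle x\rangle^{-2\rho}$ with $2\rho >7$. The nontrivial point is to check the positivity condition \eqref{cond-jn}, i.e.~that
\[
\beta = \LL \psi_0, B\, \psi_0\RR > 0 .
\]

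To do so, I would use the explicit form \eqref{psi-0} of $\psi_0$, which gives $|\psi_0|^2 = c_0^2\, e^{2h}$, and then exploit the identity $-\Delta h = B$ from \eqref{laplace-h}:
\begin{equation*}
\beta = c_0^2 \int_{\R^2} B(x)\, e^{2h(x)}\, dx = -c_0^2 \int_{\R^2} \Delta h(x)\, e^{2h(x)}\, dx = 2 c_0^2 \int_{\R^2} |\nabla h(x)|^2\, e^{2h(x)}\, dx.
\end{equation*}
The integration by parts is legitimate because of the asymptotics \eqref{h-asymp}, which yield $e^{2h(x)} \sim |x|^{-2\alpha}$ and $|\nabla h(x)| = \mathcal{O}(|x|^{-1})$ as $|x|\to\infty$; the boundary terms on a sphere of radius $R$ decay like $R^{-2\alpha}$ and vanish since $\alpha>1$. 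The resulting integral is strictly positive because $B\not\equiv 0$ forces $\nabla h\not\equiv 0$.

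Once $\beta > 0$ is established, the rest is automatic: Theorem \ref{thm-1} applied to the perturbation $\eps B$ yields \eqref{exp-dec-amm1} with $x_\eps$ and $\gamma_\eps$ given by \eqref{x0-eps-1} and \eqref{gamma-asymp}, which upon substituting $V=B$ into the definitions \eqref{cond-jn} and \eqref{w-eq} reproduce the constants \eqref{beta-0-B}. The main (and only nontrivial) obstacle in the argument is the positivity of $\beta$, which is handled by the integration-by-parts identity above; everything else is a direct quotation of Theorem \ref{thm-1}.
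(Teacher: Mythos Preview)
Your proposal is correct and follows the same overall strategy as the paper: identify $P_\m(A,g) = P_\m(A) + \eps B$ and invoke Theorem \ref{thm-1} with $V=B$. The only nontrivial step, as you rightly isolate, is the verification of $\beta = \LL\psi_0, B\,\psi_0\RR > 0$, and here your route differs from the paper's. You use the explicit form $|\psi_0|^2 = c_0^2 e^{2h}$ together with $-\Delta h = B$ and integrate by parts to obtain $\beta = 2c_0^2\int_{\R^2}|\nabla h|^2 e^{2h} > 0$. The paper instead reads off from the eigenvalue equation $P_\m(A)\psi_0=0$ that $B\psi_0 = (i\nabla+A)^2\psi_0$, whence $\beta = \|(i\nabla+A)\psi_0\|_2^2$, and then appeals to the magnetic Hardy inequality \eqref{hardy} to conclude strict positivity. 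Your argument is more elementary and self-contained (no external Hardy inequality needed, only the decay \eqref{h-asymp} to kill boundary terms), while the paper's argument is gauge-covariant and does not rely on the explicit structure of $\psi_0$. The two computations in fact coincide: with $\psi_0 = c_0 e^{h+i\chi}$ and $A_h = (\partial_2 h, -\partial_1 h)$ one checks directly that $|(i\nabla+A)\psi_0|^2 = 2c_0^2|\nabla h|^2 e^{2h}$.
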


\begin{proof}
Since $P_\m (A) \psi_0=(i\nabla+A)^2\psi -B \, \psi_0=0$, the Hardy inequality
\begin{equation} \label{hardy}
\int_{\R^2} \frac{|u|^2}{1+|x|^2\, \log^2 |x|}\, dx \ \lesssim\
\| (i \nabla +A) u\| ^2 \, \qquad  u\in H^1(\R^2),
\end{equation}
which holds for any $A\in  L^2_{\rm\, loc}(\R^2)$ such that $\text{rot}\, A \not\equiv 0$, see \cite{lw,timo,kov2},
implies
\begin{equation*}
\beta=  \LL \psi_0, (i\nabla+A)^2\, \psi_0 \RR  = \| (i\nabla +A) \psi_0\|_2^2 >0.
\end{equation*}
We can thus apply Theorem \ref{thm-1} with $V=B$, and the claim follows.
\end{proof}

\begin{cor} \label{cor-amm3}
Let $\alpha =2$.  Let $B$ satisfy Assumptions \eqref{ass-B} and let $\beta$ be given by \eqref{beta-0-B}. Then
\begin{equation} \label{exp-dec-amm2}
\lim_{g \searrow 2}\, \big | \LL \psi_0, \, e^{-it P_\m(A,g)}\, \psi_0 \RR - e^{-it ( x_\eps -i \gamma_\eps)} \big|  =0
\end{equation}
uniformly in $t>0$, where $\eps= \frac{2-g}{2}, \ x_\eps =  \beta  \eps   \big[1+    \mathcal{O}\big(|\log \eps|^{-1}  \big)\big]$,  and
\begin{align*}
\gamma_\eps &= \frac{\eps\, | \LL   \psi_0, B\, \varphi_2 \RR|^2}{(\log \eps)^2}   \,    \big[1+    \mathcal{O}\big(|\log \eps|^{-1}
\big)\big]    \qquad\ \   \text{if} \quad  \LL   \psi_0, B\, \varphi_2 \RR \neq 0 , \\[6pt]
\gamma_\eps &= \frac{\eps^2}{4}   \Big(\,  \big | \LL  \psi_0,  B\, \varphi_1\RR  \big |^2 +  \frac{\pi^2 \beta^2}{\|e^h\|_2^2} \, \Big)
\qquad \qquad \quad \text{if} \quad  \LL   \psi_0, B\, \varphi_2 \RR = 0.
\end{align*}
\end{cor}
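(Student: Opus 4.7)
The plan is to apply Theorems \ref{thm-2} and \ref{thm-3} directly, with the perturbation taken to be $V=B$ and coupling constant $\eps = (2-g)/2$. First, I would rewrite the operator as
\[
P_\m(A,g) = (i\nabla+A)^2 - B + \eps B = P_\m(A) + \eps B,
\]
so that $P_\m(A,g)$ fits into the framework of Section \ref{sec-simple} with $V$ replaced by $B$.

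Second, I would verify the hypotheses of the two theorems. The positivity condition \eqref{cond-jn} is established exactly as in the proof of Corollary \ref{cor-amm1}: from $P_\m(A)\psi_0 = 0$ one has $(i\nabla+A)^2\psi_0 = B\psi_0$, and taking the inner product with $\psi_0$ gives $\beta = \|(i\nabla+A)\psi_0\|_2^2 > 0$, where strict positivity follows from the magnetic Hardy inequality \eqref{hardy} since $\psi_0$ is a nontrivial normalized function. The decay requirement $V \lesssim \langle\,\cdot\,\rangle^{-\rho}$ with $\rho>6$ is automatic, because Assumption \ref{ass-B} gives $|B(x)| \lesssim \langle x\rangle^{-2\rho}$ with $2\rho > 7 > 6$.

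Third, I would identify the coefficients $\w_1,\w_2$ from Section \ref{ssec-integer-flux} in the present setting. Since $B$ is real-valued,
\[
\w_j = \LL V\psi_0,\varphi_j\RR = \LL B\psi_0,\varphi_j\RR = \LL \psi_0, B\varphi_j\RR, \qquad j=1,2.
\]
This matches the quantities appearing in the statement of Corollary \ref{cor-amm3}.

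Finally, I would apply Theorem \ref{thm-2} in the case $\LL\psi_0,B\varphi_2\RR\neq 0$ and Theorem \ref{thm-3} in the case $\LL\psi_0,B\varphi_2\RR=0$, reading off the stated expressions for $x_\eps$ and $\gamma_\eps$ from \eqref{x0-eps-2}--\eqref{gamma-eps-2} and \eqref{x0-eps-3}--\eqref{gamma-eps-3}, and noting that $\eps = (2-g)/2 \searrow 0$ corresponds to $g\searrow 2$. The only mildly delicate step is the positivity of $\beta$, which is why I would invoke \eqref{hardy}; all remaining work is bookkeeping, since the corollary is in substance a specialization of the earlier theorems to the physically natural choice $V=B$.
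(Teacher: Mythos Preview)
Your proposal is correct and follows exactly the paper's approach: the paper's proof is the single sentence ``Since $\beta >0$, this is a combination of Theorems \ref{thm-2} and \ref{thm-3} applied with $V=B$,'' and you have spelled out precisely the verifications (positivity of $\beta$ via the Hardy inequality as in Corollary \ref{cor-amm1}, the decay of $B$, and the identification of $\w_1,\w_2$) that justify this one-line argument.
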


\begin{proof}
Since $\beta >0$, this is a combination of Theorems \ref{thm-2} and \ref{thm-3} applied with $V=B$.
\end{proof}

\section{\bf Resonances arising from a degenerate eigenvalue}
\label{sec-degenerate}

In this section we treat the resonances arising when the unperturbed operator has a degenerate eigenvalue at the threshold.
By \eqref{multiplicity},  zero is no longer a simple eigenvalue of $P(A)$ if $\alpha>2$. We need to isolate a zero-eigenfunction in order to
study
resonance behavior. It is natural to consider eigenfunctions of $P_0 VP_0$. Thus, we consider resonances arising as
\begin{equation} \nonumber
\LL \psi_0^{(V)}, \, e^{-it P_\m(A,\eps )}\, \psi_0^{(V)} \RR
\end{equation}
where $\psi_0^{(V)} \in P_0 L^2(\R^2)$ is a \emph{simple} eigenfunction of $P_0VP_0$. Similarly as in \cite{jn2}, in order to treat this case,
we
apply the SLFG formula twice: we first obtain the restriction of the resolvent to $P_0 L^2(\R^2)$ and then apply the formula again to obtain
the desired matrix element.

Explicitly, the operator $P_0VP_0$, as an operator on $P_0 L^2(\R^2)$, is an operator of rank at most $N$ (recall \eqref{multiplicity}), and
we work under the following assumption.
\begin{assumption} \label{ass-simple}
The operator $P_0 V P_0$ has a simple positive eigenvalue, $\kappa>0$, with (normalized) eigenfunction $\psi_0^{(V)}$. We let $P_1$ denote the
projection onto the subspace spanned by $\psi_0^{(V)}$ and $Q_1=P_0-P_1$.
\end{assumption}

\subsection{The application of the SLFG formula in the degenerate case}

As in the non-degenerate case, we take $\mathscr H_1= P_0 L^2(\R^2)$ and $\mathscr H_2=  Q_0 L^2(\R^2)$ so that (by \eqref{eq-resolvent})
\begin{equation} \nonumber
S(z)=\eps P_0VP_0-zP_0-\eps^2P_0vU\big(G(z)-\eps G(z) \left(U+\eps G(z)\right)^{-1}G(z)\big)UvP_0
\end{equation}
with $G(z)$ as defined in \eqref{G-eq}. For notational simplicity we let
\begin{equation} \label{S-tilde}
\widetilde{S}(z)=P_0vU\big(G(z)-\eps G(z) \left(U+\eps G(z)\right)^{-1}G(z)\big)UvP_0.
\end{equation}

We now write
\begin{equation} \nonumber
S(z)= \begin{pmatrix}
s_{11}(z) & s_{12}(z)  \\
s_{21}(z) & s_{22}(z)
\end{pmatrix} \\[3pt]
\end{equation}
where
\begin{equation} \nonumber
s_{11}(z)=\LL \psi_0^{(V)}, S(z) \psi_0^{(V)}\RR,
\end{equation}
so that, by the SLFG formula applied twice,
\begin{equation} \label{eq:SLFG-degenerate}
\begin{split}
&\LL \psi_0^{(V)}, \left(P_{\m}(A,\eps)-z \right)^{-1} \psi_0^{(V)}\RR =\LL \psi_0^{(V)}, S(z)^{-1} \psi_0^{(V)}\RR \\
&\quad =\left(s_{11}(z)-s_{12}(z) s_{22}(z)^{-1} s_{21}(z) \right)^{-1} \\
&\quad =\Big(\kappa \eps-z-\eps^2 \LL \psi_0^{(V)}, \widetilde{S}(z) \psi_0^{(V)} \RR-\eps^4\LL \psi_0^{(V)}, \widetilde{S}(z) Q_1 \left(\eps
Q_1 VQ_1-zQ_1-\eps^2 Q_1 \widetilde{S}(z)Q_1 \right)^{-1}Q_1 \widetilde{S}(z) \psi_0^{(V)} \RR \Big)^{-1} \\
&\quad = \widetilde{F}(z,\eps)^{-1}.
\end{split}
\end{equation}

\subsection{Non-integer flux}
For $\alpha \notin \Z$ we recall the decomposition \eqref{alpha-prime} and, with Proposition \ref{prop:degenerate-ni-exp} in mind, we denote
\begin{equation} \label{eq:Htilde}
\widetilde{H}(x)=\frac{\omega(1+\alp)  \, x^{\alpha'-1}}{1+ c_1 \omega(1+\alp) \, x^{\alpha'}}\  \psi \LL \psi,\cdot
\RR-\frac{\zeta(\alpha')\,
x^{-\alpha'}}{1+\varrho\, \zeta(\alpha') \, x^{1-\alpha'}}\ \varphi \LL \varphi,\cdot \RR.
\end{equation}
and
\begin{equation} \label{eq:Jtilde}
\widetilde{J}_\eps: = \Big( \frac{\kappa\eps}{2}\, ,\,   \frac{3\kappa\eps}{2}\Big )
\end{equation}

We also define the coefficients
\begin{equation} \label{eq:w34}
\w_3=\LL V \psi_0^{(V)}, \, \psi \RR, \qquad
\w_4=\LL V \psi_0^{(V)}, \varphi \RR,
\end{equation}
and
\begin{equation} \label{eq:tildenu}
\widetilde{\nu}=\left\{
\begin{array}{cc}
\mu  \, &\quad \text{if} \  \w_3 \neq 0\neq \w_4   , \\[4pt]
\alp  &\quad \text{if}  \ \ \w_3 \neq 0=\w_4 , \\[4pt]
1-\alpha' &\quad\ \  \text{if} \ \  \w_3=0 \neq \w_4 , \ ,
\end{array}
\right.
\end{equation}
where $\mu$ be defined by \eqref{mu}.
Finally, we let
\begin{equation} \label{eq:Ktilde}
\widetilde{K}(x,\eps)=\kappa \eps -x-\eps^2 \LL \psi_0^{(V)}, V\widetilde{H}(x)V \psi_0^{(V)} \RR.
\end{equation}

\begin{lem} \label{lem:FKtilde}
For $\eps>0$ sufficiently small
\begin{equation} \nonumber
\sup_{x\in \widetilde{J}_\eps} \big | \widetilde{F}(x,\eps)-\widetilde{K}(x,\eps) \big | \  \lesssim \    \eps^{1+2\widetilde{\nu}} +\eps^2 .
\end{equation}
\end{lem}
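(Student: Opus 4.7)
The lemma is the degenerate analogue of Lemma~\ref{lem-FK}, so the overall architecture is the same. The new features are that the leading part $\widetilde{H}(x)$ carries two singular rank-one summands (of orders $x^{\alp-1}$ and $x^{-\alp}$), and that $\widetilde{F}(x,\eps)$ contains an additional quartic term involving the inverse of
\[
M(x) := \eps Q_1 V Q_1 - x Q_1 - \eps^2 Q_1 \widetilde{S}(x) Q_1
\]
on $Q_1 L^2(\R^2)$. The plan is to decompose $\widetilde{F}-\widetilde{K}$ into (i) the ``diagonal'' error $-\eps^2 \langle \psi_0^{(V)}, (\widetilde{S}(x) - P_0 V \widetilde{H}(x) V P_0) \psi_0^{(V)}\rangle$ and (ii) the quartic correction, and to estimate each separately.

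As a preliminary step, I would establish the uniform bound
\[
\|G(x)\|_{L^2 \to L^2} \lesssim x^{\alp - 1} + x^{-\alp} + 1 \qquad (x \in \widetilde{J}_\eps)
\]
by inserting Proposition~\ref{prop:degenerate-ni-exp} into \eqref{G-eq}, using that $v\lesssim \langle\cdot\rangle^{-\rho}$ with $\rho > 3$ absorbs the weights in $\B(-1,s;1,-s)$ (cf.\ Remark~\ref{rem-V-decay}). Consequently $\eps \|G(x)\|\to 0$ uniformly on $\widetilde{J}_\eps$, so $U + \eps G(x)$ is invertible with uniformly bounded inverse, and $\widetilde{S}(x)$ is well-defined. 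For step~(i), using $vUv = V$ and the relation $R_0(x) = \widetilde{H}(x) + S_0 + o(1)$ in $\B(-1,s;1,-s)$, pairing with $\psi_0^{(V)}$ gives
\[
\langle \psi_0^{(V)}, P_0 vU G(x) Uv P_0 \psi_0^{(V)} \rangle = \langle \psi_0^{(V)}, V\widetilde{H}(x) V \psi_0^{(V)} \rangle + \langle \psi_0^{(V)}, V S_0 V \psi_0^{(V)} \rangle + o(1),
\]
producing an $O(\eps^2)$ contribution to $\widetilde{F}-\widetilde{K}$. The remaining cubic piece of $\widetilde{S}(x)$ contributes at most $\eps^3 \|G(x) Uv \psi_0^{(V)}\|^2$; the rank-one structure of $\widetilde{H}$ forces the singular factors in $G(x) Uv \psi_0^{(V)}$ to appear only through the scalars $\overline{\w_3} = \langle \psi, V\psi_0^{(V)}\rangle$ and $\overline{\w_4} = \langle \varphi, V\psi_0^{(V)}\rangle$, giving
\[
\|G(x) Uv \psi_0^{(V)}\|^2 \lesssim x^{2\alp-2}|\w_3|^2 + x^{-2\alp}|\w_4|^2 + 1,
\]
and hence a bound of order $\eps^{2\alp+1}|\w_3|^2 + \eps^{3-2\alp}|\w_4|^2 + \eps^3$ on $\widetilde{J}_\eps$.

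For step~(ii), the leading part of $M(x)$ is $\eps Q_1 V Q_1$. Under Assumption~\ref{ass-simple}---supplemented by the fact that the remaining spectrum of $P_0 V P_0$ on $Q_1 L^2$ is bounded away from $0$---this operator is invertible on $Q_1 L^2$ with a uniform gap, so $\|M(x)^{-1}\|_{Q_1 L^2 \to Q_1 L^2} \lesssim \eps^{-1}$ for $\eps$ small enough. Since $\|Q_1 \widetilde{S}(x) \psi_0^{(V)}\|$ admits the same upper bound as $\|G(x) Uv \psi_0^{(V)}\|$, the quartic term is bounded by
\[
\eps^4 \|Q_1 \widetilde{S}(x) \psi_0^{(V)}\|^2 \, \|M(x)^{-1}\| \lesssim \eps^{2\alp+1}|\w_3|^2 + \eps^{3-2\alp}|\w_4|^2 + \eps^3.
\]
Adding the two contributions and inspecting the three subcases of \eqref{eq:tildenu}, in each case $\eps^{2\alp+1}|\w_3|^2 + \eps^{3-2\alp}|\w_4|^2 \lesssim \eps^{1+2\widetilde{\nu}}$, while $\eps^3 \lesssim \eps^2$, yielding the claimed bound $\eps^{1+2\widetilde{\nu}} + \eps^2$. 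The main obstacle is securing $\|M(x)^{-1}\| \lesssim \eps^{-1}$ uniformly on $\widetilde{J}_\eps$; once that is in hand, everything else is a more careful bookkeeping version of the proof of Lemma~\ref{lem-FK}, tracking two rank-one summands in parallel rather than one.
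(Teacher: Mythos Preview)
Your approach mirrors the paper's proof closely: both split $\widetilde{F}-\widetilde{K}$ into the cubic contribution coming from $\widetilde{S}$ and the quartic correction, bound the cubic piece via the rank-one structure of $\widetilde{H}(x)$ and the coefficients $\w_3,\w_4$, and handle the quartic piece using an $\eps^{-1}$ bound on $M(x)^{-1}$.

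There is one slip in step~(ii). You write that ``the leading part of $M(x)$ is $\eps Q_1 V Q_1$'' and invoke a gap at~$0$ for $Q_1 V Q_1$ to invert it. But on $\widetilde{J}_\eps$ the term $-xQ_1$ has norm $\sim \kappa\eps$, i.e.\ of the \emph{same} order as $\eps Q_1 V Q_1$; it cannot be treated as a small perturbation of the latter, and your condition that the remaining spectrum be bounded away from~$0$ is neither necessary nor sufficient. What is actually needed is the invertibility of
\[
\eps Q_1 V Q_1 - x Q_1 \;=\; \eps\bigl(Q_1 V Q_1 - (x/\eps)\, Q_1\bigr)
\]
on $Q_1 L^2(\R^2)$, with norm of the inverse $\lesssim \eps^{-1}$. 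This holds because $x/\eps$ stays in a fixed neighbourhood of $\kappa$ while $\spec\bigl(Q_1 V Q_1|_{Q_1 L^2}\bigr) = \spec(P_0 V P_0)\setminus\{\kappa\}$ is bounded away from $\kappa$ by the simplicity hypothesis in Assumption~\ref{ass-simple} (after shrinking, if necessary, the half-width of $\widetilde{J}_\eps$ below that spectral gap). This is precisely the argument the paper uses; once $\eps^2 Q_1\widetilde{S}(x)Q_1 = \mathcal{O}(\eps^{1+\mu})$ is absorbed by a Neumann series, one gets $\|M(x)^{-1}\| \lesssim \eps^{-1}$. With this correction in place, the remainder of your bookkeeping is accurate and matches the paper's proof.
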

Here $\widetilde{F}(x,\eps)$ is defined in the same way as $F(x,\eps)$, see equation \eqref{F-of-x}.

\begin{proof}
Note that
\begin{equation} \nonumber
\begin{split}
\widetilde{F}(x,\eps)-\widetilde{K}(x,\eps)&=\eps^3 \LL \psi_0^{(V)}, vUG(x) \left(U+\eps G(x)\right)^{-1}G(x)Uv \psi_0^{(V)} \RR \\
&\quad -\eps^4\LL \psi_0^{(V)}, \widetilde{S}(x) Q_1 \big(\eps Q_1 VQ_1-xQ_1-\eps^2 Q_1 \widetilde{S}(x)Q_1 \big)^{-1}Q_1 \widetilde{S}(x)
\psi_0^{(V)} \RR +\mathcal{O}(\eps^2).
\end{split}
\end{equation}

We treat the two terms above separately. The first term is treated in the same way as in Lemma \ref{lem-FK}: for $\eps$ sufficiently small,
$(U+\eps
G(x))$ is boundedly invertible and so, by \eqref{eq:ResolventExpansionDeg},
\begin{equation} \nonumber
\begin{split}
\eps^3 \big| \LL \psi_0^{(V)}, vUG(x) \left(U+\eps G(x)\right)^{-1}G(x)Uv \psi_0^{(V)} \RR \big| & \lesssim \eps^3 \big
\|v\widetilde{H}(x)V \psi_0^{(V)} \big \|^2+\mathcal{O}(\eps^3)
\lesssim \eps^3 \big|\w_3 \eps^{\alpha'-1}+\w_4 \eps^{-\alpha'} \big|^2 \\[4pt] & \lesssim \eps^{1+2\widetilde{\nu}}+\eps^2.
\end{split}
\end{equation}
As for the second term, note that
$$
\eps^2 \sup_{x\in \widetilde{J}_\eps}  \|  \widetilde{S}(x) \|_{2\to 2} \lesssim\   \eps^{1+\mu}
$$
and that for sufficiently small $\eps$ and any $x \in \widetilde{J}_\eps$,
\begin{equation} \nonumber
\dist(x, \spec (Q_1VQ_1))=\dist\big (x, \spec(P_0 V P_0) \setminus \{\kappa\}\big) >\eps.
\end{equation}
Thus
\begin{equation} \label{Q1-eps}
\big \|  \big(\eps Q_1 VQ_1-xQ_1-\eps^2 Q_1 \widetilde{S}(x)Q_1 \big)^{-1} \big \|_{2\to 2}\  \lesssim\  \frac{1}{\eps}\, ,
\end{equation}
which implies
\begin{equation} \nonumber
\begin{split}
\eps^4 \, \big |\LL \psi_0^{(V)}, \widetilde{S}(x) Q_1 \big(\eps Q_1 VQ_1-xQ_1-\eps^2 Q_1 \widetilde{S}(x)Q_1 \big)^{-1}Q_1 \widetilde{S}(x)
\psi_0^{(V)} \RR \big | &\, \lesssim\,  \eps^3 \big \| \widetilde{S}(x) \psi_0^{(V)} \big \|^2. \\
\end{split}
\end{equation}
Using again the fact that $U+\eps G(x)$ is boundedly invertible for $\eps$ small enough, a straightforward computation shows that for such
$\eps$
\begin{equation} \nonumber
\begin{split}
\eps^3 \big\| \widetilde{S}(x) \psi_0^{(V)} \big \|^2 & \, \lesssim \, \eps^3  \big  \| V\widetilde{H}(x) V \psi_0^{(V)} \big \|^2+\eps^4
\big \|V\widetilde{H}(x) V\psi_0^{(V)} \big\| \big \|V\widetilde{H}(x)v(U+\eps G(z))^{-1}v \widetilde{H}(x) V \psi_0^{(V)} \big \| \\[4pt]
& \ \quad+\eps^5 \big \| V\widetilde{H}(x) v (U+\eps G(x))^{-1}v \widetilde{H}(x)\psi_0^{(V)}\big \|^2 +\mathcal{O}(\eps^3) \\[4pt]
& \lesssim\,  \eps^{1+2\widetilde{\nu}}+\eps^{1+3\widetilde{\nu}}+\eps^{1+4\widetilde{\nu}}+\eps^2 \lesssim \eps^{1+2\widetilde{\nu}}+\eps^2.
\end{split}
\end{equation}
This finishes the proof.
\end{proof}

If we now define
\begin{equation} \label{eta-sigma-tilde}
\widetilde \eta\,   =  \frac{ \pi^2\, |d|^2 }{4^{\alp}\, \Gamma^2(1+\alp)} \ , \qquad \ \widetilde \sigma\,   =
\frac{4^{\alp-1}}{\Gamma^2(1-\alp)} \, .
\end{equation}
and
\begin{equation} \label{g-tilde}
\widetilde g(x) = \frac{\widetilde \eta\,   |\w_3|^2\,  x^{\alp-1}}{|1+c_1 \omega(1+\alp) \, x^{\alp}|^2}\, +\,   \frac{\widetilde \sigma \,
|\w_4|^2\,  x^{-\alp}}{|1+\varrho\,  \zeta(\alp) \, x^{1-\alp}|^2} \, , \\[4pt]
\end{equation}
then we can state

\begin{thm} \label{thm-deg-1}
Let $2 <\alpha\not\in\Z$ and let $B$ satisfy Assumption \ref{ass-B}. Suppose moreover that $V \lesssim \langle \, \cdot\, \rangle ^{-\rho}$
for some $\rho>6$. Let $\psi_0^{(V)}$ be an eigenfunction of $\po V \po $ which satisfies Assumption \ref{ass-simple}, and assume that
\begin{equation} \label{w-34-con}
|\w_3|^2 +|\w_4|^2  >0.
\end{equation}
Then for sufficiently small $\eps$ we have
\begin{equation*}
\sup_{t>0} \big | \LL \psi_0^{(V)}, \, e^{-it P_\m(A,\eps )}\, \psi_0^{(V)} \RR - e^{-it ( x_\eps -i \gamma_\eps)} \big|  \, \lesssim\,
\eps^\mu,
\end{equation*}
where
$$x_\eps = \kappa\, \eps \big(1+ \mathcal{O}(\eps^{\widetilde\nu})\big)$$
and
\begin{equation} \label{deg-gamma}
\gamma_\eps = \eps^2\, \widetilde g(x_\eps).
\end{equation}
Recall that the
coefficients $\w_3$ and $\w_4$ are defined in \eqref{eq:w34}.
\end{thm}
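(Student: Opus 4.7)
\textbf{Plan of proof of Theorem \ref{thm-deg-1}.} The strategy mirrors the proof of Theorem \ref{thm-1}, replacing $\psi_0$ by the simple eigenfunction $\psi_0^{(V)}$ of $P_0 V P_0$ and using the SLFG formula twice, once to pass to $P_0 L^2(\R^2)$ and a second time to isolate the one-dimensional subspace spanned by $\psi_0^{(V)}$. The effective scalar resolvent matrix element is given by equation \eqref{eq:SLFG-degenerate}, namely
\begin{equation*}
\LL \psi_0^{(V)}, (P_\m(A,\eps)-z)^{-1} \psi_0^{(V)} \RR = \widetilde F(z,\eps)^{-1},
\end{equation*}
and the role previously played by the interval $J_\eps$ is now played by $\widetilde J_\eps$ defined in \eqref{eq:Jtilde}.

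First, I would use Lemma \ref{lem:FKtilde} to replace $\widetilde F(x,\eps)$ by $\widetilde K(x,\eps)$ with error $\eps^{1+2\widetilde\nu}+\eps^2$ on $\widetilde J_\eps$. Next, splitting $\widetilde K = \widetilde K_1 + i \widetilde K_2$, one uses the identity \eqref{gamma-function} applied to the definitions \eqref{eq:Htilde} and \eqref{zeta} to compute
\begin{equation*}
\widetilde K_2(x,\eps) = -\eps^2 \widetilde g(x), \qquad \widetilde K_1(x,\eps) = \kappa \eps - x - \eps^2 \widetilde f(x),
\end{equation*}
where $\widetilde g$ is the function defined in \eqref{g-tilde} and $\widetilde f$ has the same structure as the function $f$ in \eqref{f-eq}, with $\psi_0$ replaced by $\psi_0^{(V)}$ and $\alpha-1$ replaced by $\alp$. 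The derivative bounds \eqref{f-der} carry over to $\widetilde f$ and $\widetilde g$ with $\nu$ replaced by $\widetilde \nu$, which under hypothesis \eqref{w-34-con} forces the imaginary part to satisfy $|\widetilde K_2(x,\eps)| \gtrsim \eps^{1+\widetilde\nu}$. An analog of Lemma \ref{lem-K-real} then yields a unique $x_\eps\in\widetilde J_\eps$ with $\widetilde K_1(x_\eps,\eps)=0$, and $x_\eps = \kappa\eps(1+\mathcal{O}(\eps^{\widetilde\nu}))$.

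With $x_\eps$ identified, I would set $I_\eps = [x_\eps - \kappa\eps/4,\, x_\eps +\kappa\eps/4]$, define
\begin{equation*}
p_\eps(t) = \frac{1}{\pi}\, \lim_{y\to 0+}\int_{I_\eps} e^{-itx}\, {\rm Im}\,\widetilde F(x+iy,\eps)^{-1}\, dx,
\end{equation*}
and reproduce almost verbatim the chain of estimates in Lemmas \ref{lem-FtoK}, \ref{lem-KtoL} and \ref{lem-p-eps}. The key replacements of $\widetilde F^{-1}$ by $\widetilde K^{-1}$ and then by $L(x,\eps)^{-1}$ (with $L(x,\eps) = x_\eps - x - i\gamma_\eps$ and $\gamma_\eps = \eps^2 \widetilde g(x_\eps)$) are controlled by the same Lorentzian-integral bounds, the mean value theorem applied to $\widetilde f,\widetilde g$, and Lemma \ref{lem-jn}. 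The extension of the integral from $I_\eps$ to $\R$ via \eqref{expl-lorenz} leads to the Lorentzian tail estimate $\int_{\R\setminus I_\eps} \gamma_\eps((x-x_\eps)^2+\gamma_\eps^2)^{-1}dx = \mathcal{O}(\eps^{\widetilde\nu})$.  Finally, the Hunziker argument \eqref{eq-hunz-1}--\eqref{eq-hunz-2} upgrades the approximation of $p_\eps(t)$ to the approximation of the full survival amplitude, producing the bound $\lesssim \eps^\mu$ claimed in the theorem.

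The main obstacle, compared to the non-degenerate case, is the control of the extra term involving $Q_1 (\eps Q_1 V Q_1 - z Q_1 - \eps^2 Q_1 \widetilde S(z) Q_1)^{-1} Q_1$ arising from the second SLFG reduction. Because $\kappa$ is a \emph{simple} eigenvalue of $P_0 V P_0$, the spectrum of $Q_1 V Q_1$ is separated from $\kappa$, so on $\widetilde J_\eps$ the inverse satisfies the crude bound \eqref{Q1-eps}, which is exactly what is needed in Lemma \ref{lem:FKtilde} to absorb this term into the error. Once this lemma is granted, the remaining analysis is essentially a transcription of the proof of Theorem \ref{thm-1}, with $\beta$, $\psi_0$, $\w$, $\nu$ replaced throughout by $\kappa$, $\psi_0^{(V)}$, $(\w_3,\w_4)$, $\widetilde\nu$.
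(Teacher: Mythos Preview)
Your proposal is correct and follows essentially the same approach as the paper's proof: both define $\widetilde K_1,\widetilde K_2$ via the functions $\widetilde f,\widetilde g$ having the same structure as \eqref{g-eq},\eqref{f-eq}, invoke the analog of Lemma~\ref{lem-K-real} to locate $x_\eps$, and then transcribe Lemmas~\ref{lem-FtoK}--\ref{lem-p-eps} and the Hunziker argument. Your explicit identification of the second SLFG term as the main new obstacle, controlled through the simplicity of $\kappa$ and the bound \eqref{Q1-eps}, matches exactly what is absorbed into Lemma~\ref{lem:FKtilde} in the paper.
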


\begin{proof}
Let
\begin{equation} \label{f-tilde}
\widetilde f(x) = \frac{ {\rm Re} (\omega(1+\alp)  ) \, x^{\alp-1} +|\omega(1+\alp)  |^2\,  x^{2\alp-1}}{|1+c_1 \omega(1+\alp)  \,
x^{\alp}|^2}\,
|\w_3|^2\, -    \frac{ {\rm Re} (\zeta(\alp) )\,   x^{-\alp} + \varrho\, |\zeta(\alp) |^2\, x^{1-2\alp} }{|1+\varrho\,  \zeta(\alp)\,
x^{1-\alp}|^2} \,
|\w_4|^2\, .
\end{equation}

In view of \eqref{gamma-function} and \eqref{zeta} we then have
\begin{equation} 
\begin{aligned}
\widetilde{K}_1(x,\eps) & = {\rm Re} \big(\widetilde{K}(x,\eps)\big) = \eps \kappa-x-\eps^2 \widetilde f(x) \\
\widetilde{K}_2(x,\eps) &= {\rm Im} \big(\widetilde{K}(x,\eps)\big) = -\eps^2\, \widetilde g(x)
\end{aligned}
\end{equation}
for all $x>0$. Note that in the non-degenerate case, when $1<\alpha <2$, we have $\alpha= 1+\alp$. Hence the functions $\widetilde g, \,
\widetilde f$
defined above have the same structure, with different coefficients, as their (non-degenerate) counterparts \eqref{g-eq}, \eqref{f-eq}. From
this point on we can thus repeat step by the step the arguments in the analysis of the non-degenerate case. In particular, a straightforward
modification of
Lemma \ref{lem-K-real} gives
\begin{equation}
\widetilde K_1(x,\eps) = \kappa \eps -x +  \mathcal{O}(\eps^{1+\widetilde\nu}).
\end{equation}
Furthermore, for $\eps$ small enough there exits a unique $x_\eps \in \widetilde J_\eps$ such that $\widetilde K_1(x_\eps ,\eps)=0$.
With this choice of $x_\eps$ we set
\begin{equation*}
\widetilde  I_\eps = \Big[ x_\eps - \frac{\kappa\eps}{4}, \, x_\eps + \frac{\kappa\eps}{4} \Big ] \, , \\[4pt]
\end{equation*}
and
\begin{equation*}
\widetilde p_\eps(t) =  \frac 1\pi \int_{\widetilde I_\eps} e^{-it x}\,   \, {\rm Im}\,\Big[\,  \frac{1}{\widetilde F(x, \eps)}\,  \Big]\,
dx\, .
\end{equation*}

Following the proofs of Lemmas \ref{lem-FtoK}-\ref{lem-p-eps} we then conclude that

\begin{equation*}
\sup_{t>0} \big | \widetilde p_\eps(t) -  e^{-it (x_\eps -i \gamma_\eps)}  \big |  \, \lesssim\, \eps^\mu.
\end{equation*}
holds for $\eps$ small enough. To prove the claim it remains to apply the Hunziker argument as in the proof of Theorem \ref{thm-1}, see in
particular
equations \eqref{eq-hunz-1}-\eqref{eq-hunz-2}.
\end{proof}

\smallskip

\begin{rem}
The constraint $\kappa>0$ in Assumption \ref{ass-simple} replaces condition \ref{cond-jn} in the non-degenerate case. Note also that
Theorem \ref{thm-deg-1} and equation \eqref{g-tilde} imply
\begin{equation*}
\gamma_\eps = \widetilde \eta\,  \kappa^{\alp-1} \, |\w_3|^2\,  \eps^{1+\alp} \big(1+ \mathcal{O}(\eps^\alp)\big) + \widetilde \sigma\,
\kappa^{-\alp} |\w_4|^2 \, \eps^{2-\alp}\, \big(1+ \mathcal{O}(\eps^{1-\alp})\big).
\end{equation*}
\end{rem}

\subsection{Integer flux}
 Given an eigenfunction $\psi_0^{(V)}$ of $\po V\po$ we denote
\begin{equation} \label{w6}
 \w_6=\LL V \psi_0^{(V)}, \, \varphi_2 \RR
\end{equation}

Similarly as in the case of the simple eigenvalue we first prove the result when $\w_6\neq 0$.

\begin{thm} \label{thm-deg-2}
Let $2 <\alpha\in\Z$ and let $B$ satisfy Assumption \ref{ass-B}. Suppose moreover that $V \lesssim \langle \, \cdot\, \rangle ^{-\rho}$
for some $\rho>6$. Let $\psi_0^{(V)}$ be an eigenfunction of $\po V \po $ which satisfies Assumption \ref{ass-simple}, and assume that
$\w_6\neq 0$.
Then for sufficiently small $\eps$ we have
\begin{equation*}
\sup_{t>0} \big | \LL \psi_0^{(V)}, \, e^{-it P_\m(A,\eps )}\, \psi_0^{(V)} \RR - e^{-it ( x_\eps -i \gamma_\eps)} \big|  \, \lesssim\,
|\log \eps|^{-1} ,
\end{equation*}
where $x_\eps = \kappa\, \eps \big(1+ \mathcal{O}(|\log \eps|^{-1} )\big)$ and where
$$
\gamma_\eps= \frac{\eps\, |\w_6|^2}{\kappa\, (\log \eps)^2}   \,    \big[1+    \mathcal{O}\big(|\log \eps|^{-1}  \big)\big] \, .
$$
\end{thm}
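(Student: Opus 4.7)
The plan is to combine the double Schur–Livsic–Feshbach–Grushin reduction used in Theorem \ref{thm-deg-1} with the logarithmic resolvent expansion (Proposition \ref{prop:degenerate-int-exp}) that drove Theorem \ref{thm-2}. Concretely, I would start from the identity \eqref{eq:SLFG-degenerate}, which already reduces the matrix element $\LL \psi_0^{(V)},(P_-(A,\eps)-z)^{-1}\psi_0^{(V)}\RR$ to $\widetilde F(z,\eps)^{-1}$, and then substitute the integer-flux expansion \eqref{res-exp-2} of $(P_-(A)-z)^{-1}$ (in place of \eqref{eq:ResolventExpansionDeg}) into the operator $\widetilde S(x)$ of \eqref{S-tilde}. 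Working on the window $\widetilde J_\eps = (\kappa\eps/2,\,3\kappa\eps/2)$, the goal is to derive the integer analogue of Lemmas \ref{lem-F-exp} and \ref{lem:FKtilde} with the approximation
\[
\widetilde F(x,\eps) = \widetilde K(x,\eps) + \mathcal{O}\bigl(\eps\,|\log\eps|^{-3}\bigr),
\]
where $\widetilde K(x,\eps) = \kappa\eps - x - \eps^2\LL V\psi_0^{(V)},H(x)V\psi_0^{(V)}\RR + \eps^3\widetilde h(x)$ keeps the $\eps^3$-correction capturing both the quadratic term from the inner SLFG (as in Lemma \ref{lem-F-exp}) and the leading real contribution of the outer $Q_1$-block.

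Once this expansion is in hand, I split $\widetilde K = \widetilde K_1 + i\widetilde K_2$. The imaginary part, extracted from $(\log x + m_\0 - i\pi)^{-1}$ exactly as in the non-degenerate integer case, is
\[
\widetilde K_2(x,\eps) = -\eps^2\widetilde g(x),\qquad \widetilde g(x) = \frac{|\w_6|^2}{\pi x\bigl((\log x + m_\0)^2+\pi^2\bigr)}.
\]
An analogue of Lemma \ref{lem-K-real-2} then produces a unique zero $x_\eps \in \widetilde J_\eps$ of $\widetilde K_1(\cdot,\eps)$ with $x_\eps = \kappa\eps\bigl(1+\mathcal{O}(|\log\eps|^{-1})\bigr)$, and setting $\gamma_\eps := \eps^2\widetilde g(x_\eps)$ yields the claimed value $\gamma_\eps = \eps|\w_6|^2/\bigl(\kappa(\log\eps)^2\bigr)\bigl[1+\mathcal{O}(|\log\eps|^{-1})\bigr]$. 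The rest of the chain---Lemmas \ref{lem-FtoK-int}, \ref{lem-KtoL-2}, \ref{lem-p-eps-int}, and the Hunziker argument---transfers verbatim, with $(\psi_0,\beta,\w_2,F,K)$ replaced throughout by $(\psi_0^{(V)},\kappa,\w_6,\widetilde F,\widetilde K)$.

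The hard part will be controlling the outer $Q_1$-block contribution
\[
\eps^4 \LL \psi_0^{(V)},\widetilde S(x)Q_1\bigl(\eps Q_1VQ_1 - xQ_1 - \eps^2 Q_1\widetilde S(x)Q_1\bigr)^{-1}Q_1\widetilde S(x)\psi_0^{(V)}\RR.
\]
The simplicity of $\kappa$ as an eigenvalue of $\po V\po$ still gives the operator bound \eqref{Q1-eps} of order $1/\eps$, but the naive estimate $\|\widetilde S(x)\psi_0^{(V)}\|^2 \lesssim |\w_6|^2/\bigl(\eps^2(\log\eps)^2\bigr)$ only yields a size $\eps|\w_6|^2/(\log\eps)^2$ for this block, which is comparable to the leading imaginary part we are trying to extract. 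Unlike the non-integer case of Lemma \ref{lem:FKtilde}, where the $Q_1$-block was smaller than the main term by a factor $\eps^{\widetilde\nu}$, here one must refine the bound: decompose $\widetilde S(x) = S_1(x) + iS_2(x)$ into its self-adjoint and anti-self-adjoint parts, use $\|S_1(x)\| \sim 1/(\eps|\log\eps|)$ versus $\|S_2(x)\| \sim 1/(\eps(\log\eps)^2)$ together with the self-adjointness of $M(x) := (\eps Q_1VQ_1 - xQ_1 - \eps^2 Q_1\widetilde S Q_1)^{-1}$ to leading order, and conclude that the imaginary part of the $Q_1$ block equals $2\,\mathrm{Re}\LL\psi_0^{(V)},S_1 Q_1 M Q_1 S_2\,\psi_0^{(V)}\RR + \mathcal{O}(\eps|\log\eps|^{-3})$, which is of size $\eps|\log\eps|^{-3}$. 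Its real part, of size $\eps|\log\eps|^{-2}$, is then folded into the definition of $\widetilde h(x)$ along with the quadratic inner-SLFG correction.

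Apart from this refined estimate on the $Q_1$-block, every remaining step is a routine adaptation of the non-degenerate integer analysis of Section \ref{ssec-integer-flux}; in particular the identity \eqref{expl-lorenz}, Lemma \ref{lem-jn}, and the tail bound for $\int_{\R\setminus I_\eps}\gamma_\eps/((x-x_\eps)^2+\gamma_\eps^2)\,dx$ apply without change with $\gamma_\eps$ of order $\eps/(\log\eps)^2$, delivering the Hunziker-type error $\mathcal{O}(|\log\eps|^{-1})$ announced in the theorem.
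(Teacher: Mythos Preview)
Your overall architecture is correct and matches the paper: reduce via the double SLFG identity \eqref{eq:SLFG-degenerate}, feed in the integer-flux expansion \eqref{res-exp-2}, produce an approximation $\widehat K$ with error $\mathcal{O}(\eps|\log\eps|^{-3})$, and then rerun Lemmas \ref{lem-K-real-2}--\ref{lem-p-eps-int} with $(\psi_0,\beta,\w_2)$ replaced by $(\psi_0^{(V)},\kappa,\w_6)$.

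Where you diverge from the paper is precisely in the treatment of the $Q_1$-block, and the paper's route is noticeably simpler than your proposed $S_1+iS_2$ decomposition of $\widetilde S$. The paper does not split $\widetilde S$ at all. Instead it observes that $\|\widetilde S(x)\|\lesssim(\eps|\log\eps|)^{-1}$ and uses one step of the resolvent identity together with \eqref{Q1-eps} to replace the full inverse by its \emph{self-adjoint} leading part:
\[
\bigl(\eps Q_1VQ_1-xQ_1-\eps^2Q_1\widetilde S(x)Q_1\bigr)^{-1}
=\bigl(\eps Q_1VQ_1-xQ_1\bigr)^{-1}+\mathcal{O}\!\bigl((\eps|\log\eps|)^{-1}\bigr).
\]
Setting $\Phi(x)=Q_1\widetilde S(x)\psi_0^{(V)}$, the $Q_1$-block becomes $\eps^4 k(x;\eps)+\mathcal{O}(\eps|\log\eps|^{-3})$ with
\[
k(x;\eps)=\bigl\langle \Phi(x),\,(\eps Q_1VQ_1-xQ_1)^{-1}\Phi(x)\bigr\rangle,
\]
which is \emph{automatically real} because $(\eps Q_1VQ_1-xQ_1)^{-1}$ is self-adjoint, irrespective of any structure in $\Phi(x)$. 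Thus the entire $Q_1$-block lands in $\widehat K_1$ with no contribution to the imaginary part, and one only needs the crude derivative bounds $\eps^4|\partial_x^j k|=\mathcal{O}(\eps^{1-j}|\log\eps|^{-2})$ to push through the remaining lemmas. Your decomposition $\widetilde S=S_1+iS_2$ with $\|S_2\|\sim(\eps(\log\eps)^2)^{-1}$ does lead to the same error estimate, but it requires tracking several cross terms and the non-self-adjointness of $M$; the paper's device of moving the self-adjointness from $\widetilde S$ to the inverse sidesteps all of that in one line.
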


\begin{proof}
From equation \eqref{S-tilde} and the resolvent expansion \eqref{res-exp-2}  we deduce that for sufficiently small $\eps$,
$$
\sup_{x \in \widetilde{J}_\eps} \|  \widetilde{S}(x) \|_{2\to 2} \lesssim\   \frac{1}{\eps |\log\eps |} \,
$$
The resolvent equation in combination with \eqref{Q1-eps} thus implies that, as
bounded operators on $L^2(\R^2)$,
$$
 \big(\eps Q_1 VQ_1-xQ_1-\eps^2 Q_1 \widetilde{S}(x)Q_1 \big)^{-1} =  \big(\eps Q_1 VQ_1-xQ_1 \big)^{-1} + \mathcal{O}\Big(\frac{1}{\eps
 |\log\eps |}  \Big),
$$
where the error term is uniform in $x \in \widetilde{J}_\eps$. Hence writing $\Phi(x)  = Q_1 \widetilde{S}(x)  \psi_0^{(V)}$ it follows from
\eqref{eq:SLFG-degenerate} that
\begin{equation} \label{F-tilde-approx}
\widetilde{F}(x,\eps) = \kappa \eps-x-\eps^2 \LL \psi_0^{(V)}, \widetilde{S}(x) \psi_0^{(V)} \RR-\eps^4 k(x;\eps) +  \mathcal{O}\big(\eps\,
|\log\eps |^{-3} \big) ,
\end{equation}
where we have abbreviated
$$
k(x;\eps) = \LL \Phi(x), \left(\eps Q_1 VQ_1-xQ_1 \right)^{-1} \Phi(x) \RR \, .
$$
Notice that $k(x,\eps) \in \R$ for any $x>0$.
Similarly as in \eqref{fgh} we now define
\begin{equation} \label{fgh-hat}
\widehat\g(x) = \frac{|\w_6|^2}{x\, (\log x)^2} \, , \qquad\widehat f(x) = \frac{\widehat\g(x)}{\pi} \,  (\log x+m_\0), \qquad \widehat h(x) =
\frac{|\w_6|^2\, \LL  \varphi_2,
V  \varphi_2 \RR}{\pi^2 \, (x \log x)^2} ,
\end{equation} 
and $\widehat K(x;\eps) = \widehat K_1(x; \eps) + i \widehat K_1(x; \eps)$ with
\begin{align*}
\widehat K_1(x; \eps) &=  \kappa  \eps -x -\eps^2\, \widehat f(x) + \eps^3\,  \widehat h(x) -\eps^4 k(x;\eps) \\
\widehat K_2(x; \eps) &=  -\eps^2\, \widehat\g(x) .
\end{align*}
By equation \eqref{F-tilde-approx} we then have
\begin{equation} \label{F-tilde-K-hat}
\sup_{x\in \widetilde{J}_\eps} | \widetilde{F}(x,\eps)-\widehat{K}(x,\eps) | \  \lesssim \  \eps\, |\log\eps |^{-3} \, .
\end{equation}
This replaces the estimate of Lemma \ref{lem-F-exp}  from the non-degenerate case.
Obviously, equation \eqref{fgh-der} continues to hold with $f,\g, h$ replaced by the functions defined in \eqref{fgh-hat}. Moreover, an
elementary calculation shows that
\begin{equation*}
 \sup_{x\in \widetilde{J}_\eps} \eps^4\, | \partial_x^{j}\,  k(x;\eps) |     = \mathcal{O}(\eps^{1-j}\, |\log \eps|^{-2})\, , \qquad j=0,1,2.
\end{equation*}
We can thus mimic line by the line the rest of the proof of Theorem \ref{thm-2}, cf.~Lemmas \ref{lem-K-real-2}-\ref{lem-p-eps-int}, with
$x_\eps$ being the unique zero point of $\widehat K_1(x; \eps)$ in $ \widetilde{J}_\eps$ and with $\gamma_\eps= \eps^2\, \widehat\g(x_3)$.
This gives the claim.
\end{proof}

In order to find the leading term of $\gamma_\eps$ when the coefficient $\w_6$ vanishes we denote.
\begin{equation*}
\w_5=\LL V \psi_0^{(V)}, \varphi_1 \RR
\end{equation*}

We then have

\begin{thm} \label{thm-deg-3}
Let $2 <\alpha\in\Z$ and let $B$ satisfy Assumption \ref{ass-B}. Suppose moreover that $V \lesssim \langle \, \cdot\, \rangle ^{-\rho}$
for some $\rho>6$. Let $\psi_0^{(V)}$ be an eigenfunction of $\po V \po $ which satisfies Assumption \ref{ass-simple}. Assume that $\w_6 =0$,
and that
\begin{equation} \label{w-35-con}
|\w_3|^2 +|\w_5|^2  >0.
\end{equation}
Then
\begin{equation*}
\lim_{\eps\searrow 0}\,  \sup_{t>0} \big | \LL \psi_0^{(V)}, \, e^{-it P_\m(A,\eps )}\, \psi_0^{(V)} \RR - e^{-it ( x_\eps -i \gamma_\eps)}
\big|  =0,\\[4pt]
\end{equation*}
where $x_\eps = \kappa\, \eps \big(1+ \mathcal{O}(|\log \eps|^{-1}\big),$ and where
\begin{align*}
\gamma_\eps &= \frac{\eps^2}{4}   \Big(\,  |\w_5|^2 +  \frac{\pi^2 |\w_3|^2}{\|e^h\|_2^2} \, \Big)   .
\end{align*}
Recall that the
coefficients $\w_3$ is defined in \eqref{eq:w34}.
\end{thm}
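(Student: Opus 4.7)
The plan is to combine the double Schur reduction used in the proof of Theorem \ref{thm-deg-2} with the higher-order resolvent expansion exploited in the proof of Theorem \ref{thm-3}. I would start again from the representation $\widetilde F(z,\eps)^{-1}$ in \eqref{eq:SLFG-degenerate} and insert the expansion \eqref{res-exp-2} into $\widetilde S(z)$. The leading singular operator $\Pi_{22}/[\pi z(\log z+m_\0-i\pi)]$ contributes
$$
\frac{\eps^2 |\w_6|^2}{\pi x(\log x+m_\0-i\pi)}
$$
to $\eps^2\LL \psi_0^{(V)},V\widetilde S(x)V\psi_0^{(V)}\RR$, and this vanishes by the assumption $\w_6=0$. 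Consequently the dominant contribution of order $\eps^2$ in $\widetilde F(x,\eps)$ comes from the next terms in \eqref{res-exp-2}, namely $-(\log z-i\pi)\K+{\rm T}_0$.

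I would then compute the two real constants
$$
\widetilde a_1=\LL \psi_0^{(V)},V\K V\psi_0^{(V)}\RR,\qquad \widetilde a_2=\LL \psi_0^{(V)},V\,{\rm T}_0\,V\psi_0^{(V)}\RR
$$
using the explicit formula \eqref{K-op}. Since $\w_6=0$, the contributions of $\Pi_{12},\Pi_{21},\Pi_{22}$ all vanish, and only $\Pi_{11}$ together with the rank-one projection onto $\psi$ survive, giving
$$
\widetilde a_1=\frac{1}{4\pi}|\w_5|^2+\frac{\pi}{4\,\|e^h\|_2^2}|\w_3|^2.
$$
Setting $\gamma_\eps=\pi\widetilde a_1\,\eps^2$ then reproduces the formula stated in the theorem, and hypothesis \eqref{w-35-con} ensures $\gamma_\eps>0$.

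Next I would show that the two remaining pieces in $\widetilde F(x,\eps)$, namely the $\eps^3$ term coming from $[U+\eps G(x)]^{-1}$ in $\widetilde S(x)$ and the $\eps^4$ term involving $(\eps Q_1VQ_1-xQ_1-\eps^2 Q_1\widetilde S(x)Q_1)^{-1}$, are both $o(\eps^2)$ uniformly in $x\in\widetilde J_\eps$. The point is that when $\w_6=0$, the vector $\Phi(x)=Q_1\widetilde S(x)\psi_0^{(V)}$ no longer carries the singular $1/(x\log x)$ contribution; it is only of size $O(|\log x|)$ in $L^2$, so using \eqref{Q1-eps} one gets $\eps^4 k(x,\eps)=O(\eps^3(\log\eps)^2)=o(\eps^2)$. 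Combined with an analogous bound on the cubic term, this yields
$$
\widetilde F(x,\eps)=\kappa\eps-x+\eps^2\widetilde a_1(\log x-i\pi)-\eps^2\widetilde a_2+o(\eps^2)
$$
uniformly on $\widetilde J_\eps$, which is exactly the structure obtained in the non-degenerate case with $\w_2=0$.

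From here the proof runs in parallel with that of Theorem \ref{thm-3}. One sets $K(x,\eps)=\kappa\eps-x+\eps^2\widetilde a_1(\log x-i\pi)-\eps^2\widetilde a_2$, locates a unique root $x_\eps\in\widetilde J_\eps$ of $\mathrm{Re}\,K$ by a straightforward modification of Lemma \ref{lem-K-real-2}, giving $x_\eps=\kappa\eps(1+O(|\log\eps|^{-1}))$, and introduces the Lorentzian $L(x,\eps)=x_\eps-x-i\gamma_\eps$. The analogues of Lemmas \ref{lem-FtoK-int2} and \ref{lem-KtoL-3} then apply verbatim, followed by the Hunziker argument as in the proofs of Theorems \ref{thm-1}--\ref{thm-3}. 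The main obstacle is the uniform $o(\eps^2)$ control of the $\eps^4$ correction: one has to verify that the logarithmic singularities which survive in $\Phi(x)$ after the cancellation induced by $\w_6=0$ are mild enough to beat the $1/\eps$ factor coming from the resolvent of $\eps Q_1VQ_1-xQ_1$; once this bookkeeping is done the rest reduces to the same machinery used earlier in the paper.
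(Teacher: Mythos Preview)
Your proposal is correct and follows essentially the same route as the paper: both arguments use $\w_6=0$ to kill the $\Pi_{22}$ contribution, compute $\widetilde a_1$ from the explicit formula \eqref{K-op} for $\K$, arrive at $\widetilde F(x,\eps)=\kappa\eps-x+\eps^2\widetilde a_1(\log x-i\pi)-\eps^2\widetilde a_2+o(\eps^2)$, and then rerun the machinery of Theorem~\ref{thm-3}. Your write-up is in fact more explicit than the paper's on the one point that deserves comment, namely why the $\eps^4$ Schur correction is $o(\eps^2)$; the paper simply asserts the resulting expansion of $\widetilde F$, whereas you correctly identify that $\w_6=0$ removes the $1/(x\log x)$ singularity from $\Phi(x)=Q_1\widetilde S(x)\psi_0^{(V)}$, leaving only an $O(|\log x|)$ growth which, together with \eqref{Q1-eps}, yields $\eps^4 k(x,\eps)=O(\eps^3(\log\eps)^2)$.
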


\begin{proof}
We follow again the line of arguments used in the non-degenerate case.  Let
\begin{equation*}
\widetilde a_1 = \LL \psi_0^{(V)}, V \, \K \, V\, \psi_0^{(V)}\RR\, .
\qquad \widetilde a_2=  \LL \psi_0^{(V)}, V \, {\rm T}_0\, V\, \psi_0^{(V)}\RR\,
\end{equation*}
with $\K$ and $T_0$ given in Proposition \ref{prop:degenerate-int-exp}.
The self-adjointness of  $\K$ and ${\rm T}_0$ thus implies that $\widetilde a_1, \widetilde a_2 \in\R$.
Since $\w_6=0$, equations \eqref{eq:SLFG-degenerate} and \eqref{S-tilde} in combination with Proposition \ref{prop:degenerate-int-exp}
give
$$
\widetilde F(x,\eps) = \kappa \eps -x + \eps^2\, \widetilde a_1\, (\log x -i\pi) - \eps^2\, \widetilde a_2 + o(\eps^2),
$$
and
$$
\widetilde a_1 = \frac 14   \Big(\,  |\w_5|^2 +  \frac{\pi^2 |\w_3|^2}{\|e^h\|_2^2} \, \Big)\, .
$$
Hence if we define
\begin{equation}
\widehat K(x, \eps)  = \kappa \eps -x + \eps^2\, \widetilde a_1\, (\log x -i\pi) - \eps^2\, \widetilde a_2,
\end{equation}
then
$$
\sup_{x\in\widetilde J_\eps} \, | \widetilde F(x,\eps) - \widehat K(x, \eps)| = o(\eps^2).
$$
Setting $x_\eps$ to be the unique zero point of ${\rm Re} \big(\widehat K(x, \eps)\big) =  \kappa \eps -x + \eps^2\, \widetilde a_1\, \log x -
\eps^2\, \widetilde a_2$ in the interval
$\widetilde J_\eps$ and
$$
 \gamma_\eps = \widetilde a_1 \pi \eps^2
$$
we can thus follow the proof of Theorem \ref{thm-3} and conclude the proof of the theorem.
\end{proof}

\subsection{Example: $2<\alpha \leq 3$}
\label{ssec-example} Below we present an example in which a doubly degenerate eigenvalue of $P(A)$ produces two resonances which obey
different asymptotical
laws as $\eps\searrow 0$.

Let $B(x)$ satisfy Assumption \ref{ass-B} and assume that $B(x) = b(|x|)$ for some function $b:
\R_+\to \R$. Assume that
\begin{equation} \nonumber
2 \ < \ \frac{1}{2\pi} \int_{\R^2} B(x)\, dx\, \leq \ 3.
\end{equation}
Then $0$ is a doubly degenerate eigenvalue
of $P_\m(A)$ with normalized eigenfunctions
\begin{equation} \label{EF-double}
\psi_1(x) = c_1\, e^{h(x)}, \qquad \psi_2(x)= c_2\, (x_1+ix_2)\, e^{h(x)}\, .
\end{equation}
Since $h$ is radial in this case, we have $\LL \psi_1, \psi_2\RR=0$.
Moreover, by \cite[Cor.~5.10]{ko},
\begin{equation} \label{psi-not-L1}
\psi(x) =  \frac{(x_1+ix_2)\, e^{h(x)}}{\|(x_1+ix_2)\, e^{h}\|_2^2}\, .
\end{equation}
Now let
\begin{equation}
V(x) = (1+x_1^2-x_2^2)\, u(|x|)\, ,
\end{equation}
where $u: \R_+\to \R$ satisfies $u\geq 0$ and $u \lesssim \LL \cdot \RR^{-\rho}$ with some $\rho>8$. We then have
\begin{equation}
\po V\, \po = \begin{pmatrix}
\kappa_1\, \psi_1 \LL \psi_1\, ,\, \cdot\,  \RR& 0  \\
0 & \kappa_2\, \psi_2 \LL \psi_2\, ,\, \cdot\,  \RR
\end{pmatrix} \ ,
\end{equation}
where
$$
\kappa_1 = \LL \psi_1, V \psi_1\RR >0, \qquad \kappa_2 = \LL \psi_2, V \psi_2\RR >0 .
$$
The last two inequalities follow from the positivity of $u$ and from the fact that
$$
\int_{\R^2} (x_1^2-x_2^2)\, u(|x|)\, |\psi_j(x)|^2\, dx = 0 \qquad j=1,2.
$$
Thus $\psi_1$ and $\psi_2$ are eigenfunctions of $\po V\, \po$ with eigenvalues $\kappa_1>$ and $\kappa_2>0$. Moreover,
by choosing $u$ in a suitable way, for example putting $u=0$ in a ball centered in the origin with a radius large enough, we can make sure
that
$$
\kappa_1 = \int_{\R^2} u(|x|)\, |\psi_1(x)|^2\, dx \, <  \, \int_{\R^2} u(|x|)\, |\psi_2(x)|^2\, dx = \kappa_2,
$$
cf.~\eqref{EF-double},  so that
Assumption \ref{ass-simple} is satisfied for both $\psi_1$ and $\psi_2$. Now we treat the following cases:

\smallskip

\underline{$2<\alpha <3$}.  By \cite[Cor.~5.10]{ko},
$$
\varphi(x) = (x_1+ix_2)^2\, e^{h(x)}\, .
$$
A short calculation shows that
\begin{align}
\w_3^{(1)}  & = \LL V \psi_1\, ,\,  \psi \RR =0, \qquad \w_3^{(2)}  = \LL V \psi_2\, ,\,  \psi \RR  \neq 0  \label{w3-example} \\
\w_4^{(2)} & =\LL V \psi_2\, , \, \varphi \RR=0, \qquad \w_4^{(1)} =\LL V \psi_1\, , \, \varphi \RR \neq 0. \label{w4-example}
\end{align}
Condition \eqref{w-34-con} is thus satisfied and
Theorem \ref{thm-deg-1} can be applied with $\psi_0^{(V)}=\psi_1$ as well as with $\psi_0^{(V)}=\psi_2$. Hence $P_\m(A,\eps)$ admits two
resonances which satisfy
\begin{equation} \label{doubly-not-int}
\begin{aligned}
x_\eps^{(1)} & = \eps \kappa_1 +\mathcal{O}(\eps^{2-\alp}) \, \qquad \gamma_\eps^{(1)} = \widetilde\sigma \, \big|\w_4^{(1)}\big |^2\,
\kappa_1^{-\alp}\, \eps^{2-\alp} + \mathcal{O}(\eps^{1+2\alp})\\[4pt]
x_\eps^{(2)} & = \eps \kappa_2 +\mathcal{O}(\eps^{1+\alp}) \, \qquad \gamma_\eps^{(2)} = \widetilde\eta \, \big|\w_3^{(2)} \big|^2\,
\kappa_2^{\alp-1}\, \eps^{1+\alp} + \mathcal{O}(\eps^{3-2\alp})\
\end{aligned}
\end{equation}

\medskip

\underline{$\alpha =3$}. Obviously, the coefficients $\w_3^{(j)}, j=1,2,$  still satisfy \eqref{w3-example}.
The zero virtual bound states are given by
$$
\varphi_1(x) = (x_1+ix_2)^3\, e^{h(x)}\, , \qquad \varphi_2(x) = (x_1+ix_2)^2\, e^{h(x)},
$$
see \cite[Cor.~6.7]{ko}. A quick calculation thus gives
$$
\w_6^{(1)} = \LL V \psi_1\, , \, \varphi_2 \RR \neq 0\, .
$$
Hence Theorem \ref{thm-deg-2} applies and says that
\begin{equation*}
\sup_{t>0} \big | \LL \psi_1 \, e^{-it P_\m(A,\eps )}\, \psi_1 \RR - e^{-it ( x_\eps^{(1)} -i \gamma_\eps^{(1)})} \big|  \, \lesssim\,
|\log \eps|^{-1}
\end{equation*}
with $x_\eps^{(1)} = \kappa_1\, \eps \big(1+ \mathcal{O}(|\log \eps|^{-1} )\big)$ and with
\begin{equation} \label{gam-1}
\gamma_\eps^{(1)}= \frac{\eps\, \big |\w_6^{(1)}\big|^2}{\kappa_1 (\log \eps)^2}   \,    \big[1+    \mathcal{O}\big(|\log \eps|^{-1}
\big)\big] \, .
\end{equation}
On the other hand, for the eigenfunction $\psi_2$ we have
$$
\w_6^{(2)} = \LL V \psi_2\, , \, \varphi_2 \RR  = \w_5^{(2)} = \LL V \psi_2\, , \, \varphi_1 \RR =0.
$$
This means that Theorem \ref{thm-deg-2} is not applicable to $\psi_2$. However, in view of equation
\eqref{w3-example} we can apply Theorem \ref{thm-deg-3} which implies that
$$
\lim_{\eps\searrow 0} \sup_{t>0} \big | \LL \psi_2, \, e^{-it P_\m(A,\eps )}\, \psi_2 \RR - e^{-it ( x_\eps^{(2)} -i \gamma_\eps^{(2)})} \big|
= 0,
$$
with $x_\eps^{(2)} = \kappa_2\, \eps \big(1+ \mathcal{O}(|\log \eps|^{-1} )\big)$ and
\begin{align} \label{gam-2}
\gamma_\eps^{(2)} &= \eps^2\, \frac{\pi^2 \big|\w_3^{(2)}\big|^2}{\|e^h\|_2^2} \, .
\end{align}

\smallskip

\begin{rem} \label{rem-degenerate}
In general, given an arbitrary $\alpha>1$, we note that if all the eigenvalues of the operator $\po V \po$ are positive and simple, and if
the associated eigenfunctions satisfy condition \eqref{w-34-con} resp.~\eqref{w-35-con}, then the number of resonances of $P_\eps(A)$
is at least as large as the multiplicity of $0$ as an eigenvalue of $P(A)$.
\end{rem}

\appendix

\section{\bf }
\label{sec-app}

\begin{proof}[Proof of Propositions \ref{prop:degenerate-ni-exp} and \ref{prop:degenerate-int-exp}]
Although not explicitly stated there, the existence of the operator $S_0$ in \eqref{eq:ResolventExpansionDeg} follows from the proof of
\cite[Thm.~5.6]{ko}.
The self-adjointness of $S_0$ follows by considering the expansion \eqref{eq:ResolventExpansionDeg} for $z=-x$ with $x>0$. Then $(P_-(A)
-z)^{-1}= (P_-(A)
+x)^{-1}$ is self-adjoint. On the other hand, $\omega(1+\alp) (-x)^{\alp-1}$ and $\zeta(\alp) (-x)^{-\alp}$ are real numbers, see equation
\eqref{zeta}. Hence the first three operators on the right hand side of \eqref{eq:ResolventExpansionDeg}  are self-adjoint. Therefore $S_0$ is
self-adjoint
too. In the same way  one proves the existence and self-adjointness of ${\rm T_0}$ in
Proposition \ref{prop:degenerate-int-exp}.
\end{proof}


\section{\bf }
\label{sec-app-2}

The following technical result is not new, see \cite[Eq.~(3.56)]{jn}. For the sake of completeness, we give a short proof.

\begin{lem} \label{lem-jn}
Let $a>1$. Then
\begin{equation*}
\sup_{s>0} \, \sup_{R >a}\,   \Big | \int_{-R}^R \frac{e^{-i s \frac yR}}{y+i}\, dy\,  \Big | \, < \, \infty.
\end{equation*}
\end{lem}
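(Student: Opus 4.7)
The plan is to evaluate the integral by contour deformation and bound the resulting semicircle contribution uniformly.

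First, observe that the integrand $f(y) = \frac{e^{-isy/R}}{y+i}$ extends to a meromorphic function on $\mathbb{C}$ with a single simple pole at $y=-i$ of residue $e^{-s/R}$. Since $s>0$, the exponential $e^{-isy/R}$ is bounded in the lower half-plane: writing $y = x + iy'$ one has $|e^{-isy/R}| = e^{sy'/R} \le 1$ whenever $y' \le 0$. This suggests closing the contour through the lower semicircle $\Gamma_R = \{Re^{i\theta} : \theta \in [-\pi,0]\}$ traversed from $R$ to $-R$. Since $R > a > 1$, the point $-i$ lies inside the enclosed region.

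The resulting closed contour is clockwise, so the residue theorem gives
\begin{equation*}
\int_{-R}^R \frac{e^{-isy/R}}{y+i}\, dy + \int_{\Gamma_R} \frac{e^{-isy/R}}{y+i}\, dy = -2\pi i\, e^{-s/R},
\end{equation*}
and hence the integral in question equals $-2\pi i e^{-s/R}$ minus the semicircle contribution. The first term is trivially bounded by $2\pi$ uniformly in $s>0$ and $R>a$.

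It remains to bound the semicircle integral uniformly. On $\Gamma_R$ one has $|y+i| \ge R-1$ and, as noted above, $|e^{-isy/R}| = e^{s\sin\theta} \le 1$ since $\sin\theta \le 0$ for $\theta \in [-\pi,0]$. Therefore
\begin{equation*}
\Big|\int_{\Gamma_R} \frac{e^{-isy/R}}{y+i}\, dy\Big| \le \int_{-\pi}^{0} \frac{R}{R-1}\, d\theta = \frac{\pi R}{R-1}.
\end{equation*}
Since the function $R \mapsto \frac{R}{R-1} = 1 + \frac{1}{R-1}$ is decreasing for $R>1$, the bound is at most $\frac{\pi a}{a-1}$ for every $R>a$. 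Combining the two contributions yields
\begin{equation*}
\sup_{s>0}\sup_{R>a} \Big|\int_{-R}^R \frac{e^{-is y/R}}{y+i}\, dy\Big| \le 2\pi + \frac{\pi a}{a-1} < \infty,
\end{equation*}
which proves the lemma. There is no substantive obstacle here; the only point to watch is that the bound $|e^{-isy/R}| \le 1$ on $\Gamma_R$ is just the standard Jordan-type estimate without even needing the sharper $\sin\theta \le -2\theta/\pi$ bound, because no decay in $s$ is claimed.
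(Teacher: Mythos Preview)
Your proof is correct and essentially identical to the paper's own argument: close the contour with the lower semicircle of radius $R$, pick up the residue $-2\pi i\, e^{-s/R}$ at $y=-i$, and bound the semicircle contribution by $\frac{\pi R}{R-1}\le \frac{\pi a}{a-1}$. There is nothing to add.
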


\begin{proof}
By the residue theorem,
$$
\int_{-R}^R \frac{e^{-i s \frac yR}}{y+i}\, dy\ = -2\pi i \, e^{-\frac s R}  - \int_\Gamma  \frac{e^{-i s \frac zR}}{z+i}dz
$$
where $\Gamma$ is the semi-circle of radius $R$ in the lower complex half-plane centered in the origin and directed from $(R,0)$ to $(-R,0)$.
The first term on the hand side is obviously bounded
uniformly in $s>0$ and $R>a$. As for the second term, parametrizing $\Gamma$ in the usual way we find
\begin{align*}
 \Big |   \int_\Gamma  \frac{e^{-i s \frac zR}}{z+i}dz\,  \Big | & =  R\  \Big | \int_0^\pi  \frac{e^{-i s \cos\theta -s
 \sin\theta}e^{i\theta}}{R \cos \theta -i R\sin\theta  +i}\, d\theta\, \Big | \leq \frac{\pi R}{R-1}\leq \frac{\pi a}{a-1}\, ,
\end{align*}
since $|R \cos \theta -i R\sin\theta  +i | \geq R-1$.
\end{proof}

\bigskip

\end{document}